\numberwithin{equation}{section}
\DeclareMathOperator*{\argmin}{arg\,min}
\def\en{\ensuremath}
\newcommand{\bs}[0]{\boldsymbol{\beta}}
\newcommand{\bh}[0]{\hat{\boldsymbol{\beta}}}
\newcommand{\bss}[0]{\boldsymbol{\beta}_{S}}
\newcommand{\bhs}[0]{\hat{\boldsymbol{\beta}}_{S}}
\newcommand{\bsms}[0]{\boldsymbol{\beta}_{-S}}
\newcommand{\bhms}[0]{\hat{\boldsymbol{\beta}}_{-S}}
\newcommand{\ahh}[0]{\hat{A}^{1/2}}
\newcommand{\ah}[0]{{A}^{1/2}}
\newcommand{\rh}[0]{\hat{R}}
\newtheorem{theorem}{Theorem}[section]
\newtheorem{corollary}{Corollary}[theorem]
\newtheorem{lemma}[theorem]{Lemma}
\newtheorem{remark}{Remark}
 \newtheorem*{assumption*}{\assumptionnumber}
 \newcommand{\rbar}[0]{\Bar{R}}
 \def\nn{\en{\nonumber}}
\providecommand{\assumptionnumber}{}
\newenvironment{assumption}[2]
 {%
  \renewcommand{\assumptionnumber}{Assumption #1 {(\emph{\normalfont{#2}})}}%
  \begin{assumption*}%
  \protected@edef\@currentlabel{#1 {(\emph{#2})}}%
 }
 {%
  \end{assumption*}
 }
 \newtheorem*{condition*}{\conditionnumber}
\providecommand{\conditionnumber}{}
\newenvironment{condition}[1]
 {%
  \renewcommand{\conditionnumber}{Condition #1}%
  \begin{condition*}%
  \protected@edef\@currentlabel{#1}%
 }
 {%
  \end{condition*}
 }
\begin{document}
 \title{High-dimensional Longitudinal Inference via a De-sparsified Dantzig-Selector}
 \maketitle

\pagenumbering{arabic}

\begin{center}
Nathan W. Huey\textsuperscript{1,2}
\end{center}

\vspace{1mm}

\noindent 1 \textit{Department of Biostatistics, Harvard T.H. Chan School of Public Health, Boston, MA, USA}\\
\noindent 2 \textit{Current Address: Department of Clinical Research, Massachusetts Eye and Ear Hospital, Boston, MA, USA}

\vspace{5mm}

\begin{center}
{\large \bf Abstract} 
\end{center}

We address the problem of valid statistical inference in high-dimensional generalized linear models with longitudinally clustered data. We propose a de-sparsified estimator, constructed from an initial Dantzig-type regularized fit, that enables asymptotically normal inference for individual regression coefficients without requiring model selection consistency. Theoretical results establish validity for both continuous and binary outcomes, and for continuous data under linear models, we prove that the estimator achieves the semiparametric efficiency bound when the working correlation structure is correctly specified. Extensive simulation studies demonstrate accurate coverage and competitive power relative to existing methods. We further illustrate the approach in a genetic association study of bacterial riboflavin production, highlighting its applicability to complex, high-dimensional longitudinal data.

\newpage

\section{Introduction}
In the last two decades, conducting statistical inference on low-dimensional quantities in otherwise high-dimensional settings has received considerable attention. Indeed, this is in no small part influenced by the ever-increasing availability of these kinds of data from fields ranging from economics to signal processing to genetics (\cite{econhighdim}, \cite{sigprochighdim}, \cite{genehighdim}) -- where the abundance of data as well growing computational power has allowed us to explore nuances of high dimensionality that were not possible to address a few decades earlier. While much progress has been made in this arena of high-dimensional statistics, especially when sparse structures are assumed on underlying models, one important area still to be developed full generality is that of inference with high-dimensional longitudinal data, i.e. data with $p \gg n$ ($p$ being the number of variables in the study and $n$ the sample size) and non-independent correlation structures. In the present work we address this area of research and propose a ``de-sparsifying'' method using a preliminary Dantzig selector  type estimator \citep{candes2007dantzig,bickel2009simultaneous} utilizing generalized estimating equation representation of generalized linear models. We present theory first for the linear model which includes demonstrating $\sqrt{n}$-convergence of the proposed estimator to a limiting normal distribution. We demonstrate that among a reasonable class of asymptotically linear estimators, ours attains the lowest possible variance -- and thereby justifying a notion of efficience \citep{jankova2018semiparametric}. Numerical simulations are provided for both continuous and dichotomous outcomes. We conclude with an application of our method to a relevant dataset. First, we place this work in a context of related, though distinct, literature.

\subsection{Related Work}
The method we propose here fits within the general structure developed in \cite{neykov2018unified}. They note three advantages of their method over existing inferential methods in high dimensions. Firstly, many existing methods depend on the specification of a loss function, such as a negative log-likelihood, which has been used to construct an initial estimator of the quantity of interest that one seeks to operate on to produce a de-biased asymptotically normal estimator. While many estimating equations may correspond to the gradient of a loss function (e.g. score functions for parametric models), this is not a strict requirement. Second, \cite{neykov2018unified}'s method is not constrained only to linear and generalized linear models. %As we will work only with moment restriction based models, this point does not apply to the present work. 
Finally, and most importantly for us, we will show that the framework of \cite{neykov2018unified} can be successfully  adapted to our clustered data setting.
 
\vspace{0.5cm}

Before we proceed further, we present a broad overview of the inferential developments for high dimensional sparse generalized linear models. In the context of high-dimensional sparse generalized linear models
\cite{buhlmann2013statistical}, inspired by \cite{zhang2014confidence} developed a de-biasing procedure by using Lasso regression-based initial estimation. In the context of linear regression, \cite{javanmard2018debiasing} showed that asymptotic normality of the debiased LASSO follows with $s =
o_p(n/(\log{p})^2)$ if the rows of the precision matrix of the covariates have sparsity $s_{\Omega}= o_p(
\sqrt{n}/ \log p).$
In contrast, $s_{\Omega}$ is too large, then asymptotic normality follows only if $s =
o_p(\sqrt{n}/(\log{p}))$.
%Both studies remark that they do not rely on the non-zero coefficients being at least of a certain magnitude, the so-called ``uniform signal-strength condition''. \cite{zhang2014confidence} use multiple testing corrections to be able to simultaneously use multiple LDP intervals. 
%\vspace{0.5cm}
Subsequently, \cite{ning2017general} introduced a decorrelated score approach which operates under a suitable loss-based framework. %The difficulty in applying this method to the generalized estimating equation approach would be identifying the ``gradient'' corresponding just to the nuisance parameters of the model. 
It is unclear, how to extend this without a likelihood framework that arises in high dimensional generalized estimating equations. Moreover, \cite{chernozhukov2015valid} discusses orthogonality conditions needed for valid post-regularization inference in high-dimensional models which deals with general linear functionals of nonparametric regressions and their derivatives. Finally, \cite{xia2022statistical} provide inference theory for a de-biased estimator also based on generalized estimating equations. However, Their estimator differs from ours in several points including the choice of initial sparse estimator, where they choose to use a quasi-likelihood-based estimator. They also employ a different tuning procedure compared to our method. They note that solving for a zero as we do may lead to numerical instability in cases other than linear regression. Finally, they do not prove any efficiency results for their estimator. As of the writing of this paper, they do not provide a publicly available implementation of their method.
%\vspace{0.5cm}
Last but not least, \cite{jankova2018semiparametric} develops lower bounds for high-dimensional semiparametric linear and Gaussian graphical models. We crucially adapt this framework to establish the optimality of our methods in the longitudinal data setting.  %Introducing the concept of strong asymptotic unbiasedness of estimators, they subsequently use two approaches for efficiency bounds of such estimates: Cramer-Rao bounds and extensions of Le Cam's arguments for asymptotically linear estimators. In their treatment of the former, they assume Gaussianity on both the error terms and rows of the design matrix. They do not consider dependent data, remaining completely in the context of de-biased lasso and nodewise regression for the linear model and Gaussian graphical model, respectively. Careful adaptations of these arguments would need to include considering $\Omega$ not as the precision matrix of a generative normal distribution of the covariates, but as the Hessian matrix of the estimating equations we consider in this work (except in specific cases). Of particular importance for the high-dimensional setting, they note that the ``worst-possible direction'' for the parameter must lie in the submodel for their bound to hold, corresponding to the sparsity of the precision matrix. While a lower bound can still be given via a sparse approximation of the precision matrix if this condition does not hold, asymptotic efficiency of the de-biased lasso no longer follows. In their treatment of Le Cam's bounds, they assume the existence of a likelihood. While this does not exist for our models, it may be possible to adapt these arguments to our semiparametric setting.
\subsection{Notation}
$\norm{\cdot}_2$ refers to the $L_2$ norm when applied to vectors and the operator/spectral norm of a matrix. For two matrices $A$ and $B$, $A \otimes B$ denotes the Kronecker product. $\norm{A}_{\max}$ refers to the maximum magnitude of the elements of a matrix $A$. $\norm{x}_1$ refers to the $L_1$-norm of a vector, $x$. $\norm{x}_{\infty}$ refers to the  maximum magnitude of the elements of a vector $x$. $\lambda_{\min}(A)$ and $\lambda_{\max}(A)$ refer to the minimum and maximum eigenvalues of a matrix $A$, respectively. For a set of indices $Q$, $x_Q$ refers to the vector with the elements of the indices not in $Q$ replaced with zeros. $x_{-Q}$ refers to the vector with the elements of the indices in $Q$ replaced with zeros. $x_j$ refers to the $j$th element of a vector $x$. $[A]_{i\cdot}$ refers to the $i$th row of the matrix $A$.
\subsection{Contribution}
In the work that follows, we develop an estimator for the $j$th element of $\boldsymbol{\beta}$, $\boldsymbol{\beta}_j$, in the high-dimensional, longitudinal setting that, under certain mild conditions, is asymptotically normal with an $\sqrt{n}$-convergence rate. We then show that, when the cluster-specific correlation structure is correctly specified, our estimator asymptotically attains an efficiency bound within a class of asymptotically unbiased linear estimators, the definition and associated discussions of which are collected in Section \ref{sec:eff}. To our knowledge, this is the first time that a declassified estimator for clustered data settings has been shown to attain such a bound. For example, our estimator is similar to that of \cite{xia2022statistical}, but they do not include any efficiency results. To construct our estimator, we first identify a set of estimating equations such that the expectation at $\boldsymbol{\beta}$ is zero. Since we are in a high-dimensional setting, directly solving for $\boldsymbol{\beta}$ as in classical M-estimation is an ill-posed problem. Therefore, we proceed by using a Dantzig selector-type initial estimate of $\boldsymbol{\beta}$ which enforces a constraint on the $\ell_1$-norm of the parameter estimate, $\hat{\boldsymbol{\beta}}$, inducing sparsity. While the components of this initial estimate are not asymptotically normal due to the effect of shrinking coefficients towards zero as a result of regularization, we ``de-sparsify'' by considering a particular weighted sum of the $p$-estimating equations, chosen such that the resulting estimate asymptotically attains the semi-parametric efficiency bound. All but a single component of the initial sparse estimate of $\boldsymbol{\beta}$ are substituted into this weighted sum, and our estimate of the remaining component is defined as the root of the resulting univariate function as detailed in later sections.

 \vspace{0.5cm}

 \noindent In summary,
the main contributions of this work are as follows:
\begin{enumerate}
    \item We propose a \emph{de-sparsified generalized estimating equations} (de-sparsified GEE) estimator for high-dimensional longitudinal and clustered data. The method starts from an initial Dantzig-type regularized estimator and removes regularization bias through a projection step, enabling valid inference on individual coefficients.
    \item We establish theoretical guarantees for both continuous and binary outcomes. In the continuous-outcome case under linear models, we prove that the proposed estimator attains the semiparametric efficiency bound when the working correlation structure is correctly specified.
    \item We provide a consistent variance estimator for the de-sparsified GEE, allowing for asymptotically valid confidence intervals and hypothesis tests without requiring model selection consistency.
    \item We conduct extensive simulation studies for continuous and binary data, demonstrating accurate coverage, competitive power, and robustness to misspecification of the working correlation structure.
    \item We illustrate the method with an application to a genetic association study of bacterial riboflavin production, showing the method's applicability to complex, high-dimensional longitudinal data.
\end{enumerate}

% ===== drop this where you want the table to appear =====
\section*{Notation}
\begingroup
\setlength{\tabcolsep}{6pt}
\renewcommand{\arraystretch}{1.15}
\begin{tabularx}{\textwidth}{lcc}
\toprule
\textbf{Symbol} & \textbf{Meaning} & \textbf{Dimension / Notes} \\
\midrule
$n$ & Number of clusters & Integer \\
$K$ & Observations per cluster (assumed fixed for theory) & Integer \\
$p$ & Number of covariates & Integer; high-dimensional, $p \gg n$ allowed \\
$Y_{ij}$ & Outcome for cluster $i$, time $j$ & Scalar \\
$\mathbf{X}_{ij}$ & Covariate vector for $(i,j)$ & $\mathbb{R}^{p}$ \\
$\mathbf{Y}_i=(Y_{i1},\ldots,Y_{iK})$ & Outcomes in cluster $i$ & $\mathbb{R}^{K}$ \\
$\mathbf{X}_i = (\mathbf{X}_{i1}\!:\!\cdots\!:\!\mathbf{X}_{iK})^{\top}$ & Design in cluster $i$ & $K\times p$ \\
$\boldsymbol{\beta}=(\beta_1,\ldots,\beta_p)^{\top}$ & Regression coefficients & $\mathbb{R}^{p}$ \\
$s$ & Sparsity of $\boldsymbol{\beta}$ (nonzero count) & Integer \\
$\boldsymbol{\varepsilon}_i=(\varepsilon_{i1},\ldots,\varepsilon_{iK})$ & Cluster errors & $\mathbb{R}^{K}$, $\boldsymbol{\varepsilon}_i \perp \mathbf{X}_i$ \\
$V=\mathrm{Var}(\boldsymbol{\varepsilon}_i)$ & True within-cluster covariance & $K\times K$ \\
$\hat V_n$ & Working covariance estimate & $K\times K$ \\
$\tilde V$ & Probability limit of $\hat V_n$ & $K\times K$ \\
$A=\mathrm{diag}(\hat\sigma_1^2,\ldots,\hat\sigma_K^2)$ & Working variance (GLM) & $K\times K$ \\
$\hat R$ & Working correlation matrix & $K\times K$ \\
$\Sigma_X=\mathbb{E}[\mathbf{X}_i^{\top}\mathbf{X}_i]$ & Covariance of covariates & $p\times p$ \\
$\mathbf{X}_i^{\ell}$ & $\ell$th covariate across $K$ times in cluster $i$ & $\mathbb{R}^{K}$ \\
$e_j$ & Canonical basis vector & $\mathbb{R}^{p}$ with 1 at index $j$ \\
$\hat{\boldsymbol{\beta}}$ & Initial Dantzig selector & solves \eqref{eqn:initialdantzig} \\
$\hat v$ & CLIME/Dantzig column estimator & solves \eqref{eqn:CLIME} \\
$v=[\mathbb{E}\{K^{-1}\mathbf{X}_i^{\top}\tilde V^{-1}\mathbf{X}_i\}]^{-1}_{j\cdot}$ & Target projection vector & $\mathbb{R}^{p}$; sparsity $s_v$ \\
$\hat T_n$ & De-sparsified estimator of $\beta_j$ & defined by \eqref{eqn:theestimate} \\
$\hat{\boldsymbol{\varepsilon}}_i$ & Residual vector for cluster $i$ & $\mathbb{R}^{K}$ \\
$\lambda,\lambda'$ & Tuning parameters & $\lambda \asymp \sqrt{\log p/n}$, etc. \\
$\hat\Delta$ & Variance estimator for $\hat T_n$ & see Theorem \ref{theorem:cltforunknownv} \\
$\hat U_n$ & Studentized statistic & $n^{1/2}\hat\Delta^{-1/2}(\hat T_n-\beta_j)$ \\
\bottomrule
\end{tabularx}
\endgroup
\section{Set-Up}

\subsection{Data and Model}
Assume we have $n$ iid clusters of data, $(Y_{ij},\mathbf{X}_{ij})$ for $i=1,\ldots,n$, $j=1,\ldots,K_i\equiv K$ with $Y_{ij}\in \mathbb{R}$ and $\mathbf{X}_{ij}\in \mathbb{R}^p$ generated as $Y_{ij}=\boldsymbol{{\beta}}^T\mathbf{X}_{ij}+\varepsilon_{ij}$ such that $\boldsymbol{\varepsilon}_i:=(\varepsilon_{i1},\ldots,\varepsilon_{iK})$ satisfies $
\boldsymbol{\varepsilon}_i\perp \mathbf{X}_i:=(\mathbf{X}_{i1}:\cdots:\mathbf{X}_{iK})^T$ and we let $V_K:=\mathrm{Var}(\varepsilon_{i1},\ldots,\varepsilon_{iK})$ to be the homoscedastic joint variance-covariance matrix of the idiosyncratic errors for cluster $i$. Going forward we shall collect the observations across clusters in matrices denoted by $\mathbb{Y}:=(\mathbf{Y}_1:\cdots,\mathbf{Y}_n)^T$ with $\mathbf{Y}_i=(Y_{i1},\ldots,Y_{iK})$ and similarly $\mathbb{X}=(\mathbf{X}_1:\cdots,\mathbf{X}_n)^T$.  % Assume we have $n$ iid clusters of data, $(\mathbf{X}_i, \mathbf{Y}_{i})$, that follow a probability law $\mathbb{P}_{\boldsymbol{\beta}}$, each with $K_i$ observations on $p$ covariates and a univariate continuous or binary outcome. For ease of presentation, we will consider the case where $K_i = K$ for all $i = 1,..n$. For continuous data, we assume a linear model for the expectation of the outcome, $\mathbf{Y}_{i}$, i.e. $\mathbf{Y}_{i} = \mathbf{X}_i\boldsymbol{\beta}+ \boldsymbol{\varepsilon}_i$, $\mathbb{E}[\boldsymbol{\varepsilon}_i|\mathbf{X}_i] = 0$. For binary data, we adopt a logistic model, i.e. $P(\mathbf{Y}_{i} = 1) = \exp(\mathbf{X}_i\bs)(1+\exp(\mathbf{X}_i\bs))^{-1}$.
$\Sigma_X$ denotes the true $p \times p$ covariance matrix associated with $\mathbf{X}_{ij}$, constant across all $i$ and $j$. $V$ is the $K \times K$ covariance matrix associated with $\boldsymbol{\varepsilon}_i$. $s$ is the number of non-zero elements of $\boldsymbol{{\beta}}$. $\mathbf{X}_{i}^\ell$ is the $K \times 1$ vector containing the $\ell$th covariate of cluster $i$ across the $K$ observations. In this context, the goal of this paper is to produce semi-parametric efficient inference on the coordinates of $\boldsymbol{\beta}=(\boldsymbol{\beta}_1,\ldots,\boldsymbol{\beta}_p)^T$.
% \begin{remark}[\bf Working Correlation Matrix Structure]
%     \textnormal{Our estimate of $V$, $\hat{V}_n$, has the form $A^{1/2}\hat{R}A^{1/2}$, where $A = diag(\hat{\sigma}^2_1,...,\hat{\sigma}^2_K)$, $\hat{\sigma}^2_i$ is a moment-based estimator of the variance of the $i$th observation, and $\hat{R}$ is the so-called ``working-correlation'' matrix ( discussed further in Section \ref{section:gee}). These estimates are calculated from an independent sample of $n$ clusters. Regularity conditions on this working correlation matrix estimate will be discussed in the Section \ref{section:assumptions}. }
% \end{remark}

\subsection{Method of Inference}

\subsubsection{Standard Inference Procedures and Generalized Estimating Equations}\label{section:gee}

The use of generalized estimating equations (GEE) as introduced in \cite{zeger1988models} has become extremely popular due in no small measure to its wide applicability and robustness to model misspecification. One must only specify the first two moments of the data-generating distribution of the outcome and under mild regularity conditions, the estimators retain consistency, asymptotic normality, and unbiasedness even when the second moments (specified through the working-correlation matrix) are incorrectly specified. The estimators are defined as the roots to the following equations:
\begin{equation}
    U(\boldsymbol{\beta}) = \sum_{i=1}^n\frac{\partial\mu_i}{\partial\boldsymbol{\beta}}A^{-1/2}\hat{R}^{-1}A^{-1/2}(\mathbf{Y}_{i}- \mu_i(\boldsymbol{\beta}))
\end{equation}
where $A = diag(\hat{\sigma}^2_1,...,\hat{\sigma}^2_K)$, $\hat{\sigma}^2_i$ is an estimator of the variance of the $i$th observation (assumed for convenience here to be constant across clusters, although this assumption can easily be relaxed), and $\hat{R}$ is an estimate of the conditional correlation structure within a cluster. This structure is chosen by the analyst and may be incorrectly specified. In the classical setting where $p$ and $K$ are considered fixed and $n$ is allowed to grow, theoretical results follow via well-known results from M-estimation. As previously mentioned, however, this body of theory is no longer applicable when $p \gg n$, necessitating the development of a modern approach to inference for these models.
% \begin{remark}[\bf Generalization to Exponential Family]
%     \textnormal{We choose to focus on continuous and binary data in the current work due the popularity of these data types, but note that as in \cite{xia2022statistical}, our framework can be generalized to subgaussian exponential family distributions as well (see Section \ref{section:extension}).}
% \end{remark}
\subsubsection{GEE in High-Dimensions}

We now provide the formulations of our estimator, starting with the initial Danzig-type forms.

\vspace{0.5cm}

Let \begin{equation}\label{eqn:initialdantzig}
    \hat{\boldsymbol{\beta}} = \argmin_{\boldsymbol{\beta} \in \mathbb{R}^p} \norm{\boldsymbol{\beta}}_1 \text{, s.t. }\frac{1}{nK}\norm{\sum_{i=1}^n\mathbf{X}_i^T(Y_{i} - \mathbf{X}_i\boldsymbol{\beta})}_{\infty} \leq \lambda
\end{equation}

\noindent and \begin{equation}\label{eqn:CLIME}
    \hat{v} = \argmin_{v \in \mathbb{R}^p} \norm{v}_1 \text{, s.t. }\frac{1}{nK}\norm{\sum_{i=1}^n\mathbf{X}_i^T\hat{V}_n^{-1}\mathbf{X}_iv - e_j}_{\infty} \leq \lambda'
\end{equation}

\noindent where $e_j \in \mathbb{R}^p$ is the $j$th canonical vector and $\lambda, \lambda' > 0$ are tuning parameters. We denote the limiting value of $\hat{V}_n$ as $\Tilde{V}$. Let $v= [\mathbb{E}[K^{-1}\mathbf{X}_i^T\Tilde{V}^{-1}\mathbf{X}_i]]^{-1}_{j \cdot}$. $s_v$ is the number of non-zero elements of $v$.

\vspace{0.25cm}

\noindent Our estimator, $\hat{T}_n$, for the $j$th component of $\boldsymbol{{\beta}}$ is defined such that:
\begin{equation}\label{eqn:theestimate}
    \hat{v}^T\left(\frac{1}{nK}\sum_{i=1}^n\mathbf{X}_i^T\hat{V}_n^{-1}(Y_{i} - \mathbf{X}_i\hat{\boldsymbol{\beta}}_{\hat{T}_n}(j))\right) = 0
\end{equation}
\noindent where $\hat{\boldsymbol{\beta}}_{\hat{T}_n}(j)$ is $\hat{\boldsymbol{\beta}}$ with the $j$th component replaced by $\hat{T}_n$.
\begin{remark}[\bf Initial Estimators]
    \textnormal{It is important that the initial estimates $\hat{\boldsymbol{\beta}}$ and $\hat{v}$ attain specific $L_1$ error rates in order for $\hat{T}_n$ to be asymptotically normal. This is the central reason we choose Dantzig-type estimators and the reason that the estimating equation used for $\hat{\boldsymbol{\beta}}$ is slightly different than that for $\hat{T}_n$, i.e. this estimate is more convenient to work with in theory development and computationally simpler while still attaining the required $L_1$ error rate. While valid inference can still be obtained without a working covariance matrix, we cannot generally achieve optimal efficiency without one.}
\end{remark}

\subsection{ Main Theoretical Results}\label{section:assumptions}
We first discuss several assumptions upon which the theoretical basis of our method rests. 
\begin{assumption}{1}{A1}\label{ass:sparsity}
%$\max\{s, s_v\}\norm{v}_1\log(p)/\sqrt{n} = o(1)$
$\max\{s, s_v\}\log(p)/\sqrt{n} = o(1)$ 
\end{assumption}

\begin{assumption}{2}{A2}\label{ass:subgauss}
    The errors, $\varepsilon_{ij}$, and rows of the design matrix, $\mathbf{X}_{ij}$, are subgaussian with subgaussian norms $K_{\boldsymbol{\varepsilon}}$ and $K_x$, respectively.
\end{assumption}
\begin{assumption}{3}{A3}\label{ass:eigen}
The eigenvalues of $\Sigma_X = \mathbb{E}[\mathbf{X}_i^T\mathbf{X}_i]$, $V$, and $\Tilde{V}$ are bounded above by a universal finite constant and below away from zero, i.e. there exists $M > 0$ such that:
\begin{equation}
    \{\lambda_{\max}(\Sigma_X), \lambda_{\max}(V), \lambda_{\max}(\Tilde{V}) \} < M \text{ and } \{\lambda_{\min}(\Sigma_X), \lambda_{\min}(V), \lambda_{\min}(\Tilde{V}) \} > M^{-1} 
\end{equation}
\end{assumption}
\begin{assumption}{4}{A4}\label{ass:vstarbound}
$\norm{v}_1 = O(1)$
\end{assumption}
\begin{remark}[Discussion of Assumptions]
    \textnormal{Assumption \ref{ass:sparsity} is a common assumption in high-dimensional de-biasing literature. \cite{jankova2018semiparametric} suggests that this level of sparsity is necessary for asymptotic normality of such estimates. Thus, we require only standard sparsity conditions for our method to achieve the desired inferential results. When $V$ is a standard identity matrix, Assumption \ref{ass:vstarbound} means that the columns of $\Sigma_X^{-1}$ have bounded $L_1$-norms. Under a Gaussian graphical model, this precision matrix encodes conditional dependencies among covariates. In this context, the assumption may be thought of as a bound on the amount of collinearity admissable in the design matrix.}
\end{remark}

\begin{theorem}
    \label{theorem:cltforunknownv}
    Under Assumptions \ref{ass:sparsity}, \ref{ass:subgauss}, \ref{ass:eigen}, \ref{ass:vstarbound}, $\lambda \propto \sqrt{\log{p}/n}$, and $\lambda' \propto \norm{v}_1\sqrt{\log p/n}$, the statistic $\hat{U}_n := n^{1/2}\hat{\Delta}^{-1/2}(\hat{T}_n - \boldsymbol{\beta}_j)$ satisfies $\forall t \in \mathbb{R}$:
    \[\lim_{n \to \infty} |\mathbb{P}(\hat{U}_n \leq t) - \Phi(t)| = 0\] where \[\hat{\Delta} = \hat{v}^{T}\frac{1}{n}\sum_{i=1}^n\left[\frac{1}{K^2} \mathbf{X}_i^T\hat{V}_n^{-1}\left[\frac{1}{n}\sum_{i=1}^n\boldsymbol{\hat{\varepsilon}}_i\boldsymbol{\hat{\varepsilon}}_i^T\right]\hat{V}_n^{-1}\mathbf{X}_i\right]\hat{v}\]
and $\hat{V}_n$ is calculated from an independent sample of size $n$.

\begin{proof}
See proof in the Appendix.
\end{proof}
\end{theorem}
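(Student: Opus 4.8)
The plan is to turn the implicit definition \eqref{eqn:theestimate} of $\hat{T}_n$ into an explicit expression and then isolate an asymptotically linear leading term plus a negligible remainder. Writing $\hat{\boldsymbol{\beta}}_{\hat{T}_n}(j) = \hat{\boldsymbol{\beta}} + (\hat{T}_n - \hat{\beta}_j)e_j$ and setting $\hat{G} := \frac{1}{nK}\sum_{i=1}^n \mathbf{X}_i^T\hat{V}_n^{-1}\mathbf{X}_i$, equation \eqref{eqn:theestimate} solves to
\[
\hat{T}_n - \hat{\beta}_j = \frac{\hat{v}^T\,\frac{1}{nK}\sum_{i=1}^n \mathbf{X}_i^T\hat{V}_n^{-1}(Y_i - \mathbf{X}_i\hat{\boldsymbol{\beta}})}{\hat{v}^T\hat{G}e_j}.
\]
Substituting $Y_i - \mathbf{X}_i\hat{\boldsymbol{\beta}} = \boldsymbol{\varepsilon}_i - \mathbf{X}_i(\hat{\boldsymbol{\beta}} - \boldsymbol{\beta})$ into the numerator, writing $\hat{r} := \hat{G}\hat{v} - e_j$ (so that $\hat{v}^T\hat{G} = (e_j + \hat{r})^T$ by symmetry of $\hat{G}$, with $\norm{\hat{r}}_{\infty}\le\lambda'$ by CLIME feasibility \eqref{eqn:CLIME}), and adding $\hat{\beta}_j - \beta_j$ to both sides, I obtain the master decomposition
\[
\hat{T}_n - \beta_j = \frac{1}{\hat{v}^T\hat{G}e_j}\,\hat{v}^T\frac{1}{nK}\sum_{i=1}^n \mathbf{X}_i^T\hat{V}_n^{-1}\boldsymbol{\varepsilon}_i \;-\; \frac{\hat{r}^T(\hat{\boldsymbol{\beta}} - \boldsymbol{\beta})}{\hat{v}^T\hat{G}e_j} \;+\; (\hat{\beta}_j - \beta_j)\Bigl(1 - \tfrac{1}{\hat{v}^T\hat{G}e_j}\Bigr),
\]
in which the first summand is the leading (asymptotically linear) piece and the last two constitute the regularization bias.

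Next I would show the $\sqrt{n}$-scaled bias is $o_p(1)$. Since $\hat{v}^T\hat{G}e_j = 1 + \hat{r}_j$ with $|\hat{r}_j|\le\lambda'$, the denominator is bounded away from $0$ with high probability, and both bias terms are controlled by $\norm{\hat{r}}_{\infty}\norm{\hat{\boldsymbol{\beta}}-\boldsymbol{\beta}}_1 \le \lambda'\norm{\hat{\boldsymbol{\beta}}-\boldsymbol{\beta}}_1$. The key input is the Dantzig $\ell_1$-rate $\norm{\hat{\boldsymbol{\beta}}-\boldsymbol{\beta}}_1 = O_p(s\lambda)$ for \eqref{eqn:initialdantzig}, which rests on a restricted-eigenvalue/compatibility condition for the \emph{unweighted} Gram matrix $\frac{1}{nK}\sum_i \mathbf{X}_i^T\mathbf{X}_i$; this holds with high probability under the bounded-eigenvalue Assumption \ref{ass:eigen} and subgaussianity Assumption \ref{ass:subgauss} via standard Gram-matrix concentration. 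With $\lambda\asymp\sqrt{\log p/n}$ and $\lambda'\asymp\norm{v}_1\sqrt{\log p/n}=O(\sqrt{\log p/n})$ (Assumption \ref{ass:vstarbound}), the scaled bias is $O_p(\sqrt{n}\,\lambda'\,s\lambda)=O_p(s\log p/\sqrt{n})=o_p(1)$ by the sparsity Assumption \ref{ass:sparsity}.

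For the leading term I would first replace $\hat{v}$ by its population target $v$: the induced error is at most $\norm{\hat{v}-v}_1\,\bigl\Vert\frac{1}{\sqrt{n}K}\sum_i \mathbf{X}_i^T\hat{V}_n^{-1}\boldsymbol{\varepsilon}_i\bigr\Vert_{\infty} = O_p(s_v\lambda')\cdot O_p(\sqrt{\log p}) = O_p(s_v\log p/\sqrt{n}) = o_p(1)$, where $\norm{\hat{v}-v}_1=O_p(s_v\lambda')$ is the analogous $\ell_1$-rate for the CLIME estimator \eqref{eqn:CLIME} (now for the \emph{weighted} Gram $\hat{G}$) and the $O_p(\sqrt{\log p})$ factor follows from a maximal inequality for the mean-zero, independent-across-$i$, subexponential summands $K^{-1}\mathbf{X}_i^T\hat{V}_n^{-1}\boldsymbol{\varepsilon}_i$. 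Because $\hat{V}_n$ is computed from an \emph{independent} sample and $\boldsymbol{\varepsilon}_i\perp\mathbf{X}_i$ with $\mathbb{E}\boldsymbol{\varepsilon}_i=0$, these summands are conditionally mean zero, and a Lindeberg triangular-array CLT (conditionally on the independent sample, using $\hat{V}_n\xrightarrow{p}\tilde{V}$ and Assumption \ref{ass:eigen} to control the array variances) gives
\[
\frac{\sqrt{n}}{K}\,v^T\frac{1}{n}\sum_{i=1}^n \mathbf{X}_i^T\hat{V}_n^{-1}\boldsymbol{\varepsilon}_i \xrightarrow{d} N(0,\sigma^2), \qquad \sigma^2 = \frac{1}{K^2}\,v^T\,\mathbb{E}\bigl[\mathbf{X}_i^T\tilde{V}^{-1}V\tilde{V}^{-1}\mathbf{X}_i\bigr]\,v,
\]
using $\mathbb{E}[\boldsymbol{\varepsilon}_i\boldsymbol{\varepsilon}_i^T]=V$ and $\boldsymbol{\varepsilon}_i\perp\mathbf{X}_i$. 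Since $\hat{v}^T\hat{G}e_j\xrightarrow{p}1$, Slutsky gives $\sqrt{n}(\hat{T}_n-\beta_j)\xrightarrow{d} N(0,\sigma^2)$.

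It remains to prove $\hat{\Delta}\xrightarrow{p}\sigma^2$, after which $\hat{U}_n=\sqrt{n}\hat{\Delta}^{-1/2}(\hat{T}_n-\beta_j)\xrightarrow{d}N(0,1)$ by a final Slutsky step; since $\Phi$ is continuous this is exactly the stated conclusion $|\mathbb{P}(\hat{U}_n\le t)-\Phi(t)|\to0$ for every $t$ (and in fact uniformly). Consistency of $\hat{\Delta}$ I would assemble from (i) $\norm{\hat{v}-v}_1=o_p(1)$ with $\norm{v}_1=O(1)$, (ii) $\frac{1}{n}\sum_i \boldsymbol{\hat{\varepsilon}}_i\boldsymbol{\hat{\varepsilon}}_i^T\xrightarrow{p}V$, which holds because $\boldsymbol{\hat{\varepsilon}}_i=\boldsymbol{\varepsilon}_i-\mathbf{X}_i(\hat{\boldsymbol{\beta}}-\boldsymbol{\beta})$ and the estimation error is absorbed by the $\ell_1$-rate of $\hat{\boldsymbol{\beta}}$, and (iii) $\max$-norm concentration of the quadratic forms $K^{-2}\mathbf{X}_i^T\hat{V}_n^{-1}(\cdot)\hat{V}_n^{-1}\mathbf{X}_i$ together with $\hat{V}_n\xrightarrow{p}\tilde{V}$, all under Assumptions \ref{ass:subgauss}--\ref{ass:vstarbound}. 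I expect the main obstacle to lie in making the two $\ell_1$-rate/bias steps simultaneously rigorous: the rate for $\hat{v}$ requires restricted-eigenvalue control of the random, \emph{$\hat{V}_n^{-1}$-weighted} Gram matrix $\hat{G}$, and the bias bound must trade $\norm{\hat{r}}_\infty\le\lambda'$ against $\norm{\hat{\boldsymbol{\beta}}-\boldsymbol{\beta}}_1$ so that the product is $o(n^{-1/2})$ --- precisely where the sparsity scaling of Assumption \ref{ass:sparsity} is exhausted, and where the independent-sample construction of $\hat{V}_n$ is essential to keep the CLT summands genuinely mean zero.
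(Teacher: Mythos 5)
Your proposal is correct in its essentials and reproduces the same analytic skeleton as the paper's argument: isolate the linear term $\hat v^{T}\tfrac{1}{nK}\sum_i \mathbf{X}_i^{T}\hat V_n^{-1}\boldsymbol{\varepsilon}_i$, bound the bias by the product of the CLIME residual $\norm{\hat r}_{\infty}\le\lambda'$ and the Dantzig $\ell_1$-rate $\norm{\hat{\boldsymbol{\beta}}-\boldsymbol{\beta}}_1=O_p(s\lambda)$ so that the sparsity scaling of Assumption \ref{ass:sparsity} makes it $o_p(n^{-1/2})$, prove a CLT for the leading term, establish consistency of $\hat\Delta$, and finish with Slutsky. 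Where you genuinely diverge is in how the random working covariance is handled. The paper introduces an intermediate estimator $\tilde T_n$ built with the limiting $\tilde V$, verifies the six conditions of the \cite{neykov2018unified} framework for $\tilde T_n$ (Theorem \ref{theorem:intermediatethm}), and then separately shows $\sqrt{n}(\hat T_n-\tilde T_n)=o_p(1)$ by deriving operator- and max-norm rates for $\Delta_n=\hat V_n^{-1}-\tilde V^{-1}$ (Lemmas \ref{lemma:vhatconvergencetovtilde}--\ref{lemma:vtildetovhat}, \ref{lemma:Tn}); this keeps the CLT summands free of $n$-dependent weights but costs several auxiliary rate lemmas. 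You instead solve the estimating equation explicitly, keep $\hat V_n$ throughout, and exploit the independent-sample construction to run a conditional triangular-array CLT; this is cleaner and avoids the $\Delta_n$ machinery, but two points need to be made explicit for it to be airtight: (i) the conditional CLT gives convergence of $\mathbb{P}(\cdot\mid\hat V_n)$, and you must integrate back to the unconditional statement (e.g.\ dominated convergence applied to the conditional CDF, using $\hat V_n\overset{p}{\to}\tilde V$ so the conditional variances converge in probability to $\sigma^2$); and (ii) the $\ell_1$-rate for $\hat v$ requires a restricted-eigenvalue/feasibility argument for the $\hat V_n^{-1}$-weighted Gram matrix, which the paper secures through the RE condition on $(\tilde V^{-1/2}\otimes I_n)\mathbb{X}$ plus the $\Delta_n$ perturbation bound --- you correctly flag this as the main obstacle but should note that the perturbation step cannot be skipped even in your direct route. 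As a minor remark, your limiting variance carries the factor $K^{-2}$, consistent with the theorem statement, whereas the appendix restatement of $\Delta$ in Theorem \ref{theorem:intermediatethm} carries $K^{-1}$; your bookkeeping is the internally consistent one.
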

\begin{remark}[Effects of Sample-Splitting and Cross-Fitting]
    \textnormal{To avoid losing power due to the sample-splitting used to calculate $\hat{V}_n$ from an independent sample, we will use the procedure of ``cross-fitting'', i.e. performing the estimation procedure twice, each time swapping the roles of the two samples, and averaging the resulting estimates.}
\end{remark}
% \begin{remark}[Proof Strategy] \textnormal{We first show that the estimator assuming $\Tilde{V}$ is known, denoted as $\Tilde{T}_n$, converges to a normal distribution at a $\sqrt{n}$-rate with variance $\Delta$ (Theorem \ref{theorem:intermediatethm}). We then show that $\hat{T}_n$ converges to $\Tilde{T}_n$ with overwhelming probability (Lemma \ref{lemma:Tn}). Finally, showing that we have a consistent estimator of $\Delta$ (Lemma \ref{lemma:deltaconsistentknownv}.), we argue the same $\sqrt{n}$-convergence for $\hat{T}_n$ via an application of Slutsky's theorem.} 
% \end{remark}
\section{Efficiency in High-Dimensional Models}\label{sec:eff}
Efficiency, i.e. statistical optimality in the sense of achieving minimum variance among a certain class of estimators, is a key metric by which estimators for the same underlying structure are compared. As we follow the strategy of \cite{jankova2018semiparametric}, who in turn modify the efficiency arguments of Le Cam, we first provide a very brief overview of these arguments.

%For example, the classical Cramer-Rao lower bounds provide limits on how small the variance among unbiased estimators of parameters can be. Thus, any unbiased estimator achieving this benchmark can be deemed ``best'' in the sense that among all unbiased estimators, it has the smallest expected mean square error.

\subsection{Le Cam's Efficiency Arguments}
We restrict our attention only to a class of data-generating distributions, $\{P_{\theta}^n\}$ that are ``smooth'' in their indexing parameter, $\theta$ which falls in an open subset $\Theta \subset \mathbb{R}^p$ . This condition can be shown to imply that draws from a local neighborhood of distributions about a specific value $\theta = \theta_0$, $\{P_{\theta_0 + h/\sqrt{n}}^n\}$, for any $h \in \mathbb{R}^p$, can be approximated by draws from a Gaussian distribution, $N(h, I_{\theta_0}^{-1})$ asymptotically, where $I_{\theta_0}^{-1}$ is the inverse Fisher Information. This is the concept of Local Asymptotic Normality (LAN). In the course of showing LAN, Taylor expansions of the likelihood are employed with remainder terms that depends on the direction, $h$, of the perturbation around the specific point $\theta_0$ under consideration. Now consider a sequence of statistics, $T_n$ generated under the local models $\{P_{\theta_0 + h/\sqrt{n}}^n\}$ that converge in distribution 
\begin{equation}
    \sqrt{n}(T_n- \psi(\theta_0 + h/\sqrt{n})) \overset{d}{\to}L_{\theta, h}, \text{ for all $h\in \mathbb{R}^p$}
\end{equation}
When $L_{\theta, h}$ does not depend on $h$, $T_n$ are called ``regular'' estimators. This implies that for a sequence of estimators, $T_n$ for $\psi(\theta)$, generated under the local distributions $\{P_{\theta_0 + h/\sqrt{n}}^n\}$, there exists a random variable $T$ with distribution $N(h, I_{\theta_0}^{-1})$, such that $T - \Dot{\psi}_{\theta}h \sim L_{\theta, h}$ for all $h\in \mathbb{R}^p$. Thus, by finding the ``best'' estimator of $\Dot{\psi}_{\theta}h$ in the sense of smallest variance, we will then have the smallest variance possible for an estimator of $\psi(\theta)$ within the class of estimators under consideration. It follows that the best choice for this estimator is $\Dot{\psi}_{\theta}X$ for a single observation $X \sim N(h, I_{\theta_0}^{-1})$. Since $\Dot{\psi}_{\theta}X - \Dot{\psi}_{\theta}h \sim N(0, \Dot{{\psi}}_{\theta}I^{-1}_{\theta}\Dot{{\psi}}_{\theta}^T)$, we conclude that the asymptotic efficiency bound in this class of estimators is $\Dot{{\psi}}_{\theta}I^{-1}_{\theta}\Dot{{\psi}}_{\theta}^T$.
% Such bounds also exist for semi-parametric models, i.e. those with some unstructured functional part of the data-generation distribution, in the classical fixed-dimensional setting (see for example \cite{van2000asymptotic}). In this setting, the semiparameteric efficiency bound can be conceptualized as the supremum of the Cramer-Rao lower bounds for all possible fully parametric submodels, i.e. parametric models covered by the semiparametric case that contain the true data-generating law. 

 \vspace{0.5cm}

 \noindent As noted in \cite{van2000asymptotic}, Section 8.8, although there is no one definition of ``optimality'', any estimator sequence $\sqrt{n}(T_n- \psi(\theta))$ with an $N(0, \Dot{{\psi}}_{\theta}I^{-1}_{\theta}\Dot{{\psi}}_{\theta}^T)$ limiting distribution enjoys multiple optimality properties, the variance cannot be improved except on a set of parameters with Lebesgue measure zero, and any improvement is only possible with specific loss functions and leads to worse performance for others. Le Cam's arguments use the fact that the dimension $p$ is fixed with respect to $n$. We will now note where this assumption is used and how the arguments are adapted to the high-dimensional setting.

\subsection{Adaptations to High-Dimensional Settings} %Rewrite Jankova Section 10 in terms of my problem
In this section, we will define both a class of data-generating laws and a class of estimators for the $j$th element of the parameter vector $\beta$ through sets of conditions that these laws and estimators must satisfy, respectively. Given that our data-generating law falls in the specified class, we will show that there exists a lower bound on the variance within this class of estimators. We will then argue that a multivariate Gaussian data-generating mechanism lies within this class of laws, the estimator proposed in Equation \ref{eqn:theestimate} belongs to this class of estimators, and that this estimator asymptotically achieves the lower bound for this class, in this sense being an efficient estimator. In other words, under the assumption of Gaussianity of the outcomes, this estimator achieves the lowest possible variance within the specified class of estimators.  

\vspace{0.5cm}

\noindent Given subgaussian covariates $\mathbf{X}_1,..., \mathbf{X}_n$, we observe $\mathbf{Y}_1,...,\mathbf{Y}_n$ iid observations from the data-generating process $P_{\beta_{n,0}}$, $\beta_{n,0} \in \mathcal{B}$, which is an open convex subset of $\mathbb{R}^p$. 

In our setting, we consider the class of multivariate Gaussian distributions, $P_{\beta_{n,0}} = \{N(\mathbf{X}_i\beta_{n,0}, V)\}$. We adapt to a high-dimensional setting by only considering $\beta_{n,0} \in \mathcal{B}(d_n)$, where

\begin{equation}\label{eqn:sparseball}
    \mathcal{B}(s_n) = \{\beta \in \mathcal{B}: \norm{\beta}_0 \leq C_1s_n, \norm{\beta}_2 \leq C_2\}
\end{equation}
for $C_1, C_2 = O(1)$ and $s_n$ is a rate that will be derived later. Assume now that we have an estimator of $\beta_j$,$B_n$, satisfying the following conditions:

\begin{condition}{1}{}\label{cond:sp1}
$B_n$ is asymptotically linear, i.e. there exists an influence function $ l_{\beta_{n,0}}$ and some sequence $\beta_{n,0}$ such that:
\begin{equation}
    B_n - \beta_j = \frac{1}{n}\sum_{i=1}^n l_{\beta_{n,0}}(\mathbf{Y}_i) + o_p(n^{-1/2})
\end{equation}
where the final term is with respect to the sequence $\beta_{n,0}$, $\mathbb{E}_{\beta_{n,0}}[l_{\beta_{n,0}}] = 0$, and $V_{\beta_{n,0}}:= \mathbb{E}_{\beta_{n,0}}[l_{\beta_{n,0}}^2] < \infty$.
\end{condition}

The estimator $T_n$, defined in Equation \ref{eqn:theestimate} is asymptotically linear with influence function under correct specification of the working correlation structure (Lemma \ref{lemma:sp1}):
\begin{equation}
    l_{\beta_{n,0}}(\mathbf{Y}_i) = v^{T}K^{-1}\mathbf{X}_i^T{V}^{-1}(\mathbf{X}_i\beta_{n,0} - \mathbf{Y}_{i})
\end{equation}
which has mean zero and bounded variance, $V_{\beta_n,0}$ (Lemma \ref{lemma:sp2}), thus satisfying Condition \ref{cond:sp1}.

\begin{condition}{2}{}\label{cond:sp2}
    The data-generating law, $P_\beta$, must be dominated by a $\sigma$-finite measure. In the following, we denote the distribution function as $p_\beta$, the log-likelihood as $\ell_{\beta}$, and the score function as $s_{\beta}$. The score function must be twice differentiable and $\norm{\Ddot{s}_\beta}_{\max} < L$ for some universal constant $L$ and for all $\beta \in \mathcal{B}(s_n)$. 
    \end{condition}

    \begin{condition}{3}{}\label{cond:sp3}
  Denote $I_{\beta_{n,0}}:= \mathbb{E}_{\beta_{n,0}}[s_{\beta_{n,0}}s_{\beta_{n,0}}^T]$. Assume $\lambda_{\max}(I_{\beta_{n,0}}) = O(1)$, $\lambda_{\min}^{-1}(I_{\beta_{n,0}}) = O(1)$, and 
  \begin{equation}
      \norm{\frac{1}{n}\sum_{i=1}^n\Dot{s}_{\beta_{n,0}} + I_{\beta_{n,0}}}_{\max} = O_p(\lambda)
  \end{equation}
  for some $\lambda > 0$. Assume $s_n = \max\left\{\sqrt{\frac{n}{\log p}}, n^{1/3}\right\}$.
    \end{condition}
 These two conditions are shown for the setting of subgaussian covariates and Gaussian errors in Lemma \ref{lemma:sp2}. Notably we demonstrate that 
\begin{equation}
    \norm{\frac{1}{n}\sum_{i=1}^n \mathbf{X}_i^T{V}^{-1}\mathbf{X}_i - \mathbb{E}[\mathbf{X}_i^T{V}^{-1}\mathbf{X}_i]}_{\max} = O_p\left(\sqrt{\frac{\log p}{n}}\right)
\end{equation}
which defines the sparsity rate $s_n = \max\left\{\sqrt{\frac{n}{\log p}}, n^{1/3}\right\}$ in Equation \ref{eqn:sparseball}.

\begin{condition}{4}{}\label{cond:sp4}
    Let $f_{\beta_{n,0}}:= l_{\beta_{n,0}}+ h^Ts_{\beta_{n,0}}$. Assume that $\forall \varepsilon > 0$:
    \begin{equation}
        \lim_{n \to \infty} \mathbb{E}_{\beta_{n,0}}[f^2_{\beta_{n,0}}\mathbf{1}_{|f_{\beta_{n,0}}|> \varepsilon\sqrt{n}}] = 0
    \end{equation}
    and $1/V_{\beta_{n,0}} = O(1)$
\end{condition}
In Lemma \ref{lemma:sp5}, we show this Lindeberg condition for our data setting and estimator, which is necessary for asymptotic normality of quantities in the likelihood expansion. 

\begin{condition}{5}{}\label{cond:sp5}
    For every $h \in \mathbb{R}^p$:
    \begin{equation}
        \mathbb{E}_{{\beta_{n,0}}}[l_{\beta_{n,0}}h^Ts_{{\beta_{n,0}}} - h^Te_j] = o(1)
    \end{equation}
\end{condition}
Condition \ref{cond:sp5} ensures that the estimator $B_n$ is asymptotically unbiased. We show this condition for $T_n$ in Lemma \ref{lemma:sp6}.
\begin{condition}{6}{}\label{cond:sp6}
    The inverse of $I_{{\beta_{n,0}}}$ exists and ${\beta_{n,0}} + I^{-1}_{{\beta_{n,0}}}e_j/\sqrt{n} \in B({\beta_{n,0}}, c/\sqrt{n})$, where $B({\beta_{n,0}}, c/\sqrt{n})$ is an $\ell_2$-ball of radius $c/\sqrt{n}$ centered at $\beta_{n,0}$ and confined to $\mathcal{B}(s_n)$ for some constant $c > 0$.
\end{condition}
% When considering the local data-generating laws, $\{P_{\theta_0 + h/\sqrt{n}}^n\}$, the direction of perturbation $h$ is unconstrained. As noted above, in the course of showing that a class of distributions in LAN, a likelihood expansion is used with remainder terms depending on the norm of $h$. In classical models, this quantity is well controlled since $p$ is fixed. In the high-dimensional regime, \cite{jankova2018semiparametric} circumvents this issue by restricting the ``local neighborhood'' to sparse directions, specifically to those parameters

\begin{theorem}{(Corollary 2 from \cite{jankova2018semiparametric})}
    Given Conditions \ref{cond:sp1} - \ref{cond:sp6}, 
    \begin{equation}
        \sqrt{n}(B_n - e_j^T(\beta + h/\sqrt{n})e_j)/V_{\beta_{n,0}}^{1/2} \overset{d}{\to} N(0,1)
    \end{equation}
    where $V_{\beta_{n,0}} \geq e_j^TI^{-1}_{{\beta_{n,0}}}e_j + o_p(1)$. This implies that an asymptotically linear estimator with influence function:
    \begin{equation}
        l_{\beta_{n,0}} = e_j^TI^{-1}_{\beta}s_\beta
    \end{equation}
    attains the efficiency bound for this class of estimators.
\end{theorem}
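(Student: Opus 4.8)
The plan is to follow Le Cam's limits-of-experiments program, specialized to the sparse high-dimensional regime, in five linked stages: establish local asymptotic normality (LAN) of the likelihood sequence along the path $\beta_{n,0}+h/\sqrt n$, prove a joint central limit theorem for the influence function together with the score, transport the limit to the local alternatives via Le Cam's third lemma, verify regularity, and finally extract the variance lower bound by an orthogonality argument. First I would establish the LAN expansion. Writing $\ell_\beta$ and $s_\beta$ for the per-cluster log-likelihood and score, a Taylor expansion of the summed log-likelihood ratio gives
\[
\log\frac{dP^n_{\beta_{n,0}+h/\sqrt n}}{dP^n_{\beta_{n,0}}} = \frac{1}{\sqrt n}\sum_{i=1}^n h^T s_{\beta_{n,0}}(\mathbf{Y}_i) + \frac{1}{2n}\sum_{i=1}^n h^T \dot{s}_{\beta_{n,0}}(\mathbf{Y}_i)\, h + R_n,
\]
with $R_n$ collecting the third-order remainder. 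Condition \ref{cond:sp2} bounds the third derivative through $\|\ddot{s}_\beta\|_{\max} < L$, and Condition \ref{cond:sp3} replaces the empirical Hessian by $-I_{\beta_{n,0}}$ up to an $O_p(\lambda)$ error; combined with the sparsity rate $s_n$ this forces $R_n = o_p(1)$ and yields the quadratic LAN form $h^T\Delta_n - \tfrac12 h^T I_{\beta_{n,0}} h$ with $\Delta_n = n^{-1/2}\sum_i s_{\beta_{n,0}}(\mathbf{Y}_i)$.

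Next I would prove a joint CLT for $\sqrt n(B_n-\beta_j)$ and the log-likelihood ratio under $P^n_{\beta_{n,0}}$. By Condition \ref{cond:sp1}, $\sqrt n(B_n-\beta_j)=n^{-1/2}\sum_i l_{\beta_{n,0}}(\mathbf{Y}_i)+o_p(1)$, so the relevant linear combination is exactly $f_{\beta_{n,0}}=l_{\beta_{n,0}}+h^T s_{\beta_{n,0}}$ from Condition \ref{cond:sp4}, whose Lindeberg condition is precisely what licenses the Lindeberg--Feller CLT for the triangular array. This gives joint normality with cross-covariance $\mathbb{E}_{\beta_{n,0}}[l_{\beta_{n,0}} h^T s_{\beta_{n,0}}]$, identified by Condition \ref{cond:sp5} as $h^T e_j + o(1)$. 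Since LAN makes $\{P^n_{\beta_{n,0}+h/\sqrt n}\}$ contiguous to $\{P^n_{\beta_{n,0}}\}$, Le Cam's third lemma then shifts the mean under the alternative to give $\sqrt n(B_n-\beta_j)\overset{d}{\to}N(h^T e_j, V_{\beta_{n,0}})$. Because the estimand shifts by $\psi(\beta_{n,0}+h/\sqrt n)-\beta_j = e_j^T h/\sqrt n$, recentering cancels the drift, yielding $\sqrt n\big(B_n-(\beta_{n,0}+h/\sqrt n)_j\big)\overset{d}{\to}N(0,V_{\beta_{n,0}})$ for every $h$; this is the regularity statement, and dividing by $V_{\beta_{n,0}}^{1/2}$ produces the claimed standard-normal limit.

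For the lower bound I would use the orthogonal-decomposition (convolution) argument. Condition \ref{cond:sp5} holding for all $h$ is equivalent to the vector identity $\mathbb{E}_{\beta_{n,0}}[l_{\beta_{n,0}} s_{\beta_{n,0}}]=e_j+o(1)$; the candidate efficient influence function $\tilde l := e_j^T I^{-1}_{\beta_{n,0}} s_{\beta_{n,0}}$ satisfies the same moment identity and has variance $e_j^T I^{-1}_{\beta_{n,0}} e_j$. A direct computation gives $\mathbb{E}_{\beta_{n,0}}[l_{\beta_{n,0}}\tilde l]=e_j^T I^{-1}_{\beta_{n,0}} e_j$, so $\mathbb{E}_{\beta_{n,0}}[(l_{\beta_{n,0}}-\tilde l)\tilde l]=o(1)$ and Pythagoras yields $V_{\beta_{n,0}}=\mathbb{E}_{\beta_{n,0}}[l_{\beta_{n,0}}^2]\ge e_j^T I^{-1}_{\beta_{n,0}} e_j + o_p(1)$, with equality exactly when $l_{\beta_{n,0}}=\tilde l$. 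Condition \ref{cond:sp6} keeps the perturbed parameter inside the sparse ball so that $I^{-1}_{\beta_{n,0}}$ and the local experiment are well defined throughout.

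I expect the main obstacle to be the LAN step under growing dimension. Unlike the classical fixed-$p$ argument, the remainder $R_n$ and the Hessian-concentration error each aggregate over $p$ coordinates, so making $R_n=o_p(1)$ requires the uniform derivative bound of Condition \ref{cond:sp2}, the $O_p(\lambda)$ concentration of Condition \ref{cond:sp3}, and the sparsity rate to interlock exactly; this is the single place where the high-dimensional adaptation genuinely departs from the textbook proof and where the rate $s_n=\max\{\sqrt{n/\log p},\,n^{1/3}\}$ is consumed.
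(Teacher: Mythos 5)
Your proposal reconstructs the proof of a result that the paper itself does not prove: this theorem is imported verbatim as Corollary~2 of \cite{jankova2018semiparametric}, and the paper's only engagement with it is to verify Conditions \ref{cond:sp1}--\ref{cond:sp6} for its particular estimator (Lemmas \ref{lemma:sp1}--\ref{lemma:sp7}) and then invoke the cited result. Your five-stage outline --- LAN expansion controlled by Conditions \ref{cond:sp2}--\ref{cond:sp3} and the sparsity rate $s_n$, a Lindeberg--Feller joint CLT for $f_{\beta_{n,0}}=l_{\beta_{n,0}}+h^Ts_{\beta_{n,0}}$ licensed by Condition \ref{cond:sp4}, contiguity plus Le Cam's third lemma to shift the mean by the cross-covariance $h^Te_j+o(1)$ from Condition \ref{cond:sp5}, recentering to get regularity, and a Pythagoras/projection argument against $\tilde l=e_j^TI^{-1}_{\beta_{n,0}}s_{\beta_{n,0}}$ for the lower bound --- is exactly the argument used in the cited reference, so there is no genuine divergence of method to report. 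Two points where your sketch is looser than the actual argument: first, the step $\mathbb{E}_{\beta_{n,0}}[l_{\beta_{n,0}}\tilde l]=e_j^TI^{-1}_{\beta_{n,0}}e_j+o(1)$ requires applying the asymptotic-unbiasedness identity of Condition \ref{cond:sp5} at the \emph{specific} direction $h\propto I^{-1}_{\beta_{n,0}}e_j$, and since $p$ grows with $n$ this is only legitimate because Condition \ref{cond:sp6} guarantees that this perturbation stays inside $\mathcal{B}(s_n)$ --- that is the actual logical role of Condition \ref{cond:sp6}, which you describe only as keeping the local experiment ``well defined''; second, you should be explicit that the displayed convergence is under the local alternatives $P^n_{\beta_{n,0}+h/\sqrt n}$, since under $P^n_{\beta_{n,0}}$ the statement is just the unshifted CLT and the content of the theorem (regularity uniformly over $h$) would be lost. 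Neither point is a gap in the mathematics, only in the exposition.
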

 \vspace{0.5cm}

We can use this result to show an efficiency result for $T_n$:
\begin{theorem}
    \label{theorem:semiparaeffi}
    Under Assumptions \ref{ass:sparsity}, \ref{ass:subgauss}, \ref{ass:eigen}, \ref{ass:vstarbound}, $\lambda \propto \sqrt{\log{p}/n}$, and $\lambda' \propto \sqrt{\log p/n}$, $\beta_{n,0} + v/\sqrt{n} \in B({\beta_{n,0}}, c/\sqrt{n})$ for a sufficiently large constant $c$ and when $Y_{i}|\mathbf{X}_i \sim \mathcal{N}(\mathbf{X}_i\boldsymbol{{\beta}}, V)$, the statistic $\hat{U}_n$ defined in Theorem \ref{theorem:cltforunknownv} asymptotically attains the smallest possible variance among estimators satisfying Conditions \ref{cond:sp1}, \ref{cond:sp4}, and \ref{cond:sp5} when the working correlation structure is correctly specified.
    \begin{proof}
        See proof in Appendix.
    \end{proof}
  
\end{theorem}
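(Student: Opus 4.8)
The plan is to derive the efficiency of $\hat T_n$ as a consequence of the cited Le Cam lower bound (Corollary 2 of \cite{jankova2018semiparametric}), which grants minimal variance $e_j^\top I^{-1}_{\beta_{n,0}} e_j$ to any estimator meeting Conditions \ref{cond:sp1}--\ref{cond:sp6}. I would organize the argument in three moves: (i) verify that the clustered Gaussian law $N(\mathbf{X}_i\beta_{n,0}, V)$ satisfies the \emph{law-side} Conditions \ref{cond:sp2}, \ref{cond:sp3}, \ref{cond:sp6} so that the LAN expansion and its bound are legitimate; (ii) verify that $\hat T_n$ satisfies the \emph{estimator-side} Conditions \ref{cond:sp1}, \ref{cond:sp4}, \ref{cond:sp5}; and (iii) show by a direct computation that the asymptotic variance of $\hat T_n$ equals $e_j^\top I^{-1}_{\beta_{n,0}} e_j$. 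Since Conditions \ref{cond:sp2}, \ref{cond:sp3}, \ref{cond:sp6} only certify membership of the Gaussian law in the admissible class, the optimality conclusion is precisely what the theorem states: minimal variance among competitors obeying Conditions \ref{cond:sp1}, \ref{cond:sp4}, \ref{cond:sp5}.

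First I would assemble the condition checks from the established lemmas. Condition \ref{cond:sp1} holds by Lemma \ref{lemma:sp1}, which supplies the influence function $l_{\beta_{n,0}}(\mathbf{Y}_i) = v^\top K^{-1}\mathbf{X}_i^\top V^{-1}(\mathbf{X}_i\beta_{n,0} - \mathbf{Y}_i)$ under correct specification $\tilde V = V$; Condition \ref{cond:sp4} (Lindeberg and $1/V_{\beta_{n,0}}=O(1)$) holds by Lemma \ref{lemma:sp5}; and Condition \ref{cond:sp5} (asymptotic unbiasedness) by Lemma \ref{lemma:sp6}. On the law side, the Gaussian score $s_\beta = \mathbf{X}_i^\top V^{-1}(\mathbf{Y}_i - \mathbf{X}_i\beta)$ is linear in $\beta$, so $\ddot s_\beta = 0$ and Condition \ref{cond:sp2} is immediate, while the Fisher-information bounds of Condition \ref{cond:sp3} and the concentration $\|\tfrac1n\sum_i \mathbf{X}_i^\top V^{-1}\mathbf{X}_i - \mathbb{E}[\mathbf{X}_i^\top V^{-1}\mathbf{X}_i]\|_{\max} = O_p(\sqrt{\log p/n})$ come from Lemma \ref{lemma:sp2}, which pins $s_n = \max\{\sqrt{n/\log p}, n^{1/3}\}$. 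Condition \ref{cond:sp6} is exactly the localization hypothesis of the theorem: since $v = K\,I^{-1}_{\beta_{n,0}} e_j$ (derived below) differs from the least-favorable direction $I^{-1}_{\beta_{n,0}} e_j$ only by the fixed constant $K$, the hypothesis $\beta_{n,0}+v/\sqrt n \in B(\beta_{n,0}, c/\sqrt n)$ yields Condition \ref{cond:sp6} after absorbing $K$ into $c$, and the perturbed point stays in $\mathcal{B}(s_n)$ because its support has size at most $s+s_v = o(\sqrt n/\log p) \le C_1 s_n$ by Assumption \ref{ass:sparsity}.

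The crux is the variance identity. For the clustered Gaussian model the per-cluster Fisher information is $I_{\beta_{n,0}} = \mathbb{E}[s_\beta s_\beta^\top] = \mathbb{E}[\mathbf{X}_i^\top V^{-1}\mathbf{X}_i]$, so with $\Sigma := \mathbb{E}[K^{-1}\mathbf{X}_i^\top V^{-1}\mathbf{X}_i] = K^{-1} I_{\beta_{n,0}}$ and $\tilde V = V$ we get $v = \Sigma^{-1}e_j = K\,I^{-1}_{\beta_{n,0}} e_j$. Substituting into the influence function gives $l_{\beta_{n,0}}(\mathbf{Y}_i) = e_j^\top I^{-1}_{\beta_{n,0}}\mathbf{X}_i^\top V^{-1}(\mathbf{X}_i\beta_{n,0}-\mathbf{Y}_i) = -\,e_j^\top I^{-1}_{\beta_{n,0}} s_\beta$, i.e. the efficient influence function $e_j^\top I^{-1}_{\beta_{n,0}} s_\beta$ up to an irrelevant sign. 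Hence $V_{\beta_{n,0}} = e_j^\top I^{-1}_{\beta_{n,0}} \mathbb{E}[s_\beta s_\beta^\top] I^{-1}_{\beta_{n,0}} e_j = e_j^\top I^{-1}_{\beta_{n,0}} e_j = K^{-1}(\Sigma^{-1})_{jj}$, matching the lower bound exactly. Combined with the consistency $\hat\Delta \to V_{\beta_{n,0}}$ and the $N(0,1)$ limit of $\hat U_n$ from Theorem \ref{theorem:cltforunknownv}, this shows $\hat T_n$ attains the bound.

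I expect the main obstacle to be the remainder control that makes the two sparsity regimes cohere. The class of laws is indexed over the comparatively large ball $\mathcal{B}(s_n)$ with $s_n \asymp \sqrt{n/\log p}$ forced by the concentration rate in Condition \ref{cond:sp3}, whereas Assumption \ref{ass:sparsity} confines the signal and projection to $s,s_v = o(\sqrt n/\log p)$. The LAN expansion used to produce Le Cam's bound carries a remainder depending on the perturbation direction $h$, and one must verify that the least-favorable perturbations $\beta_{n,0}+h/\sqrt n$ never leave $\mathcal{B}(s_n)$ while simultaneously the de-sparsification bias of $\hat T_n$ stays $o_p(n^{-1/2})$ \emph{uniformly} along the whole local sequence. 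Reconciling these two rate budgets, rather than the essentially algebraic variance identity, is where the delicate work lies.
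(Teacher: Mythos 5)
Your proposal is correct and follows essentially the same route as the paper: verify the estimator-side conditions via Lemmas \ref{lemma:sp1}, \ref{lemma:sp5}, \ref{lemma:sp6} and the law-side conditions for the Gaussian model, then identify the influence function $v^\top K^{-1}\mathbf{X}_i^\top V^{-1}(\mathbf{X}_i\beta_{n,0}-\mathbf{Y}_i)$ with the efficient one $e_j^\top I^{-1}_{\beta_{n,0}} s_{\beta}$ via $v = K\,I^{-1}_{\beta_{n,0}}e_j$, and invoke Corollary 2 of \cite{jankova2018semiparametric}. The only discrepancy is a cross-referencing slip (the information-matrix bounds are Lemma \ref{lemma:sp4} in the appendix, not Lemma \ref{lemma:sp2}), and your closing remarks on uniform remainder control flag a subtlety the paper likewise leaves implicit rather than a gap in your argument.
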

 
As noted in \cite{jankova2018semiparametric}, classical results on efficiency were developed for fixed dimensional settings. These are not directly applicable to settings where $p$ is growing with $n$. We show that the estimator, $T_n$ achieves the lowest possible variance for a Gaussian data-generating process with sub-Gaussian covariates in the linear model among a class of estimators satisfying certain regularity conditions in the spirit of Le Cam. 
  \begin{remark}\label{remark:regularsemi}
        \textnormal{As noted by \cite{jankova2018semiparametric}, for the linear model, the sparsity assumption \ref{ass:sparsity} is sufficient for Theorem \ref{theorem:semiparaeffi}, however, other models may require stricter sparsity assumptions for similar results to hold. }
    \end{remark}

\section{Extension to General Linear Models}\label{section:extension}
Although we've focused thus far on the linear model for continuous data, theoretical results can be extended naturally to a general linear model setting. We now briefly outline this extension.
\subsection{Initial Estimator Definition}
Let \[\hat{\boldsymbol{\beta}} = \argmin_{\boldsymbol{\beta} \in \mathbb{R}^d} \norm{\boldsymbol{\beta}}_1 \text{, s.t. }\frac{1}{nK}\norm{\sum_{i=1}^n\sum_{j=1}^K\mathbf{X}_{ij}\left(Y_{ij} - \mu_{ij}(\mathbf{X}_{ij}^T\boldsymbol{\beta})\right)}_{\infty} \leq \lambda\]

Let \[\hat{v} = \argmin_{v \in \mathbb{R}^p} \norm{v}_1 \text{, s.t. }\frac{1}{nK}\norm{\sum_{i=1}^n\mathbf{X}_i^T\hat{A}^{1/2}\hat{R}^{-1}\hat{A}^{1/2}\mathbf{X}_iv - e_j}_{\infty} \leq \lambda'\]

Let $s_v$ denote the sparsity of $v = [E[K^{-2}\mathbf{X}_i^TA^{1/2}\Bar{R}^{-1}A^{1/2}\mathbf{X}_i]]^{-1}_{\cdot j}$, where $\Bar{R}$ is the limiting value of $\hat{R}$.
We note that we now are using a general link function, $\mu_{ij}(\cdot)$, to connect the linear predictor to the mean of the response. The difference in the second Dantzig equation above stems from the mean-variance relationship observed in the exponential family distributions. As a concrete example, consider the case of binary data where we take 
\begin{align*}
    \mu_{ij}(\mathbf{X}_{ij}^T\boldsymbol{\beta}) &= \frac{\exp(\mathbf{X}_{ij}^T\boldsymbol{\beta})}{1+ \exp(\mathbf{X}_{ij}^T\boldsymbol{\beta})}\\
    \frac{\partial \mu_{ij}}{\partial \boldsymbol{\beta}} &= \mu_{ij}(1-\mu_{ij})\mathbf{X}_{ij} = \sigma^2_{j}\mathbf{X}_{ij}
\end{align*}
When we use the identity link function for continuous data, we don't have this same mean-variance dependence, accounting for the difference in the forms of the equations. However, we do note that the identity link function can be thought of as falling into this general framework when the covariates are all scaled to have variance equal to 1 (analagous to scale parameter equal to 1 in the general linear model setting). 

\begin{theorem}
    \label{theorem:cltforunknownvglm}
    Under Assumptions \ref{ass:sparsity}, \ref{ass:subgauss}, \ref{ass:eigen}, \ref{ass:vstarbound}, $\lambda \propto \sqrt{\log{p}/n}$, and $\lambda' \propto \norm{v}_1\sqrt{\log p/n}$, the statistic $\hat{U}_n := n^{1/2}\hat{\Delta}^{-1/2}(\hat{T}_n - \boldsymbol{\beta}_j)$ satisfies $\forall t \in \mathbb{R}$:
    \[\lim_{n \to \infty} |\mathbb{P}(\hat{U}_n \leq t) - \Phi(t)| = 0\] where \[\hat{\Delta} = \hat{v}^{T}\frac{1}{n}\sum_{i=1}^n\left[\frac{1}{K^2} \mathbf{X}_i^T\hat{A}_i^{1/2}\hat{R}^{-1}\hat{A}_i^{-1/2}\boldsymbol{\hat{\varepsilon}}_i\boldsymbol{\hat{\varepsilon}}_i^T\hat{A}_i^{-1/2}\hat{R}^{-1}\hat{A}_i^{1/2}\mathbf{X}_i\right]\hat{v}\]
and $\hat{V}$ is estimated from an independent sample with size $n$.
\begin{proof}
    This proof will follow the same structure as the proof of Theorem \ref{theorem:cltforunknownv}. The primary differences will be found in some of the steps showing the convergence rates for the the working correlation/covariance structure (see for example, Lemmas \ref{lemma:vhatconvergencetovtilde} and \ref{lemma:Deltanrate}) . These steps will also be similar with the forms for the individual elements of these matrices dictated by the specific mean-variance relationship under consideration. As these are standard technicalities, they are omitted in the present work due to space considerations. Note also that another consequence of the mean-variance relationships is now that we cannot estimate a single covariance matrix across all clusters, reflected in the ``meat'' term of the sandwich variance estimate above.
\end{proof}
\end{theorem}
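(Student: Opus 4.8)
The plan is to follow the proof of Theorem~\ref{theorem:cltforunknownv} step for step, tracking two modifications: the constant weight $\hat V_n^{-1}$ is replaced by the GEE weight $\hat{A}^{1/2}\hat{R}^{-1}\hat{A}^{-1/2}$ arising from $\partial\mu_i^{\top}/\partial\bs\,V_i^{-1}$, and the linear predictor $\mathbf{X}_i\bs$ is replaced by the nonlinear mean $\mu_i(\mathbf{X}_i\bs)$. First I would write the defining equation for $\hat T_n$,
\[
  \hat v^{\top}\Bigl(\frac{1}{nK}\sum_{i=1}^n \mathbf{X}_i^{\top}\hat{A}^{1/2}\hat{R}^{-1}\hat{A}^{-1/2}\bigl(\mathbf{Y}_i-\mu_i(\mathbf{X}_i\bh_{\hat{T}_n}(j))\bigr)\Bigr)=0,
\]
and apply a mean-value expansion of $\mu_i(\mathbf{X}_i\bh_{\hat{T}_n}(j))$ about $\bs$. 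Since $\partial\mu_i/\partial\bs = A_i\mathbf{X}_i$ with $A_i=\mathrm{diag}(\sigma^2_{ij})$, this produces the sensitivity matrix $\mathbf{X}_i^{\top}\hat{A}^{1/2}\hat{R}^{-1}\hat{A}^{-1/2}A_i\mathbf{X}_i$; to leading order $\hat{A}^{-1/2}A_i$ behaves like $\hat{A}^{1/2}$ by consistency of the working variance, so this matrix coincides, up to $o_p$-error, with the weighted Gram matrix $\hat\Sigma_n:=\frac{1}{nK}\sum_i\mathbf{X}_i^{\top}\hat{A}^{1/2}\hat{R}^{-1}\hat{A}^{1/2}\mathbf{X}_i$ that $\hat v$ is constructed to invert. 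The second-order Taylor term in the link is retained as a remainder.

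Solving the expanded equation for $\hat T_n-\bs_j$ and using $\hat v^{\top}\hat\Sigma_n\approx e_j^{\top}$, I would isolate the leading term
\[
  \hat T_n-\bs_j \approx \hat v^{\top}\,\frac{1}{nK}\sum_{i=1}^n \mathbf{X}_i^{\top}\hat{A}^{1/2}\hat{R}^{-1}\hat{A}^{-1/2}\boldsymbol{\varepsilon}_i,
\]
collecting the rest into three remainder families: (i) the CLIME feasibility error, bounded by $\norm{\hat v^{\top}\hat\Sigma_n-e_j}_{\infty}\norm{\bh-\bs}_1 = O_p(\lambda')\,O_p(s\sqrt{\log p/n}) = O_p(s\log p/n)=o_p(n^{-1/2})$ under Assumption~\ref{ass:sparsity}; (ii) the second-order Taylor remainder of $\mu$, controlled through boundedness of the link's second derivative together with the $\ell_1$ and $\ell_2$ rates of $\bh$; and (iii) the replacement error of the estimated weights $\hat A,\hat R$ by their limits $A,\bar R$ (including the $\hat{A}^{-1/2}A_i-\hat{A}^{1/2}$ discrepancy noted above), governed by Lemmas~\ref{lemma:vhatconvergencetovtilde} and~\ref{lemma:Deltanrate} and by the sample splitting that makes $\hat A$ independent of $\boldsymbol{\varepsilon}_i$. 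Each family is shown to be $o_p(n^{-1/2})$ as in the linear proof, with the extra $\hat{A}^{-1/2}A_i$ factors absorbed using the eigenvalue bounds of Assumption~\ref{ass:eigen}.

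Next I would apply a Lindeberg/Lyapunov central limit theorem to the leading term. After substituting the limits $v,A,\bar R$ (justified in family (iii)), the cluster summands $v^{\top}K^{-1}\mathbf{X}_i^{\top}A^{1/2}\bar R^{-1}A^{-1/2}\boldsymbol{\varepsilon}_i$ are i.i.d., mean zero, with finite variance by subgaussianity (Assumption~\ref{ass:subgauss}) and $\norm{v}_1=O(1)$ (Assumption~\ref{ass:vstarbound}); the Lindeberg condition is verified exactly as in the linear case. I would then show $\hat\Delta\overset{p}{\to}\Delta$, the asymptotic variance of this term. Here the ``meat'' is necessarily cluster-specific, since the mean--variance relationship forbids pooling a single covariance across clusters, so consistency follows by combining $\hat v\to v$, $\hat A_i\to A_i$, $\hat R\to\bar R$, and a clusterwise law of large numbers for $\frac1n\sum_i\hat{\boldsymbol{\varepsilon}}_i\hat{\boldsymbol{\varepsilon}}_i^{\top}$. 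Slutsky's theorem then gives $\hat U_n\overset{d}{\to}N(0,1)$.

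The hard part will be families (ii) and (iii) together: controlling the coupled effect of the link nonlinearity and the data-dependence of $\hat A$. Unlike the linear model, $\hat A$ depends on $\bh$ through the variance function, so the error of $\bh$ propagates simultaneously through $\ddot\mu$ and through $\hat{A}^{-1/2}$. Showing that this coupled error remains $o_p(n^{-1/2})$ requires the boundedness of the link derivatives and, crucially, the lower eigenvalue bound on $A$ in Assumption~\ref{ass:eigen} so that $\hat{A}^{-1/2}$ cannot diverge; it is also exactly why sample splitting is employed, decoupling $\hat A$ from the residuals in both the bias expansion and the variance estimator.
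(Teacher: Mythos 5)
Your proposal takes essentially the same route as the paper: the paper's own proof is a brief deferral to the structure of Theorem \ref{theorem:cltforunknownv}, with the adaptations confined to the working-covariance rate lemmas (Lemmas \ref{lemma:vhatconvergencetovtilde} and \ref{lemma:Deltanrate}) and the cluster-specific ``meat'' term, which is precisely the skeleton you follow. Your write-up is in fact more explicit than the paper's (the mean-value expansion producing the sensitivity matrix, the three remainder families, and the coupling of the link nonlinearity with the data-dependence of $\hat A$), and these details are consistent with the estimating equations and CLIME constraint defined in Section \ref{section:extension}.
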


\section{Tuning Parameter Selection}
There are two stages of the estimation procedure which depend on selecting appropriate values for tuning parameters: 1) calculating the initial sparse estimate, $\hat{\boldsymbol{\beta}}$, and 2) calculating the analog of the column of the precision matrix, $\hat{v}$. While the theoretical results above may serve as a starting point, in practice it is important to have a principled way of selecting these parameters. 

\subsection{Choosing $\lambda$}
We tune the initial sparse estimation procedure via 10-fold cross validation across 50 potential tuning parameter values as supplied by the function \emph{glmnet()} from the R package of the same name. The value of $\lambda$ giving the smallest mean squared error is chosen for the analysis.

\subsection{Choosing $\lambda'$}
One of the often-encountered challenges that characterizes de-sparsifying in high-dimensions is that of inverting a high-dimensional matrix (\cite{yuan2010high}). For example, in the linear model setting of our procedure, $v$ is the $j$th column of the high-dimensional matrix, $[\mathbb{E}[K^{-1}\mathbf{X}_i^T\Tilde{V}^{-1}\mathbf{X}_i]]^{-1}$. Since we seek to make inferences on each element of the parameter vector, we need to estimate the entire matrix. In high-dimensions, however, it is impossible to directly solve for the matrix inverse and so other methods are needed. CLIME (\cite{cai2011}) is one such method and is equivalent to solving Equation \ref{eqn:CLIME} for each of the $j = 1,...,p$ columns using the same tuning parameter $\lambda'$. The standard methods for tuning this parameter in the context of CLIME is to use cross-validation across a user-supplied set of potential tuning parameters.  Either a likelihood-based loss function is used or the sum of the squared elements of the trace of the following matrix is used:

\begin{equation}
    \widehat{\Sigma}_{test}\widehat{\Phi}_{train}(\lambda') - I_{p \times p}
\end{equation}
where $\widehat{\Sigma}_{test}$ is the empirical Hessian matrix calculated using the cross-validation test set, $\widehat{\Phi}_{train}(\lambda')$ is the CLIME estimate of the corresponding precision matrix calculated using the training set and tuning value $\lambda'$, and $I_{p \times p}$ is the $p$-dimensional identity matrix. Ideally, the off-diagonal elements of the matrix product will be close to zero, meaning the the sum of the square of the diagonal elements will be close to the Frobenius norm. As we are not assuming a likelihood, we choose this method for the tuning of $\lambda'$

\vspace{0.5cm}

\noindent We investigated the performance of this tuning method via simulation. For each of our $N = 100$ simulations, we generated $2n = 160$ independent draws from a matrix normal distribution of dimension $p = 80$ and $K = 4$ and $AR(0.5)$ row and column covariance matrices, where $\{AR(0.5)\}_{ij} = 0.5^{|i-j|}$ for the $i, j$th element. This models the dependence structure among $p$ variables and between the $K$ observations within a cluster. Outcomes were generated using the linear model with $\boldsymbol{\beta} = (0,1,1,1,1,0,...,0)^T$ and Gaussian errors with error covariance matrix:
\begin{equation*}
    V = \begin{bmatrix}
    1 & 0.5 & 0.5 & 0.5\\
    0.5 & 1 & 0.5 & 0.5\\
    0.5 & 0.5 & 1 & 0.5\\
    0.5 & 0.5 & 0.5 & 1
    \end{bmatrix}
\end{equation*}
We then used $n$ of the observations to estimate $\hat{V}_n$ and the remaining $n$ observations to estimate the precision matrix via CLIME across 50 equally-spaced tuning parameters from 0.01 to 0.5. Instead of estimating the corresponding Hessian, we chose to replace this with a proxy (as the analytical form is cumbersome and unnecessary for our purposes here) for the true Hessian, ${\Sigma}_{true}$, with 10,000 independent draws from the matrix normal distribution and taking the average of the quantities $\{K^{-1}\mathbf{X}_i^TV^{-1}\mathbf{X}_i\}$, for $i = 1,...,10,000$. We then calculated both the Frobenius and max norm errors:
\begin{equation}\label{eqn:tuningerror}
    \norm{{\Sigma}_{true}\widehat{\Phi}(\lambda') - I_{p \times p}}_{\{F, \max\}}
\end{equation}

\begin{remark}
    \textnormal{In the high-dimensional literature, it is common to use so-called ``shrinkage estimators'' (\cite{ledoit2004well}, \cite{daniels2001shrinkage}) when calculating covariance matrices to improve the stability and efficiency of these estimates.  This has been found in recent work (see Chapter 2) to yield lower overall errors similar to Equation \ref{eqn:tuningerror}. Therefore, we also tried using shrinkage estimators in these simulations.}
\end{remark}
\begin{figure}[H]
    \centering
    \includegraphics[scale = 0.3]{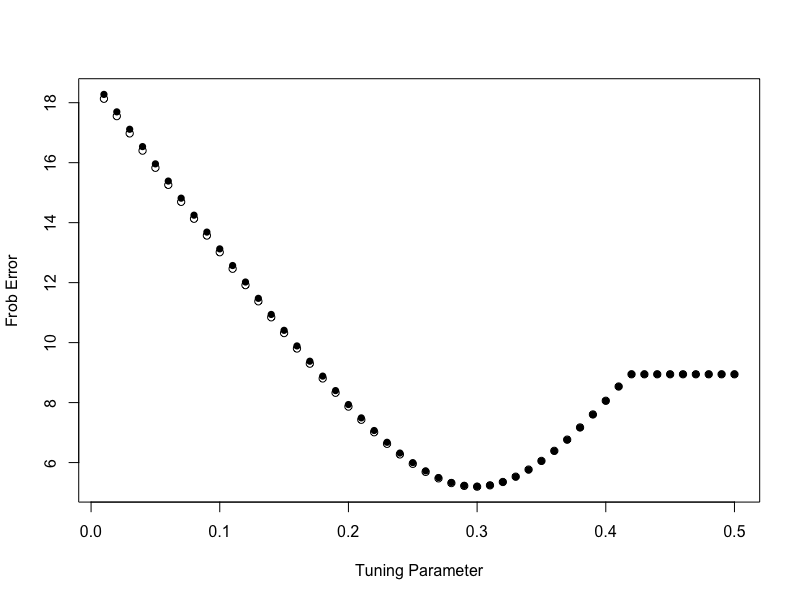}\includegraphics[scale = 0.3]{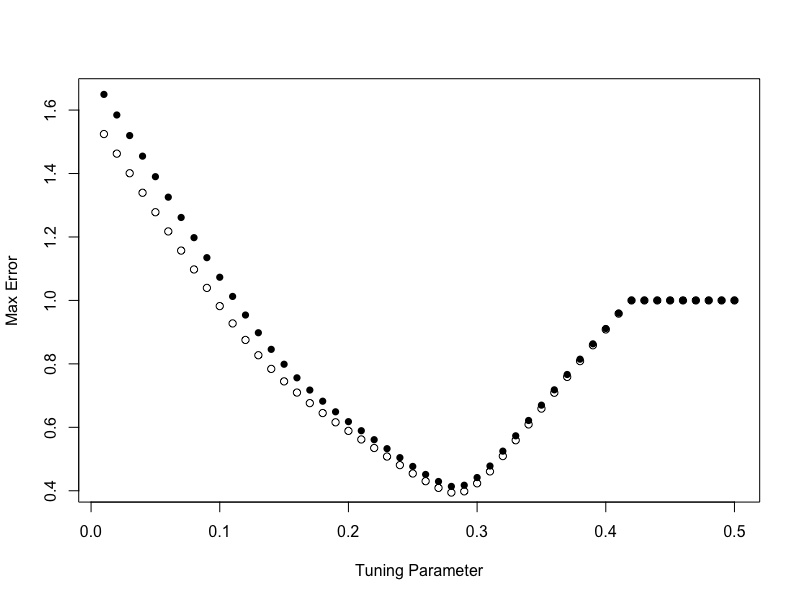}
    \caption{Left: Frobenius Error; Right: Max Error; 10,000 draws were made to estimate the true Hessian; Solid: empirical covariance used; Hollow: Shrinkage estimator used}
    \label{fig:tuningclime}
\end{figure}

From Figure \ref{fig:tuningclime}, we can see that the optimal tuning parameter in this setting appears to be around 0.3 and that using shrinkage estimates does not appear to greatly affect the error. Within each simulation, we used the cross-validation procedure detailed above across the same possible tuning parameter values. All the selected parameters fell between 0.12 and 0.16, indicating that the cross-validation procedure may lead to conservatively sparse solutions. While future work may seek to determine the effect of this on the downstream analysis, we choose to move forward with the cross-validation approach as it is likely better to have conservatively sparse solutions than artificially non-sparse solutions.
\section{Numerical Experiments}

We assessed the performance of our estimator via a number of numerical simulations with both continuous and binary data. Covariates for the $i$th individual were drawn from a matrix normal distribution with $AR(0.5)$ row and column covariance matrices. We compare the proposed method with that of \cite{xia2022statistical}. Since these authors chose not to use sample splitting, we also do not use sample splitting to make the results more comparable. The linear model for the continuous outcome is:
\[\mathbf{Y}_{i} = \mathbf{X}_i\boldsymbol{\beta} + \boldsymbol{\varepsilon}_i\]
where $\boldsymbol{\varepsilon}_i$ is drawn from a multivariate normal distribution with mean 0 and an autoregressive covariance matrix structure for both row and column matrices with parameter $\rho = 0.5$. This models dependence between covariates as well as within clusters. The error matrix, $V$, for continuous data is autoregressive with parameter equal to 0.3. Since the variance parameter is set as uniformly 1, this is also equal to the correlation matrix, $R$. The true $\beta$ vector was $(0,0.5,1,...1,0,...0)^T$ where the number of coefficients equal to 1 was either 3 or 10. The number of clusters was fixed at $K = 4$. To generate correlated clusters of binary responses, we used the function rbin() from the R package SimCorMultRes with simulated latent variables with dependence parameter $\sqrt{0.1}$. We refer the interested reader to the package documentation. 

\vspace{0.5cm}

Initial $\hat{\boldsymbol{\beta}}$ estimates were tuned using 5-fold cross-validation across 10 values of $\lambda$, chosen by the function glmnet() in the R package of the same name. CLIME estimates were tuning using 10-fold cross-validation across 5 equally-spaced values of $\lambda'$ from 0.01 to 0.5. De-sparsified estimates and the estimates of \cite{xia2022statistical} were obtained. Additionally, we employed the method of GEE variance correction in \cite{kauermann2001note} adapted to high-dimensional settings.

\section{Simulation Results}

Due to computational limitations, we omitted the HDIGEE method of \cite{xia2022statistical} for the exchangeable structure in the binary data case for $p = 100$. When $p = 500$, we generally omitted HDIGEE due to longer run times. Particularly for binary data, many simulations encountered computational difficulties and did not complete. We summarize the results of those Monte Carlo simulations that ran to completion in Tables \ref{tab:biascovlen1} - \ref{tab:biascovlen4}. Note that while the AR-1 working correlation structure is correctly specified in the continuous case, this is not necessarily true for binary data, i.e. both working correlation structures tested may be incorrectly specified. In our simulations, the estimates for the intercept term (set equal to zero in all simulations) displayed erratic behavior for our method. We thus remove them from the analysis of the simulation results. 

\vspace{0.5cm}

\noindent Bias of the proposed estimator is comparable or lower than HDIGEE in all settings where both estimators were calculated (Tables \ref{tab:biascovlen1} and \ref{tab:biascovlen3}). In $p = 100$ settings, the variance adjustment method does appear to be effective, increasing the variance and the coverage closer to the desired $95\%$ level, in some settings leading to slightly conservative intervals for the true zeroes (Table \ref{tab:biascovlen1}). In $p = 500$ settings, however, the adjustment doesn't appear to be effective, either leaving the variance largely unchanged or even slightly decreasing it (Tables \ref{tab:biascovlen2} and \ref{tab:biascovlen4}). The resulting coverage values are largely similar or even slightly smaller than the unadjusted intervals. 

\vspace{0.5cm}

\noindent In nearly all settings, coverage values tend to be higher when sparsity is higher, but the bias does not appear to change appreciably when increasing the dimension of the problem, holding other factors constant. Particularly for binary data, bias is larger and coverage is smaller for non-zero values. For $p = 500$ in binary settings, coverage values for non-zero elements ranges from $80-85\%$ while it is around $93-94\%$ for zero-values (Table \ref{tab:biascovlen4}). In the settings where HDIGEE was run, the coverage values were higher than the unadjusted intervals of our method and lower or comparable to the adjusted intervals of our method, with perhaps the exception of the binary data cases (Tables \ref{tab:biascovlen3} and \ref{tab:biascovlen4}). From these simulations, it appears that correct specification of the working correlation matrix in the continuous case leads to slight decreases in confidence interval length with comparable coverage compared to the misspecified exchangeable working correlation settings (Tables \ref{tab:biascovlen1} and \ref{tab:biascovlen2}).

\begin{table}[H]
\centering
\begin{tabular}{lrrrlrrr}
          & \multicolumn{3}{c}{Non-Zero}                                           &           & \multicolumn{3}{c}{Zeroes (no intercept)}                                           \\
          AR-1 (correct) & & &&&&& \\
          & \multicolumn{1}{l}{Bias} & \multicolumn{1}{l}{Coverage} & {Length} &           & \multicolumn{1}{l}{Bias} & \multicolumn{1}{l}{Coverage} & {Length} \\
          & \multicolumn{3}{c}{n = 100, p = 100, s = 3}                                           &           & \multicolumn{3}{c}{n = 100, p = 100, s = 3}                                           \\ \hline
dbgee     & 0.057                             & 0.85                         & 0.203                              &     & 0.047                             & 0.918                        & 0.206                              \\
dbgee.adj & 0.057                             & 0.914                        & 0.242                              &  & 0.047                             & 0.96                         & 0.245                              \\
hdigee    & 0.06                              & 0.893                        & 0.242                              &    & 0.053                             & 0.935                        & 0.249                              \\
          & \multicolumn{3}{c}{n = 100, p = 100, s = 10}                                          &           & \multicolumn{3}{c}{n = 100, p = 100, s = 10}                                          \\ \hline
dbgee     & 0.06                              & 0.809                        & 0.198                              &     & 0.046                             & 0.915                        & 0.199                              \\
dbgee.adj & 0.06                              & 0.875                        & 0.236                              &  & 0.046                             & 0.958                        & 0.237                              \\
hdigee    & 0.06                              & 0.88                         & 0.243                              &    & 0.053                             & 0.928                        & 0.244                              \\
& \multicolumn{1}{l}{}     & \multicolumn{1}{l}{}    & \multicolumn{1}{l}{}       &           & \multicolumn{1}{l}{}     & \multicolumn{1}{l}{}    & \multicolumn{1}{l}{}       \\
Exchangeable & & &&&&& \\
          & \multicolumn{3}{c}{n = 100, p = 100, s = 3}                                          &           & \multicolumn{3}{c}{n = 100, p = 100, s = 3}                                          \\ \hline
dbgee     & 0.058                             & 0.84                         & 0.207                              &     & 0.048                             & 0.916                        & 0.21                               \\
dbgee.adj & 0.058                             & 0.919                        & 0.251                              &  & 0.048                             & 0.964                        & 0.255                              \\
hdigee    & 0.061                             & 0.881                        & 0.245                              &     & 0.053                             & 0.934                        & 0.251                              \\
          & \multicolumn{3}{c}{n = 100, p = 100, s = 10}                                         &           & \multicolumn{3}{c}{n = 100, p = 100, s = 10}                                         \\\hline
dbgee     & 0.061                             & 0.806                        & 0.201                              &     & 0.046                             & 0.914                        & 0.202                              \\
dbgee.adj & 0.061                             & 0.879                        & 0.244                              &  & 0.046                             & 0.961                        & 0.245                              \\
hdigee    & 0.061                             & 0.868                        & 0.245                              &    & 0.053                             & 0.928                        & 0.246                             
\end{tabular}
\caption{Absolute bias, coverage, and $95\%$ CI length for continuous response. ``dbgee'' and ``dbgee.adj'' refer to our proposed method and our method with variance adjustment, respectively. ``hdigee'' is the method of \cite{xia2022statistical}. All values are averages over $N = 250$ Monte Carlo simulations. The left column is averaged over the truly non-zero elements and the right column is averaged over all the true zeroes, excluding the intercept term.}
\label{tab:biascovlen1}
\end{table}

\begin{table}[H]
\centering
\begin{tabular}{lrrrlrrr}
 & \multicolumn{3}{c}{Non-Zero}                                           &           & \multicolumn{3}{c}{Zeroes (no intercept)}                                           \\
 AR-1 (correct) & & &&&&& \\
          & Bias & Coverage & Length &           & \multicolumn{1}{l}{Bias} & \multicolumn{1}{l}{Coverage} & {Length} \\
     N = 249     & \multicolumn{3}{c}{n = 100, p = 500, s = 3}                      &           & \multicolumn{3}{c}{n = 100, p = 500, s = 3}                      \\ \hline
dbgee     & 0.055                    & 0.889                   & 0.219                      &     & 0.047                    & 0.938                   & 0.222                      \\
dbgee.adj & 0.055                    & 0.891                   & 0.22                       &  & 0.047                    & 0.939                   & 0.221                      \\
   N = 250         & \multicolumn{3}{c}{n = 100, p = 500, s = 10}                     &           & \multicolumn{3}{c}{n = 100, p = 500, s = 10}                     \\ \hline
dbgee     & 0.059                    & 0.853                   & 0.214                      &     & 0.046                    & 0.939                   & 0.215                      \\
dbgee.adj & 0.059                    & 0.846                   & 0.214                      & & 0.046                    & 0.939                   & 0.215                      \\
& \multicolumn{1}{l}{}     & \multicolumn{1}{l}{}    & \multicolumn{1}{l}{}       &           & \multicolumn{1}{l}{}     & \multicolumn{1}{l}{}    & \multicolumn{1}{l}{}       \\
 Exchangeable         & \multicolumn{1}{l}{}     & \multicolumn{1}{l}{}    & \multicolumn{1}{l}{}       &           & \multicolumn{1}{l}{}     & \multicolumn{1}{l}{}    & \multicolumn{1}{l}{}       \\
   N = 249         & \multicolumn{3}{c}{n = 100, p = 500, s = 3}                        &           & \multicolumn{3}{c}{n = 100, p = 500, s = 3}                        \\\hline
dbgee     & 0.056                    & 0.89                    & 0.221                      &      & 0.048                    & 0.939                   & 0.224                      \\
dbgee.adj & 0.056                    & 0.889                   & 0.223                      &  & 0.048                    & 0.939                   & 0.224                      \\
         
  N = 249          & \multicolumn{3}{c}{n = 100, p = 500, s = 10}                       &           & \multicolumn{3}{c}{n = 100, p = 500, s = 10}                       \\\hline
dbgee     & 0.06                     & 0.851                   & 0.216                      &     & 0.046                    & 0.939                   & 0.217                      \\
dbgee.adj & 0.06                     & 0.846                   & 0.216                      & & 0.046                    & 0.939                   & 0.216                      \\                  
\end{tabular}
\caption{Absolute bias, coverage, and $95\%$ CI length for continuous response. ``dbgee'' and ``dbgee.adj'' refer to our proposed method and our method with variance adjustment, respectively. Values are averages over $N$ Monte Carlo simulations as noted in the table. The left column is averaged over the truly non-zero elements and the right column is averaged over all the true zeroes, excluding the intercept term.}
\label{tab:biascovlen2}
\end{table}

\begin{table}[H]
\centering
\begin{tabular}{lrrrlrrr}
 & \multicolumn{3}{c}{Non-Zero}                                           &           & \multicolumn{3}{c}{Zeroes (no intercept)}                                           \\
 AR-1 & & &&&&& \\
          & \multicolumn{1}{l}{Bias} & \multicolumn{1}{l}{Coverage} & {Length} &           & \multicolumn{1}{l}{Bias} & \multicolumn{1}{l}{Coverage} & {Length} \\
    N = 237      & \multicolumn{3}{c}{n = 100, p = 100, s = 3}                               &           & \multicolumn{3}{c}{n = 100, p = 100, s = 3}                             \\ \hline
dbgee     & 0.128                    & 0.825                   & 0.442                      &     & 0.096                    & 0.914                   & 0.418                      \\
dbgee.adj & 0.128                    & 0.868                   & 0.482                      & & 0.096                    & 0.941                   & 0.458                      \\
hdigee    & 0.147                    & 0.845                   & 0.535                      &    & 0.121                    & 0.912                   & 0.532                      \\
   N = 86       & \multicolumn{3}{c}{n = 100, p = 100, s = 10}                              &           & \multicolumn{3}{c}{n = 100, p = 100, s = 10}                              \\\hline
dbgee     & 0.146                    & 0.786                   & 0.473                      &     & 0.107                    & 0.914                   & 0.463                      \\
dbgee.adj & 0.146                    & 0.849                   & 0.55                       &  & 0.107                    & 0.952                   & 0.537                      \\
hdigee    & 0.199                    & 0.763                   & 0.584                      &     & 0.139                    & 0.894                   & 0.579                      \\
& \multicolumn{1}{l}{}     & \multicolumn{1}{l}{}    & \multicolumn{1}{l}{}       &           & \multicolumn{1}{l}{}     & \multicolumn{1}{l}{}    & \multicolumn{1}{l}{}       \\
Exchangeable & & &&&&& \\
    N = 121      & \multicolumn{3}{c}{n = 100, p = 100, s = 3}                           &           & \multicolumn{3}{c}{n = 100, p = 100, s = 3}                           \\\hline
dbgee     & 0.123                    & 0.804                   & 0.411                      &      & 0.089                    & 0.918                   & 0.389                      \\
dbgee.adj & 0.123                    & 0.849                   & 0.451                      &  & 0.089                    & 0.941                   & 0.429                      \\
    N = 215      & \multicolumn{3}{c}{n = 100, p = 100, s = 10}                          &           & \multicolumn{3}{c}{n = 100, p = 100, s = 10}                          \\\hline
dbgee     & 0.133                    & 0.811                   & 0.442                      &     & 0.1                      & 0.914                   & 0.434                      \\
dbgee.adj & 0.133                    & 0.865                   & 0.503                      &  & 0.1                      & 0.948                   & 0.494                      \\                   
\end{tabular}
\caption{Absolute bias, coverage, and $95\%$ CI length for binary response. ``dbgee'' and ``dbgee.adj'' refer to our proposed method and our method with variance adjustment, respectively. ``hdigee'' is the method of \cite{xia2022statistical}. Values are averages over $N$ Monte Carlo simulations as noted in the table. The left column is averaged over the truly non-zero elements and the right column is averaged over all the true zeroes, excluding the intercept term.}
\label{tab:biascovlen3}
\end{table}

\begin{table}[H]
\centering
\begin{tabular}{lrrrlrrr}
 & \multicolumn{3}{c}{Non-Zero}                                           &           & \multicolumn{3}{c}{Zeroes (no intercept)}                                           \\
 AR-1 & & &&&&& \\
          & \multicolumn{1}{l}{Bias} & \multicolumn{1}{l}{Coverage} & {Length} &           & \multicolumn{1}{l}{Bias} & \multicolumn{1}{l}{Coverage} & {Length} \\
  N = 238        & \multicolumn{3}{c}{n = 100, p = 500, s = 3}                               &           & \multicolumn{3}{c}{n = 100, p = 500, s = 3}                               \\ \hline
dbgee     & 0.13                     & 0.854                   & 0.467                      &     & 0.096                    & 0.937                   & 0.446                      \\
dbgee.adj & 0.13                     & 0.854                   & 0.467                      &  & 0.096                    & 0.937                   & 0.446                      \\

  N = 226        & \multicolumn{3}{c}{n = 100, p = 500, s = 10}                              &           & \multicolumn{3}{c}{n = 100, p = 500, s = 10}                              \\\hline
dbgee     & 0.144                    & 0.818                   & 0.485                      &     & 0.103                    & 0.934                   & 0.476                      \\
dbgee.adj & 0.144                    & 0.818                   & 0.485                      &  & 0.103                    & 0.934                   & 0.476                      \\
          & \multicolumn{1}{l}{}     & \multicolumn{1}{l}{}    & \multicolumn{1}{l}{}       &           & \multicolumn{1}{l}{}     & \multicolumn{1}{l}{}    & \multicolumn{1}{l}{}       \\
          Exchangeable & & &&&&& \\
  N = 230        & \multicolumn{3}{c}{n = 100, p = 500, s = 3}                           &           & \multicolumn{3}{c}{n = 100, p = 500, s = 3}                           \\\hline
dbgee     & 0.127                    & 0.836                   & 0.43                       &      & 0.088                    & 0.937                   & 0.409                      \\
dbgee.adj & 0.127                    & 0.836                   & 0.43                       &  & 0.088                    & 0.937                   & 0.409                      \\
   
  N = 200        & \multicolumn{3}{c}{n = 100, p = 500, s = 10}                          &           & \multicolumn{3}{c}{n = 100, p = 500, s = 10}                          \\\hline
dbgee     & 0.14                     & 0.79                    & 0.448                      &      & 0.095                    & 0.932                   & 0.439                      \\
dbgee.adj & 0.14                     & 0.79                    & 0.448                      & & 0.095                    & 0.932                   & 0.439                      \\                 
\end{tabular}
\caption{Absolute bias, coverage, and $95\%$ CI length for binary response. ``dbgee'' and ``dbgee.adj'' refer to our proposed method and our method with variance adjustment, respectively. Values are averages over $N$ Monte Carlo simulations as noted in the table. The left column is averaged over the truly non-zero elements and the right column is averaged over all the true zeroes, excluding the intercept term.}
\label{tab:biascovlen4}
\end{table}
\section{Data Application}
We illustrate the use of our method with the \emph{Bacillus subtilis} riboflavin (vitamin B$_2$) data set made publicly available by \cite{buhlmann2014high}. There are $n = 28$ clusters corresponding to distinct strains of the bacterium with two to six observations per cluster for a total of $111$ measurements of the logarithm of riboflavin production. A total of $p = 4088$ gene expression measurements are available per observation. After filtering out those genes whose expression had a coefficient of variation less than $0.1$ (\cite{xia2022statistical}), there remain $p = 267$ genes.

\vspace{0.5cm}

\noindent We chose to avoid sample splitting for the analysis of this data set for several reasons. Firstly, as in our simulation study, we wish to preserve comparability of the results of our method with those of \cite{xia2022statistical}, who did not use sample splitting. Further, the authors of that work did not find that sample splitting made an appreciable difference to their results. Due to the similarities between our method and theirs, we suspect that sample splitting is unnecessary and may lead to potential unwanted consequences of dividing the sample and thereby increasing the aspect ratio. We use an autoregressive working correlation structure for the analysis. We omit the variance correction for the analysis since the simulation studies indicate that this correction may do more harm than good in a model of such high-dimensions.

%We chose to complete several versions of this analysis: 1) Filtering and using AR working correlation 2) Filtering and using independent WC 3) Not filtering using AR working correlation
\begin{figure}[H]
    \centering
    \includegraphics[scale = 0.15]{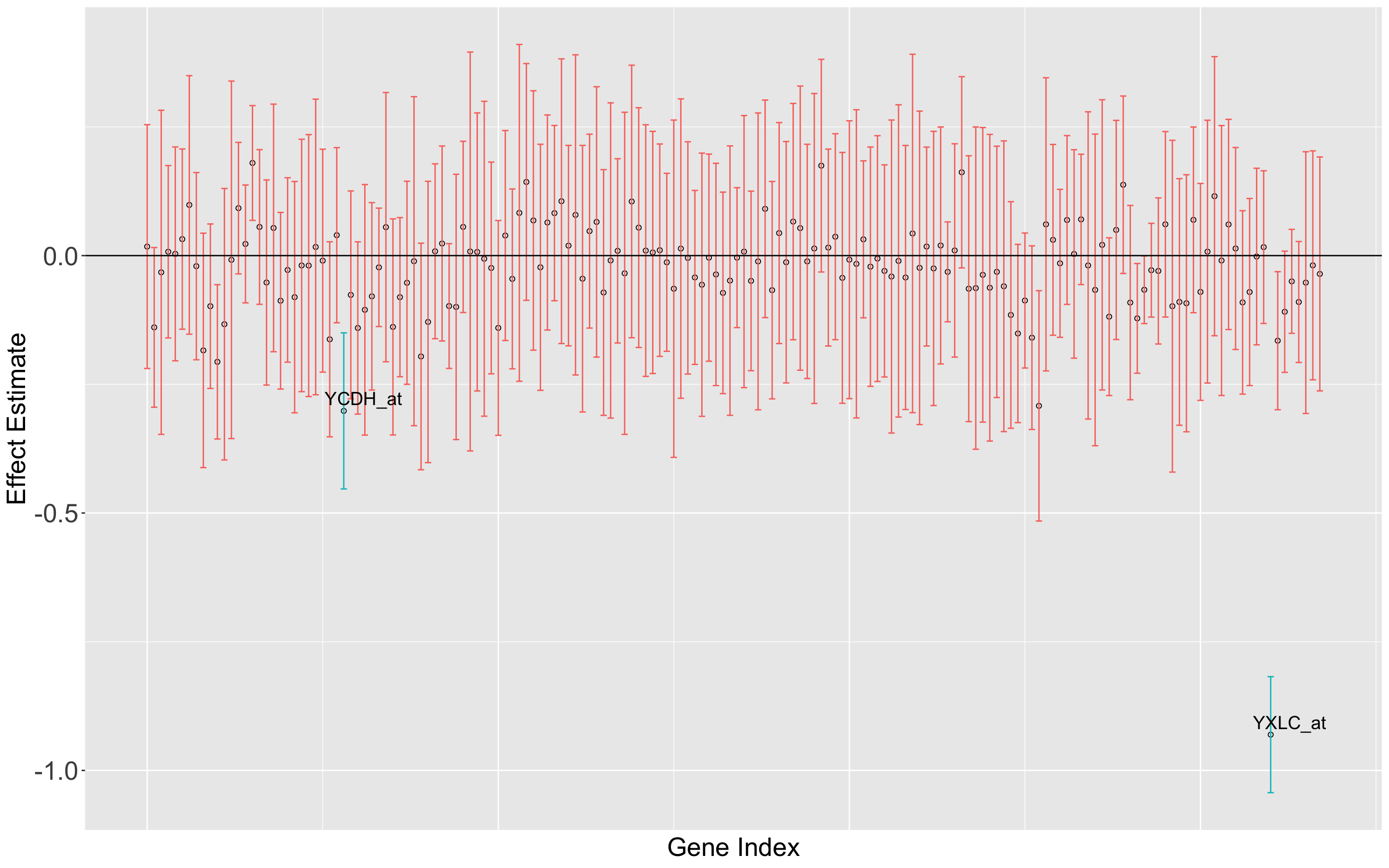}
    \caption{{\bf Confidence Intervals for 167 of 267 genes using an AR working correlation.} The genes with significant Benjamini-Hochberg multiple-testing corrected p-value at a level of 0.05 are highlighted with gene labels next to their confidence intervals.}
    \label{fig:my_labelv7}
\end{figure}

\noindent Using an AR working correlation structure, the proposed method found two genes, \emph{YCDH} and \emph{YXLC}, significant at a 0.05 level (adj.p = $1.3 \times 10^{-2}$ and $1.69 \times 10^{56}$, respectively) using the method of Benjamini and Hochberg to adjust the resulting p-values (Figure \ref{fig:my_labelv7}). The gene \emph{YCDH} codes for a flavin mononucleotide reductase in other bacterial species (\cite{chaston2014metagenome}), potentially implying a related function in \emph{B. subtilis}. The gene \emph{YXLC} is a component of the \emph{sigY} operon in \emph{B. subtilis}, which is involved in making sigma proteins important for the initiation of transcription. This operon can be stimulated by nitrogen starvation (\cite{tojo2003organization}). As riboflavin is a nitrogen containing molecule, this may in part explain a connection between riboflavin production and the \emph{sigY} operon. Other genes involved in this operon are consistently found in other analyses of this data set (\cite{xia2022statistical} and \cite{buhlmann2014high}). \cite{xia2022statistical} performed a similar analysis using their method and the quadratic inference function method of \cite{fang2020test}, finding 2 and 13 significant genes, respectively. We believe that our results complement theirs, providing further evidence for the association of the \emph{sigY} operon with riboflavin production in \emph{B. subtilis}.

\section{Discussion}
In this work, we've introduced a method by which high-dimensional longitudinal data can be effectively analyzed via de-sparsifying of initial Dantzig-selector type regression estimators. Under mild regularity conditions, our estimator has been shown to be asymptotically normally distributed with $\sqrt{n}$-convergence. Additionally, when the working covariance matrix is correctly specified, our estimator asymptotically achieves a semi-parameteric efficiency bound as defined in \cite{jankova2018semiparametric}. We believe that this work represents an important addition to the literature both on generalized estimating equations and high-dimensional inference. For purposes of exposition, our focus was the linear model and we discussed how the theory extends naturally to a generalized linear model. Numerical studies show that our method achieves coverage rates and confidence interval lengths comparable to other alternatives in the literature. Finally, we've shown that our estimator achieves a semiparametric efficiency bound. In this way, among all estimators that converge asymptotically locally around the true parameter value to a Gaussian distribution, our estimator attains the smallest possible variance. 

\vspace{0.5cm}

\noindent Further research directions include investigating the data setting where the cluster size is growing with $n$ as well. There are key points in the theory depending on a fixed cluster size $K$. It will be instructive to determine the rate requirements needed to maintain valid inference. Future extensions should also include performing inference on linear combinations of parameters, i.e. $\xi^T \boldsymbol{\beta}$ for $\xi \in \mathbb{R}^p$ as well as other functions of the parameter vector, e.g. $\norm{\boldsymbol{\beta}}_2$.
%Following the proof in \cite{xia2022statistical}, 

\section{Acknowledgments}
The authors would like to thank Dr. Lu Xia for generously assisting with the implementation of their method.
%\bibliographystyle{natbib}
%\bibliography{second_paper}
%\appendix
%\setcounter{section}{0}
%\renewcommand{\thesection}{\Alph{section}}

\bibliography{references}
\bibliographystyle{apalike}

\appendix
\section{Proofs}

\subsection{Proof of Theorem \ref{theorem:cltforunknownv}}
\begin{theorem}
    Under Assumptions \ref{ass:sparsity}, \ref{ass:subgauss}, \ref{ass:eigen}, \ref{ass:vstarbound}, $\lambda \propto \sqrt{\log{p}/n}$, and $\lambda' \propto \norm{v}_1\sqrt{\log p/n}$, the statistic $\hat{U}_n := n^{1/2}\hat{\Delta}^{-1/2}(\hat{T}_n - \boldsymbol{\beta}_j)$ satisfies $\forall t \in \mathbb{R}$:
    \[\lim_{n \to \infty} |\mathbb{P}(\hat{U}_n \leq t) - \Phi(t)| = 0\] where \[\hat{\Delta} = \hat{v}^{T}\frac{1}{n}\sum_{i=1}^n\left[\frac{1}{K} \mathbf{X}_i^T\hat{V}_n^{-1}\left[\frac{1}{n}\sum_{i=1}^n\boldsymbol{\hat{\varepsilon}}_i\boldsymbol{\hat{\varepsilon}}_i^T\right]\hat{V}_n^{-1}\mathbf{X}_i\right]\hat{v}\]
and $\hat{V}_n$ is calculated from an independent sample of size $n$.

\begin{proof}

% \begin{align*}
%     \mathbb{P}\left(\hat{U}_n \leq t\right) &= \mathbb{P}\left(\hat{U}_n \leq t|\norm{\hat{V}_n - \Tilde{V}}_2 = O\left(\sqrt{\frac{s\log(p)}{n}}\right)\right)\mathbb{P}\left(\norm{\hat{V}_n - \Tilde{V}}_2 = O\left(\sqrt{\frac{s\log(p)}{n}}\right)\right) \\
%     &+ 
%     \mathbb{P}\left(\hat{U}_n \leq t|\norm{\hat{V}_n - \Tilde{V}}_2 \neq O\left(\sqrt{\frac{s\log(p)}{n}}\right)\right)\mathbb{P}\left(\norm{\hat{V}_n - \Tilde{V}}_2 \neq O\left(\sqrt{\frac{s\log(p)}{n}}\right)\right)\\
%     &\overset{n \to \infty}{=} \mathbb{P}\left(\hat{U}_n \leq t| \hat{V}_n: \norm{\hat{V}_n - \Tilde{V}}_2 = O\left(\sqrt{\frac{s\log(p)}{n}}\right)\right)
% \end{align*}
% We state that as the sample size and dimension grow, it becomes overwhelmingly likely that we draw a sequence of data points such that the operator norm of $\hat{V}_n$ converges to that of $\Tilde{V}$ at the specified rate. 

Let $\Tilde{T}_n$ be the estimator constructed using $\Tilde{V}$ instead of the estimated $\hat{V}_n$. Using the fact that
\begin{align*}
   n^{1/2}(\hat{T}_n - \boldsymbol{\beta}_j) &= n^{1/2}(\Tilde{T}_n - \boldsymbol{\beta}_j) + n^{1/2}(\hat{T}_n - \Tilde{T}_n )\\
   &= n^{1/2}(\Tilde{T}_n - \boldsymbol{\beta}_j) + o_p(1)
\end{align*}
by Theorem \ref{theorem:intermediatethm}, we argue that $n^{1/2}(\Tilde{T}_n - \boldsymbol{\beta}_j)$ converges to a normal distribution with variance $\Delta$. From Lemma \ref{lemma:Tn} we have that $n^{1/2}(\hat{T}_n - \Tilde{T}_n ) = o_p(1)$. After showing that $\hat{\Delta}$ is a consistent estimator for $\Delta$ (Lemma \ref{lemma:deltaconsistentknownv}.), our desired result then follows in a standard way from the continuous mapping theorem and Slutsky's Theorem.
\end{proof}
\end{theorem}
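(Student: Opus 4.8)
The plan is to exploit the fact that the defining relation \eqref{eqn:theestimate} is \emph{linear} in $\hat{T}_n$, so that $\hat{T}_n$ has a closed form, and then to show that the regularization bias carried by the initial estimator $\hat{\boldsymbol{\beta}}$ is annihilated by the projection vector $\hat{v}$. First I would write $\mathbf{X}_i\hat{\boldsymbol{\beta}}_{\hat{T}_n}(j) = \mathbf{X}_i\hat{\boldsymbol{\beta}} + (\hat{T}_n - \hat{\beta}_j)\mathbf{X}_i e_j$, substitute into \eqref{eqn:theestimate}, and solve, obtaining
\[
\hat{T}_n - \hat{\beta}_j = \frac{\frac{1}{nK}\sum_{i=1}^n \hat{v}^{\top}\mathbf{X}_i^{\top}\hat{V}_n^{-1}\hat{\boldsymbol{\varepsilon}}_i}{\hat{v}^{\top}\hat{M}e_j}, \qquad \hat{M} := \frac{1}{nK}\sum_{i=1}^n \mathbf{X}_i^{\top}\hat{V}_n^{-1}\mathbf{X}_i,
\]
where $\hat{\boldsymbol{\varepsilon}}_i = \mathbf{Y}_i - \mathbf{X}_i\hat{\boldsymbol{\beta}}$. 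Substituting $\mathbf{Y}_i = \mathbf{X}_i\boldsymbol{\beta}+\boldsymbol{\varepsilon}_i$ gives $\hat{\boldsymbol{\varepsilon}}_i = \boldsymbol{\varepsilon}_i - \mathbf{X}_i(\hat{\boldsymbol{\beta}}-\boldsymbol{\beta})$, so the numerator splits into a mean-zero stochastic part $G := \frac{1}{nK}\sum_i \hat{v}^{\top}\mathbf{X}_i^{\top}\hat{V}_n^{-1}\boldsymbol{\varepsilon}_i$ and a bias part $-\hat{v}^{\top}\hat{M}(\hat{\boldsymbol{\beta}}-\boldsymbol{\beta})$.

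The crux is the cancellation of the bias part. Because $\hat{V}_n^{-1}$ is symmetric, $\hat{M}$ is symmetric, and feasibility of the CLIME program \eqref{eqn:CLIME} gives $\hat{v}^{\top}\hat{M} = (\hat{M}\hat{v})^{\top} = e_j^{\top} + \rho^{\top}$ with $\rho := \hat{M}\hat{v} - e_j$ and $\|\rho\|_{\infty}\le\lambda'$. Hence the bias part equals $-(\hat{\beta}_j-\beta_j) - \rho^{\top}(\hat{\boldsymbol{\beta}}-\boldsymbol{\beta})$ and the denominator equals $\hat{v}^{\top}\hat{M}e_j = 1+\rho_j = 1 + O_p(\lambda')$. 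Adding back $\hat{\beta}_j-\beta_j$ and scaling, the explicit $\hat{\beta}_j$ is matched exactly by the $-(\hat{\beta}_j-\beta_j)$ coming from the numerator, and a short rearrangement yields
\[
\sqrt{n}\,(\hat{T}_n - \beta_j) = \frac{1}{\hat{v}^{\top}\hat{M}e_j}\left[\frac{1}{\sqrt{n}\,K}\sum_{i=1}^n \hat{v}^{\top}\mathbf{X}_i^{\top}\hat{V}_n^{-1}\boldsymbol{\varepsilon}_i \;-\; \sqrt{n}\,\rho^{\top}(\hat{\boldsymbol{\beta}}-\boldsymbol{\beta}) \;+\; \sqrt{n}\,(\hat{\beta}_j-\beta_j)\,\rho_j\right].
\]
I expect the remainder $\sqrt{n}\,\rho^{\top}(\hat{\boldsymbol{\beta}}-\boldsymbol{\beta})$ to be the main obstacle: by H\"older it is bounded by $\sqrt{n}\,\lambda'\|\hat{\boldsymbol{\beta}}-\boldsymbol{\beta}\|_1$, so I must combine the Dantzig $\ell_1$ rate $\|\hat{\boldsymbol{\beta}}-\boldsymbol{\beta}\|_1 = O_p(s\sqrt{\log p/n})$ (established from feasibility together with a cone/restricted-eigenvalue argument under Assumptions \ref{ass:subgauss}--\ref{ass:eigen}) with $\lambda'\asymp\|v\|_1\sqrt{\log p/n}$ and Assumptions \ref{ass:sparsity} and \ref{ass:vstarbound} to conclude $\sqrt{n}\,\lambda'\|\hat{\boldsymbol{\beta}}-\boldsymbol{\beta}\|_1 = O_p\!\big(s\|v\|_1\log p/\sqrt{n}\big) = o_p(1)$; the last bracketed term is $O_p(\sqrt{n}\,\lambda'\|\hat{\boldsymbol{\beta}}-\boldsymbol{\beta}\|_\infty)$ and is negligible by the same rates.

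It then remains to establish the CLT for the leading sum and to replace the estimated nuisances by their limits, as announced in the excerpt via the intermediate statistic $\widetilde{T}_n$ built from $\widetilde{V}$ and the deterministic $v$. Because $\hat{V}_n$ is computed on an independent split it is independent of the $(\mathbf{X}_i,\boldsymbol{\varepsilon}_i)$ entering $G$, which together with the $\ell_1$-consistency of $\hat{v}$ toward $v$ and of $\hat{V}_n$ toward $\widetilde{V}$ lets me replace both nuisances by their limits at cost $o_p(1)$ after scaling, i.e.\ $\sqrt{n}(\hat{T}_n-\widetilde{T}_n)=o_p(1)$, and similarly $\hat{v}^{\top}\hat{M}e_j\to_p 1$. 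For $\widetilde{T}_n$ the leading term is $n^{-1/2}K^{-1}\sum_i v^{\top}\mathbf{X}_i^{\top}\widetilde{V}^{-1}\boldsymbol{\varepsilon}_i$, a normalized sum of i.i.d.\ mean-zero variables (using $\boldsymbol{\varepsilon}_i\perp\mathbf{X}_i$) with variance $\Delta = K^{-2}\,v^{\top}\mathbb{E}[\mathbf{X}_i^{\top}\widetilde{V}^{-1}V\widetilde{V}^{-1}\mathbf{X}_i]\,v$; subgaussianity (Assumption \ref{ass:subgauss}) and the eigenvalue bounds (Assumption \ref{ass:eigen}) verify the Lindeberg condition, giving $\sqrt{n}(\widetilde{T}_n-\beta_j)\overset{d}{\to}N(0,\Delta)$ by the Lindeberg--Feller CLT. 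Finally I would prove $\hat{\Delta}\to_p\Delta$ by substituting $\hat{\boldsymbol{\varepsilon}}_i$ for $\boldsymbol{\varepsilon}_i$ in the sandwich and invoking consistency of $\hat{v}$ and $\hat{V}_n$, and then conclude $\hat{U}_n\overset{d}{\to}N(0,1)$, i.e.\ $\lim_{n}|\mathbb{P}(\hat{U}_n\le t)-\Phi(t)|=0$ for every $t$, by Slutsky's theorem and the continuous mapping theorem.
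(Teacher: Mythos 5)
Your proposal is correct and, at bottom, follows the same plan as the paper: pass to the intermediate statistic $\widetilde{T}_n$ built from the limiting $\widetilde{V}$ and the population $v$, show $\sqrt{n}(\hat{T}_n-\widetilde{T}_n)=o_p(1)$, prove a CLT for the leading i.i.d.\ sum with variance $\Delta$, establish $\hat{\Delta}\to_p\Delta$, and finish with Slutsky and the continuous mapping theorem. Where you differ is in how the bias removal is organized. The paper never solves for $\hat{T}_n$ explicitly; it verifies the six conditions of the general framework of \cite{neykov2018unified} (concentration rates, $\ell_1$ rates of the initial estimators, consistency, a CLT, a bounded Jacobian, and the scaling condition $\sqrt{n}(r_4\times r_3 + r_5\times r_1)=o(1)$) and lets that framework deliver asymptotic linearity. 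You instead exploit the linearity of the estimating equation in $T$ to write $\hat{T}_n$ in closed form and carry out the cancellation by hand: CLIME feasibility $\norm{\hat{M}\hat{v}-e_j}_\infty\le\lambda'$, H\"older, and the Dantzig rate $\norm{\bh-\bs}_1=O_p(s\sqrt{\log p/n})$ give $\sqrt{n}\,\rho^{\top}(\bh-\bs)=O_p(s\norm{v}_1\log p/\sqrt{n})=o_p(1)$, which is exactly the content of the paper's scaling condition and of its Lemma on $\sqrt{n}(\widetilde{T}_n-\hat{T}_n)$. Your version is more self-contained and makes the de-sparsification mechanism transparent; the paper's version buys reusable lemmas for the GLM extension of Section \ref{section:extension}. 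Two small remarks: (i) the symmetry of $\hat{M}$, which you use to convert the row constraint on $\hat{v}$ into the column identity $\hat{v}^{\top}\hat{M}e_j=1+\rho_j$, does hold here because $\hat{V}_n^{-1}$ is symmetric, but it deserves an explicit mention since it is the step that changes in the non-linear case; (ii) your normalization $\Delta=K^{-2}\,v^{\top}\mathbb{E}[\mathbf{X}_i^{\top}\widetilde{V}^{-1}V\widetilde{V}^{-1}\mathbf{X}_i]\,v$ is internally consistent with the $1/(nK)$ scaling of the estimating equation, whereas the paper's displays alternate between $K^{-1}$ and $K^{-2}$; this affects only constants and not the studentized limit $\hat{U}_n\overset{d}{\to}N(0,1)$.
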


\begin{theorem}
    \label{theorem:intermediatethm}
     Under Assumptions \ref{ass:sparsity}, \ref{ass:subgauss}, \ref{ass:eigen}, \ref{ass:vstarbound}, $\lambda \propto \sqrt{\log{p}/n}$, and $\lambda' \propto \sqrt{\log p/n}$, 
    \begin{equation}
        n^{1/2}(\Tilde{T}_n - \boldsymbol{\beta}_j) \overset{d}{\to} N(0, \Delta)
    \end{equation}
    where \[{\Delta} = {v}^{T}\mathbb{E}\left[K^{-1} \mathbf{X}_i^T\Tilde{V}^{-1}V\Tilde{V}^{-1}\mathbf{X}_i\right]{v}\]
Note that while $\hat{\boldsymbol{\beta}}$ is the same estimator as when we estimate the covariance matrix, our estimated projection vector will now be different. We will denote this estimated vector as $\Tilde{v}$.
\begin{proof}
This proof follows by verifying the following six conditions from \cite{neykov2018unified}: 

\begin{enumerate}
    \item \label{cond:conc} (Concentration Rates) For all $T$ in an $\boldsymbol{\varepsilon}$-ball around $\boldsymbol{\beta}_j$, $\mathcal{B}_{\boldsymbol{\varepsilon}}(\boldsymbol{\beta}_j)$,:
    \begin{align}
        \norm{\frac{1}{nK}\sum_{i=1}^n \mathbf{X}_i^T\Tilde{V}^{-1}(\mathbf{X}_i\bs_{T}-\mathbf{Y}_{i})- \mathbb{E}\left[\frac{1}{K}\mathbf{X}_i^T\Tilde{V}^{-1}(\mathbf{X}_i\bs_{T}-\mathbf{Y}_{i})\right]}_{\infty} &= O_p(\sqrt{\log p/n}) = o_p(1) \label{rate:r1}\\
        \norm{v^T\frac{1}{nK}\sum_{i=1}^n \mathbf{X}_i^T\Tilde{V}^{-1}(\mathbf{X}_i\bs_{T}-\mathbf{Y}_{i})- v^T\mathbb{E}\left[\frac{1}{K}\mathbf{X}_i^T\Tilde{V}^{-1}(\mathbf{X}_i\bs_{T}-\mathbf{Y}_{i})\right]}_{\infty} &=\\  O_p(\norm{v}_1\sqrt{\log p/n}) &= o_p(1) \nn \label{rate:r2}\\
        \norm{\Tilde{v}^T\frac{1}{nK}\sum_{i = 1}^n \mathbf{X}_i^T{\Tilde{V}}^{-1}\mathbf{X}_i - e_j}_{\infty} &= \\ O_p(\norm{v}_1\sqrt{\log p/n}) &= o_p(1) \nn \label{rate:r3}\\
        \sup_{T \in \mathcal{B}_{\boldsymbol{\varepsilon}}(\boldsymbol{\beta}_j)}\norm{v^T\left[\mathbb{E}\left[K^{-1}\mathbf{X}_i^T{\Tilde{V}}^{-1}\mathbf{X}_i\right]\right]_{-1}}_{\infty} &< \infty \\
        \sup_{T \in \mathcal{B}_{\boldsymbol{\varepsilon}}(\boldsymbol{\beta}_j)}\norm{\mathbb{E}\left[K^{-1}\mathbf{X}_i^T\Tilde{V}^{-1}(\mathbf{X}_i\bs_{T} - \mathbf{Y}_{i})\right]}_{\infty} &< \infty
    \end{align}
\item \label{cond:l1rates}($L_1$-rates of Initial Estimators)
\begin{align}
    \norm{\bh -\bs}_1 &= O_p\left(s\sqrt{\frac{\log(p)}{n}}\right) \label{rate:r4}\\
    \norm{\Tilde{v} - v}_1 &= O_p\left(\norm{v}_1s_v\sqrt{\frac{\log(p)}{n}}\right) \label{rate:r5}
\end{align}
\item \label{cond:consist} (Consistency of $\Tilde{T}_n$):
\begin{align}
    \Tilde{T}_n \overset{P}{\to}\boldsymbol{\beta}_j
\end{align}

\item \label{cond:clt} (Central Limit Theorem) Assuming $\Tilde{V}_{ii} \geq C_{\boldsymbol{\varepsilon}} > 0$ for $i = 1...K$, $\mathbf{X}_{ij}\perp \boldsymbol{\varepsilon}_i$, and $\lambda_{\min}(\mathbb{E}[\mathbf{X}_i^T\Tilde{V}^{-1}\mathbf{X}_i]) = \xi > 0$ for a fixed constant $\xi$. Assume also that \[\max(s_v, s)\norm{v}_1\log p/\sqrt{n} = o_p(1)\] Then 

\begin{equation}
n^{1/2}\Delta^{-1/2}\left[v^T\sum_{i = 1}^n\mathbf{X}_i^T\Tilde{V}^{-1}\boldsymbol{\varepsilon}_i\right] \overset{d}{\to} N(0,1)
\end{equation}
\item \label{cond:boundedjack}(Bounded Jacobian) There exists a constant $\gamma > 0 $, such that
\begin{equation}
    \left| v^T\frac{\partial}{\partial T} \left[\frac{1}{nK}\sum_{i = 1}^n \mathbf{X}_i^T{\Tilde{V}}^{-1}\mathbf{X}_i\right]_{\cdot 1}\right| < \psi(\mathbb{X}, \mathbf{Y})
\end{equation}
for any $v$ and $\boldsymbol{\beta}$ satisfying $\norm{v - v}_1 < \gamma$ and $\norm{\hat{\boldsymbol{\beta}} - \boldsymbol{\beta}}_1 < \gamma$, where $\psi: \mathbb{R}^{n\times p} \to \mathbb{R}$ is an integrable function $\mathbb{E}[\psi(\mathbb{X}, \mathbf{Y})] < \infty$ under the true data-generating distribution.

\item (Scaling) \label{cond:scaling} Let $r_1$, $r_3$, $r_4$, and $r_5$ be the rates associated with Equations \ref{rate:r1}, \ref{rate:r3}, \ref{rate:r4}, and \ref{rate:r5}, respectively. It follows that:

\begin{equation}
    \sqrt{n}(r_4 \times r_3 + r_5\times r_1) = o(1)
\end{equation}
\end{enumerate}

Condition \ref{cond:conc} is shown in Lemma \ref{lemma:assumption1a}, Corollary \ref{corollary:assumption1b}, and Remark \ref{remark:assumption1ce}. Condition \ref{cond:l1rates} is shown in Theorem \ref{theorem:l1rates} and Corollary \ref{corollary:vl1rate}. Condition \ref{cond:consist} is shown in Lemma \ref{lemma:consistencyconstantv}. Condition \ref{cond:clt} is shown in Theorem \ref{lemma:cltconstantv}. Condition \ref{cond:boundedjack} holds for any linear estimating equation and we note that this condition would need to be checked for general linear models. Condition \ref{cond:scaling} is shown in Lemma \ref{lemma:scaling}. 
\end{proof}
\end{theorem}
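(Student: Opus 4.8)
The estimating equation \eqref{eqn:theestimate}, with $\widetilde{V}$ substituted for $\hat{V}_n$, is affine in the scalar argument $T$, so my plan is to solve for the root $\widetilde{T}_n$ in closed form and perform a one-step expansion. Writing $\hat{\boldsymbol{\beta}}_T(j) = \hat{\boldsymbol{\beta}} + (T-\hat\beta_j)e_j$ and using $\mathbf{Y}_i-\mathbf{X}_i\hat{\boldsymbol{\beta}} = \mathbf{X}_i(\boldsymbol{\beta}-\hat{\boldsymbol{\beta}})+\boldsymbol{\varepsilon}_i$, the root is $\widetilde{T}_n = \hat\beta_j + \hat M^{-1}[(\beta_j-\hat\beta_j)+\widehat{\Pi}(\boldsymbol{\beta}-\hat{\boldsymbol{\beta}})+S]$, where I set $S:=\widetilde{v}^{\top}\frac{1}{nK}\sum_{i=1}^n\mathbf{X}_i^{\top}\widetilde{V}^{-1}\boldsymbol{\varepsilon}_i$, $\hat M:=\widetilde{v}^{\top}\frac{1}{nK}\sum_{i=1}^n\mathbf{X}_i^{\top}\widetilde{V}^{-1}\mathbf{X}_i^{j}$, and the projection defect $\widehat{\Pi}:=\widetilde{v}^{\top}\frac{1}{nK}\sum_{i=1}^n\mathbf{X}_i^{\top}\widetilde{V}^{-1}\mathbf{X}_i-e_j^{\top}$ (so that $\hat M=1+(\widehat{\Pi})_j$). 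Subtracting $\beta_j$ and rescaling yields
\[n^{1/2}(\widetilde{T}_n-\beta_j) = \frac{n^{1/2}S}{\hat M} + \frac{n^{1/2}\widehat{\Pi}(\boldsymbol{\beta}-\hat{\boldsymbol{\beta}})}{\hat M} - \frac{(\widehat{\Pi})_j}{\hat M}\,n^{1/2}(\beta_j-\hat\beta_j),\]
in which the $e_j$-component of the expansion has cancelled against the base offset $\hat\beta_j-\beta_j$, leaving one leading linear term and two genuinely higher-order remainders. Since $\widehat{\Pi}\to 0$ forces $\hat M\to 1$, the problem reduces to handling these three pieces.

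For the leading term I would first replace $\widetilde{v}$ by its target $v$: the difference is bounded by $n^{1/2}\|\widetilde{v}-v\|_1\,\|\frac{1}{nK}\sum_i\mathbf{X}_i^{\top}\widetilde{V}^{-1}\boldsymbol{\varepsilon}_i\|_\infty$, which by the $L_1$ rate \eqref{rate:r5} and the concentration rate \eqref{rate:r1} evaluated at $T=\beta_j$ (where $\mathbf{X}_i\bs_{T}-\mathbf{Y}_i=-\boldsymbol{\varepsilon}_i$) is $O_p(\|v\|_1 s_v\log p/\sqrt{n})=o_p(1)$. What remains is $n^{-1/2}\sum_{i=1}^n K^{-1}v^{\top}\mathbf{X}_i^{\top}\widetilde{V}^{-1}\boldsymbol{\varepsilon}_i$, a sum of independent summands that are mean zero because $\boldsymbol{\varepsilon}_i\perp\mathbf{X}_i$, and whose per-summand second moment reproduces the stated $\Delta$. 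Because $p$ and $v$ both grow with $n$, this is a triangular-array situation, so I would invoke a Lindeberg--Feller CLT; its Lindeberg condition follows from the subgaussianity of Assumption \ref{ass:subgauss} together with the eigenvalue and $\ell_1$ bounds of Assumptions \ref{ass:eigen} and \ref{ass:vstarbound}, which also keep $\Delta$ bounded away from $0$ and $\infty$. This is precisely Condition \ref{cond:clt}, established in Theorem \ref{lemma:cltconstantv}.

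It then remains to show the two remainders vanish. Bounding $|\widehat{\Pi}(\boldsymbol{\beta}-\hat{\boldsymbol{\beta}})|\le\|\widehat{\Pi}\|_\infty\|\hat{\boldsymbol{\beta}}-\boldsymbol{\beta}\|_1$ and $|(\widehat{\Pi})_j(\beta_j-\hat\beta_j)|\le\|\widehat{\Pi}\|_\infty\|\hat{\boldsymbol{\beta}}-\boldsymbol{\beta}\|_1$, then inserting the rate \eqref{rate:r3} for $\|\widehat{\Pi}\|_\infty=O_p(\|v\|_1\sqrt{\log p/n})$ and the rate \eqref{rate:r4} for $\|\hat{\boldsymbol{\beta}}-\boldsymbol{\beta}\|_1=O_p(s\sqrt{\log p/n})$, both remainders are $n^{1/2}O_p(\|v\|_1 s\log p/n)$. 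Together with the $\widetilde{v}\to v$ error above, this is exactly the scaling requirement $\sqrt{n}(r_4 r_3+r_5 r_1)=o(1)$ of Condition \ref{cond:scaling}, which under $\|v\|_1=O(1)$ (Assumption \ref{ass:vstarbound}) collapses to $\max\{s,s_v\}\log p/\sqrt{n}=o(1)$, i.e.\ Assumption \ref{ass:sparsity}.

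Conceptually the whole argument amounts to verifying the six conditions of \cite{neykov2018unified}: the concentration rates (Condition \ref{cond:conc}, via Lemma \ref{lemma:assumption1a}), the initial $L_1$ rates (Condition \ref{cond:l1rates}, via Theorem \ref{theorem:l1rates}), consistency of $\widetilde{T}_n$ (Condition \ref{cond:consist}), the CLT (Condition \ref{cond:clt}), the trivially bounded Jacobian for a linear estimating equation (Condition \ref{cond:boundedjack}), and the scaling (Condition \ref{cond:scaling}). The main obstacle is the last of these: the two cross-terms are products of two separately slow, $\sqrt{\log p/n}$-type errors, and the crux is that after the $\sqrt{n}$ inflation their product is forced below $n^{-1/2}$ only through the sparsity budget of Assumption \ref{ass:sparsity}. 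Identifying the correct pairing of rates — estimation error against the projection defect, and projection-estimation error against the noise score — so that each cross-term falls within that budget is where the genuine care lies; once those pairings are made, the remainders vanish and Slutsky's theorem delivers the stated limit $N(0,\Delta)$.
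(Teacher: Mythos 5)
Your proposal is correct and follows essentially the same route as the paper: the result is obtained by verifying the six conditions of \cite{neykov2018unified} (concentration, $L_1$ rates, consistency, CLT, bounded Jacobian, scaling), delegating each to the same supporting lemmas. The explicit closed-form root expansion and the pairing of rates $r_4\times r_3$ and $r_5\times r_1$ that you write out is exactly the mechanism the cited framework encapsulates, so your version simply makes transparent what the paper leaves to the reference.
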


\begin{lemma}\label{lemma:Tn}
  $\sqrt{n}(\Tilde{T}_n-\hat{T}_n ) = o_p(1)$
\begin{proof}
$\hat{T}_n$ is the root of $f(T):= \frac{1}{nK}\hat{v}^T\sum_{i=1}^n\mathbf{X}_i^T\hat{V}_n^{-1}(\mathbf{X}_i\hat{\boldsymbol{\beta}}_T-\mathbf{Y}_{i})$. Since this is linear in $T$:
\begin{align*}
    0 &= f(\Tilde{T}_n) + f'(\Tilde{T}_n)(\hat{T}_n - \Tilde{T}_n)\\
    \sqrt{n}(\Tilde{T}_n-\hat{T}_n ) &= \sqrt{n}f'(\Tilde{T}_n)^{-1}f(\Tilde{T}_n)\\
    &= \sqrt{n}\left(\frac{1}{nK}\hat{v}^T\sum_{i=1}^n\mathbf{X}_i^T\hat{V}_n^{-1}\mathbf{X}_ie_j\right)^{-1}\left(\frac{1}{nK}\hat{v}^T\sum_{i=1}^n\mathbf{X}_i^T\hat{V}_n^{-1}(\mathbf{X}_i\hat{\boldsymbol{\beta}}_{\Tilde{T}_n}-\mathbf{Y}_{i})\right)
\end{align*}

The first term in the product is $(1 + o_p(1))^{-1}$ by the definition of $\hat{v}$. So we focus on the second term:

\begin{equation}
    \sqrt{n}\left(\frac{1}{nK}\hat{v}^T\sum_{i=1}^n\mathbf{X}_i^T\hat{V}_n^{-1}(\mathbf{X}_i\hat{\boldsymbol{\beta}}_{\Tilde{T}_n}-\mathbf{Y}_{i})\right) 
\end{equation}

\begin{align*}
   \frac{1}{nK}\hat{v}^T\sum_{i=1}^n\mathbf{X}_i^T\hat{V}_n^{-1}(\mathbf{X}_i\hat{\boldsymbol{\beta}}_{\Tilde{T}_n}-\mathbf{Y}_{i})&=  \frac{1}{nK}\hat{v}^T\sum_{i=1}^n\mathbf{X}_i^T\hat{V}_n^{-1}(\mathbf{X}_i\hat{\boldsymbol{\beta}}_{\Tilde{T}_n}-\mathbf{Y}_{i})  - \frac{1}{nK}\Tilde{v}^T\sum_{i=1}^n\mathbf{X}_i^T\Tilde{V}^{-1}(\mathbf{X}_i\hat{\boldsymbol{\beta}}_{\Tilde{T}_n}-\mathbf{Y}_{i}) \\
   &= \frac{1}{nK}\hat{v}^T\sum_{i=1}^n\mathbf{X}_i^T\Delta_n(\mathbf{X}_i\hat{\boldsymbol{\beta}}_{\Tilde{T}_n}-\mathbf{Y}_{i}) -\\
   &\ \frac{1}{nK}(\Tilde{v}- \hat{v})^T\sum_{i=1}^n\mathbf{X}_i^T\Tilde{V}^{-1}(\mathbf{X}_i\hat{\boldsymbol{\beta}}_{\Tilde{T}_n}-\mathbf{Y}_{i}) \\
   &= \underbrace{\frac{1}{nK}(\hat{v}-\Tilde{v})^T\sum_{i=1}^n\mathbf{X}_i^T\Delta_n(\mathbf{X}_i\hat{\boldsymbol{\beta}}_{\Tilde{T}_n}-\mathbf{Y}_{i})}_{A} - \\
   &\underbrace{\frac{1}{nK}(\Tilde{v}- \hat{v})^T\sum_{i=1}^n\mathbf{X}_i^T\Tilde{V}^{-1}(\mathbf{X}_i\hat{\boldsymbol{\beta}}_{\Tilde{T}_n}-\mathbf{Y}_{i})}_{B} + \\
   &\underbrace{\frac{1}{nK}\Tilde{v}^T\sum_{i=1}^n\mathbf{X}_i^T\Delta_n(\mathbf{X}_i\hat{\boldsymbol{\beta}}_{\Tilde{T}_n}-\mathbf{Y}_{i})}_{C}\\
\end{align*}
% If we have upper RE conditions on $\left(n^{-1}\sum_{1=1}^nK^{-1}\mathbf{X}_i^T\hat{V}_n^{-1}\mathbf{X}_i\right)^2$ such that 
% \begin{equation}
%     \sup_{w \in \mathcal{C}} \frac{w^T\left[\sum_{i = 1}^n (nK)^{-1}\mathbf{X}_i^T\hat{V}_n^{-1}\mathbf{X}_i \right]^2w}{w^Tw} < C^2
% \end{equation}
% where $\mathcal{C}:= \{w \in \mathbb{R}^p: \norm{w_S}_1 \geq \norm{w_{S^C}}_1\}$
\begin{align*}
    |A| &\leq \norm{\hat{v}-\Tilde{v}}_1\left[\norm{\frac{1}{nK}\sum_{i=1}^n \mathbf{X}_i^T\Delta_n\mathbf{X}_i(\hat{\boldsymbol{\beta}}_{\Tilde{T}_n} - \boldsymbol{\beta})+ \frac{1}{nK}\sum_{i=1}^n \mathbf{X}_i^T\Delta_n\boldsymbol{\varepsilon}_i}_{\infty}\right]\\
   &\leq o_p(n^{-1/2})\left[ \underbrace{\norm{\frac{1}{nK}\sum_{i=1}^n \mathbf{X}_i^T\Delta_n\mathbf{X}_i(\hat{\boldsymbol{\beta}}_{\Tilde{T}_n} - \boldsymbol{\beta})}_{\infty}}_{A1} + \underbrace{\norm{\frac{1}{nK}\sum_{i=1}^n \mathbf{X}_i^T\Delta_n\boldsymbol{\varepsilon}_i}_{\infty}}_{A2}\right]\\
   (A1) &\leq \norm{\frac{1}{nK}\sum_{i=1}^n \mathbf{X}_i^T\Delta_n\mathbf{X}_i}_{\max}\norm{\hat{\boldsymbol{\beta}}_{\Tilde{T}_n} - \boldsymbol{\beta}}_{1}
\end{align*}
Note that by Lemma \ref{lemma:vtildetovhat} and Assumption \ref{ass:sparsity}, $\norm{\hat{v}-\Tilde{v}}_1 = o_p(n^{-1/2})$. The first term of A1 can be shown to belong to $O_p\left(\frac{\log(p)}{{n}} \cdot \sqrt{\frac{s\log(p)}{n}}\right)$ by Lemma \ref{lemma:Deltanrate}. The second belongs to $O_p\left(s\sqrt{\frac{\log(p)}{n}}\right)$ by Theorem \ref{theorem:l1rates}.

We have that $\norm{\frac{1}{nK}\sum_{i=1}^n\mathbf{X}_i^T\Delta_n\boldsymbol{\varepsilon}_i}_{\infty} = o_p(1)$ via arguments in Remark \ref{remark:gamma}. Therefore:
\begin{align*}
    \sqrt{n}|A| &\leq \sqrt{n} o_p(n^{-1/2}) (o_p(1) + o_p(1)) = o_p(1)
\end{align*}

Similarly, we can show $\sqrt{n}|B| = o_p(1)$ since $\norm{\frac{1}{nK}\sum_{i=1}^n \mathbf{X}_i^T\Tilde{V}^{-1}\mathbf{X}_i(\hat{\boldsymbol{\beta}}_{\Tilde{T}_n} - \boldsymbol{\beta})}_{\infty} = O_p(1)$ from arguments in \cite{zhou2009restricted}. Finally,

\begin{align*}
    C &= \underbrace{\Tilde{v}^T\frac{1}{nK} \sum_{i=1}^n
\mathbf{X}_i^T\Delta_n\mathbf{X}_i(\hat{\boldsymbol{\beta}}_{\Tilde{T}_n} - \boldsymbol{\beta})}_{C1} -  \underbrace{\Tilde{v}^T\frac{1}{nK} \sum_{i=1}^n
\mathbf{X}_i^T\Delta_n \boldsymbol{\varepsilon}_i}_{C2}\\
|C1| &\leq \norm{\boldsymbol{\beta}_{\Tilde{T}_n} - \boldsymbol{\beta}}_1 \norm{\Tilde{v}^T\frac{1}{nK} \sum_{i=1}^n
\mathbf{X}_i^T\Delta_n\mathbf{X}_i}_{\infty}
\end{align*}

The first term in C1 belongs to $O_p\left(s\sqrt{\frac{\log(p)}{n}}\right)$ by \ref{theorem:l1rates}. From arguments from \cite{zhou2009restricted}, we can conclude that the second term in C1 belongs to $o_p(n^{-1/2})$. The fact that C2 belongs to $o_p(n^{-1})$ follows from an application of a Hoeffding-type inequality combined with the convergence assumption for $\Delta_n$. Thus we can conclude that $\sqrt{n}|C| = o_p(1)$, concluding the proof that $\sqrt{n}(\Tilde{T}_n-\hat{T}_n ) = o_p(1)$.
\end{proof}
\end{lemma}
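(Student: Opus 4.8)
The plan is to exploit the fact that the estimating equation defining $\hat{T}_n$ is \emph{linear} in the scalar argument $T$, so a first-order expansion about $\widetilde{T}_n$ is exact with no remainder. Writing $f(T) := \frac{1}{nK}\hat{v}^T\sum_{i=1}^n\mathbf{X}_i^T\hat{V}_n^{-1}(\mathbf{X}_i\hat{\boldsymbol{\beta}}_T-\mathbf{Y}_{i})$, linearity gives $0 = f(\hat{T}_n) = f(\widetilde{T}_n) + f'(\widetilde{T}_n)(\hat{T}_n - \widetilde{T}_n)$, so that
$$\sqrt{n}(\widetilde{T}_n - \hat{T}_n) = \sqrt{n}\, f'(\widetilde{T}_n)^{-1} f(\widetilde{T}_n).$$
First I would dispatch the Jacobian factor. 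Since $\hat{\boldsymbol{\beta}}_T$ varies in $T$ only through its $j$th coordinate, $f'(\widetilde{T}_n) = \hat{v}^T\big[\frac{1}{nK}\sum_i \mathbf{X}_i^T\hat{V}_n^{-1}\mathbf{X}_i\big]e_j$, and by the CLIME-type constraint of Equation \ref{eqn:CLIME} the vector $\frac{1}{nK}\sum_i \mathbf{X}_i^T\hat{V}_n^{-1}\mathbf{X}_i \hat{v}$ is uniformly within $\lambda'$ of $e_j$; using symmetry of the averaged matrix, its $j$th coordinate is $1 + o_p(1)$. Hence $f'(\widetilde{T}_n)^{-1} = (1+o_p(1))^{-1}$ and the entire problem reduces to showing $\sqrt{n}\, f(\widetilde{T}_n) = o_p(1)$.

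The heart of the argument is to compare $f(\widetilde{T}_n)$ against the functional built from the \emph{limiting} quantities $(\widetilde{v},\widetilde{V})$, which vanishes identically by the very definition of $\widetilde{T}_n$. Subtracting this zero term and setting $\Delta_n := \hat{V}_n^{-1} - \widetilde{V}^{-1}$ and $r_i := \mathbf{X}_i\hat{\boldsymbol{\beta}}_{\widetilde{T}_n} - \mathbf{Y}_{i}$, I would split $f(\widetilde{T}_n)$ into three pieces according to whether the discrepancy enters through the projection vector, through the precision matrix, or through both:
$$f(\widetilde{T}_n) = \underbrace{(\hat{v}-\widetilde{v})^T\frac{1}{nK}\sum_i\mathbf{X}_i^T\Delta_n r_i}_{A} - \underbrace{(\widetilde{v}-\hat{v})^T\frac{1}{nK}\sum_i\mathbf{X}_i^T\widetilde{V}^{-1} r_i}_{B} + \underbrace{\widetilde{v}^T\frac{1}{nK}\sum_i\mathbf{X}_i^T\Delta_n r_i}_{C}.$$
Each residual is then decomposed as $r_i = \mathbf{X}_i(\hat{\boldsymbol{\beta}}_{\widetilde{T}_n} - \boldsymbol{\beta}) - \boldsymbol{\varepsilon}_i$, separating a ``bias'' contribution from a ``noise'' contribution, and each of $A,B,C$ is bounded by H\"older's inequality, pairing an $\ell_1$ norm of a vector factor against either the max norm of an averaged matrix or the $\ell_\infty$ norm of an averaged vector. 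The required inputs are the already-established rates: $\|\hat{v}-\widetilde{v}\|_1 = o_p(n^{-1/2})$ (Lemma \ref{lemma:vtildetovhat} with Assumption \ref{ass:sparsity}), $\|\hat{\boldsymbol{\beta}}_{\widetilde{T}_n} - \boldsymbol{\beta}\|_1 = O_p(s\sqrt{\log p/n})$ (Theorem \ref{theorem:l1rates}), the max-norm rate for $\frac{1}{nK}\sum_i\mathbf{X}_i^T\Delta_n\mathbf{X}_i$ (Lemma \ref{lemma:Deltanrate}), the $\ell_\infty$-boundedness of $\frac{1}{nK}\sum_i\mathbf{X}_i^T\widetilde{V}^{-1}\mathbf{X}_i(\hat{\boldsymbol{\beta}}_{\widetilde{T}_n}-\boldsymbol{\beta})$ via restricted-eigenvalue arguments of \cite{zhou2009restricted}, and the noise bound $\|\frac{1}{nK}\sum_i\mathbf{X}_i^T\Delta_n\boldsymbol{\varepsilon}_i\|_\infty = o_p(1)$ (Remark \ref{remark:gamma}). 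Multiplying out these rates and checking that each product, once scaled by $\sqrt{n}$, still tends to zero completes the argument.

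I expect the main obstacle to be the mixed noise term $\frac{1}{nK}\sum_i\mathbf{X}_i^T\Delta_n\boldsymbol{\varepsilon}_i$ appearing in $A$ and $C$: unlike the bias contributions, it cannot be controlled by deterministic norm inequalities alone and instead requires a concentration (Hoeffding-type) bound. This is exactly where the \emph{independence} of $\hat{V}_n$ from the main sample---guaranteed by the sample-splitting in the statement of Theorem \ref{theorem:cltforunknownv}---becomes essential, since it renders $\Delta_n$ independent of the $\boldsymbol{\varepsilon}_i$ and permits conditioning on $\Delta_n$ before applying the tail bound. The delicate bookkeeping lies in ensuring that the product of the comparatively slow covariance-estimation rate with the noise rate is genuinely $o_p(n^{-1/2})$ rather than merely $O_p(n^{-1/2})$; term $C$, being only first order in $\Delta_n$, is the tightest of the three and is what ultimately forces the sparsity scaling of Assumption \ref{ass:sparsity}.
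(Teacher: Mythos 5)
Your proposal is correct and follows essentially the same route as the paper's own proof: the exact linear expansion in $T$, the $(1+o_p(1))^{-1}$ Jacobian factor from the CLIME constraint, the identical $A$--$B$--$C$ decomposition after subtracting the $(\widetilde{v},\widetilde{V})$ functional, and the same supporting rates from Lemma \ref{lemma:vtildetovhat}, Theorem \ref{theorem:l1rates}, Lemma \ref{lemma:Deltanrate}, and Remark \ref{remark:gamma}. Your observation that the sample-splitting independence of $\hat{V}_n$ is what licenses the Hoeffding-type bound on the mixed noise term matches the paper's (terser) treatment of $C2$.
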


\subsection{Proof of Theorem \ref{theorem:semiparaeffi}}
  \begin{theorem}
    Under Assumptions \ref{ass:sparsity}, \ref{ass:subgauss}, \ref{ass:eigen}, \ref{ass:vstarbound}, $\lambda \propto \sqrt{\log{p}/n}$, and $\lambda' \propto \sqrt{\log p/n}$, $\beta_{n,0} + v/\sqrt{n} \in B({\beta_{n,0}}, c/\sqrt{n})$ for a sufficiently large constant $c$ and when $Y_{i}|\mathbf{X}_i \sim \mathcal{N}(\mathbf{X}_i\bs, V)$, the statistic $\hat{U}_n$ defined in Theorem \ref{theorem:cltforunknownv} asymptotically attains the smallest possible variance among estimators satisfying Conditions \ref{cond:sp1}, \ref{cond:sp4}, and \ref{cond:sp5} when the working correlation structure is correctly specified.
\end{theorem}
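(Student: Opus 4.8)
\begin{proof}
The plan is to invoke Corollary 2 of \cite{jankova2018semiparametric} (the theorem stated immediately above), which guarantees that, once Conditions \ref{cond:sp1}--\ref{cond:sp6} hold, the minimal asymptotic variance attainable by any estimator in the admissible class (those satisfying Conditions \ref{cond:sp1}, \ref{cond:sp4}, \ref{cond:sp5}) is $e_j^{T} I_{\beta_{n,0}}^{-1} e_j$, with the bound achieved by an asymptotically linear estimator whose influence function is $e_j^{T} I_{\beta}^{-1} s_{\beta}$. The argument then reduces to three tasks: (i) verify that the Gaussian law $N(\mathbf{X}_i\bs, V)$ together with $\hat{T}_n$ meets all six conditions; (ii) compute the Fisher information of the Gaussian model to pin down the bound; and (iii) show that the asymptotic variance of $\hat{T}_n$ equals that bound under correct specification of the working correlation structure.

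For task (i), the estimator-side requirements were already established: asymptotic linearity with influence function $l_{\beta_{n,0}}(\mathbf{Y}_i) = v^{T}K^{-1}\mathbf{X}_i^{T}V^{-1}(\mathbf{X}_i\beta_{n,0} - \mathbf{Y}_i)$ (Condition \ref{cond:sp1}, Lemma \ref{lemma:sp1}), the Lindeberg condition (Condition \ref{cond:sp4}, Lemma \ref{lemma:sp5}), and asymptotic unbiasedness (Condition \ref{cond:sp5}, Lemma \ref{lemma:sp6}). The law-side requirements, Conditions \ref{cond:sp2}, \ref{cond:sp3}, and \ref{cond:sp6}, follow from Lemma \ref{lemma:sp2} together with the max-norm concentration bound for $n^{-1}\sum_{i}\mathbf{X}_i^{T}V^{-1}\mathbf{X}_i$, which simultaneously fixes the sparsity rate $s_n$ appearing in \eqref{eqn:sparseball}; the hypothesis $\beta_{n,0} + v/\sqrt{n}\in B(\beta_{n,0}, c/\sqrt{n})$ supplies Condition \ref{cond:sp6}. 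With all six in force, Corollary 2 applies.

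For tasks (ii) and (iii) I would compute directly. Under $Y_i\mid \mathbf{X}_i \sim N(\mathbf{X}_i\bs, V)$ the per-cluster log-likelihood is quadratic, so the score is $s_{\beta}(\mathbf{Y}_i) = \mathbf{X}_i^{T}V^{-1}(\mathbf{Y}_i - \mathbf{X}_i\bs) = \mathbf{X}_i^{T}V^{-1}\boldsymbol{\varepsilon}_i$; using $\boldsymbol{\varepsilon}_i\perp\mathbf{X}_i$ and $\mathrm{Var}(\boldsymbol{\varepsilon}_i)=V$ gives the Fisher information $I_{\beta_{n,0}} = \mathbb{E}[\mathbf{X}_i^{T}V^{-1}\mathbf{X}_i]$, so the bound is $e_j^{T}[\mathbb{E}[\mathbf{X}_i^{T}V^{-1}\mathbf{X}_i]]^{-1}e_j$. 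The asymptotic variance of $\hat{T}_n$ is the variance of its influence function, $V_{\beta_{n,0}} = \mathbb{E}[l_{\beta_{n,0}}^2] = v^{T}\mathbb{E}[K^{-2}\mathbf{X}_i^{T}V^{-1}\mathbf{X}_i]\,v$ under correct specification $\tilde V = V$. The decisive step is the projection identity: the constraint defining $\hat v$ in \eqref{eqn:CLIME} forces $v = G^{-1}e_j$ with $G := \mathbb{E}[K^{-1}\mathbf{X}_i^{T}V^{-1}\mathbf{X}_i]$, and since $G$ is symmetric a direct substitution collapses $v^{T}\mathbb{E}[K^{-2}\mathbf{X}_i^{T}V^{-1}\mathbf{X}_i]\,v$ to $e_j^{T}[\mathbb{E}[\mathbf{X}_i^{T}V^{-1}\mathbf{X}_i]]^{-1}e_j = e_j^{T}I_{\beta_{n,0}}^{-1}e_j$. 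Combined with Theorem \ref{theorem:cltforunknownv}, which yields $\hat{U}_n \overset{d}{\to} N(0,1)$, this shows $\hat{T}_n$ attains the lower bound and is therefore efficient.

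The part I expect to be hardest is task (i), and within it Conditions \ref{cond:sp2}, \ref{cond:sp3}, and \ref{cond:sp5} on the sparse ball $\mathcal{B}(s_n)$: the remainder of the likelihood expansion underlying the local asymptotic normality argument must be controlled \emph{uniformly} over the local direction $h$ and over $\mathcal{B}(s_n)$, which is exactly where the rate $s_n = \max\{\sqrt{n/\log p}, n^{1/3}\}$ and the concentration of the empirical information matrix enter, and where the high-dimensional adaptation departs from classical fixed-$p$ Le Cam theory. Condition \ref{cond:sp5} is subtler than it appears, since $\mathbb{E}[l_{\beta_{n,0}}h^{T}s_{\beta_{n,0}} - h^{T}e_j] = o(1)$ amounts to verifying that the CLIME projection $\hat v$ genuinely inverts the information matrix in the $j$th coordinate, $v^{T}\mathbb{E}[K^{-1}\mathbf{X}_i^{T}V^{-1}\mathbf{X}_i] = e_j^{T} + o(1)$; this in turn rests on the $L_1$-rate for $\|\tilde v - v\|_1$ from Corollary \ref{corollary:vl1rate} together with Assumption \ref{ass:sparsity}.
\end{proof}
\end{theorem}
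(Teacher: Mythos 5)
Your proposal is correct and follows essentially the same route as the paper: verify Conditions \ref{cond:sp1}--\ref{cond:sp6} via Lemmas \ref{lemma:sp1}--\ref{lemma:sp7}, invoke Corollary 2 of \cite{jankova2018semiparametric}, and conclude by identifying the influence function $v^{T}K^{-1}\mathbf{X}_i^{T}V^{-1}(\mathbf{X}_i\boldsymbol{\beta}-\mathbf{Y}_i)$ with the efficient one $e_j^{T}I_{\boldsymbol{\beta}}^{-1}s_{\boldsymbol{\beta}}$ through the relation $v=[\mathbb{E}[K^{-1}\mathbf{X}_i^{T}V^{-1}\mathbf{X}_i]]^{-1}_{\cdot j}$. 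The only cosmetic differences are that you phrase the final step as an explicit collapse of $v^{T}\mathbb{E}[K^{-2}\mathbf{X}_i^{T}V^{-1}\mathbf{X}_i]v$ to $e_j^{T}I_{\beta_{n,0}}^{-1}e_j$ where the paper states the influence-function identity directly, and that you attribute Conditions \ref{cond:sp2}--\ref{cond:sp3} to Lemma \ref{lemma:sp2} rather than to Lemmas \ref{lemma:sp3}--\ref{lemma:sp4}, a harmless slip.
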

    \begin{proof}
        We show the following conditions needed for the result in Corollary 2 of \cite{jankova2018semiparametric}:
        \begin{enumerate}
            \item (Influence Function) Our estimator, $T_n$, is asymptotically linear with influence function 
            \begin{equation}
                \ell_{\bs}(\mathbf{X}_i) = v^TK^{-1}\mathbf{X}_i^T\hat{V}_n^{-1}(\mu_i - \mathbf{Y}_{i})
            \end{equation}

            \item (Bounded Variance) Under Assumption \ref{ass:vstarbound}, it follows that:
            \begin{equation}
                \mathbb{E}[\ell^2_{\bs}(\mathbf{X}_i)] = Var(\ell_{\bs}) < \infty
            \end{equation}
            \item (Score Function) The score function associated with the data-generating process, $s_{\boldsymbol{\beta}}$, is twice-differentiable and bounded above.
            \item (Information Matrix) The information matrix, $I_{\boldsymbol{\beta}} = \mathbb{E}[\mathbf{X}_i^T\Tilde{V}^{-1}\mathbf{X}_i]$ has bounded eigenvalues and 
\begin{align}
    \norm{\frac{1}{n}\sum_{i=1}^n \mathbf{X}_i^T\Tilde{V}^{-1}\mathbf{X}_i - \mathbb{E}[\mathbf{X}_i^T\Tilde{V}^{-1}\mathbf{X}_i]}_{\max} &= o_p(1) 
\end{align}
\item (Technical Condition 1) $\forall \boldsymbol{\varepsilon} >0$
\begin{equation}
    \lim_{n \to \infty} \mathbb{E}_{\bs}[f^2_{\bs}\mathbb{I}_{|f_{\bs}|>\boldsymbol{\varepsilon}\sqrt{n}}] = 0
\end{equation}
where $f_{\bs}(x):= \ell_{\bs}(x) + h^Ts_{\bs}(x)$ for $x \in \mathbb{R}^p$ and $h = \sqrt{n}(\Tilde{\boldsymbol{\beta}} - \bs)$ for $\Tilde{\boldsymbol{\beta}}\in B(\bs, \frac{c}{\sqrt{n}})$. Here $B(\bs, \frac{c}{\sqrt{n}})$ is an $L_2$ ball of length $\frac{c}{\sqrt{n}}$ around $\bs$.
\item  (Technical Condition 2) $\forall h \in \mathbb{R}^p$:
\begin{equation}
    \mathbb{E}_{\bs}[\ell_{\bs}h^Ts_{\bs}] - h^Te_j = o(1)
\end{equation}
\item (Technical Condition 3)
\begin{equation}
    \boldsymbol{\beta}+ I^{-1}_{\bs}e_j/\sqrt{n} \in B(\bs, c/\sqrt{n})
\end{equation}
        \end{enumerate}

Since all necessary conditions have been shown in Lemmas \ref{lemma:sp1} - \ref{lemma:sp7},  by observing that the influence function:
\[\ell_{\boldsymbol{\beta}} = \left[K^{-1}\mathbb{E}_{\bs}[\mathbf{X}_i^TV^{-1}\mathbf{X}_i]\right]^{-1}_{\cdot j} K^{-1}\mathbf{X}_i^TV^{-1}(Y-\mathbf{X}_i\boldsymbol{\beta}) = e_j^TI_{\boldsymbol{\beta}}^{-1}s_{\boldsymbol{\beta}}\]
 we use the results from Section 10 of \cite{jankova2018semiparametric} to conclude that our estimator achieves the semiparametric efficiency bound asymptotically.
\end{proof}

\section{Appendix B: Proofs of Important Lemmas}
All proofs in this section are under Assumptions \ref{ass:sparsity}, \ref{ass:eigen}, \ref{ass:subgauss}, and \ref{ass:vstarbound}.
\subsection{Proofs of Lemmas needed for Theorem \ref{theorem:cltforunknownv}}
We now present central lemmas needed to connect $\Tilde{T}_n$ and $\hat{T}_n$:
\begin{lemma}\label{lemma:vhatconvergencetovtilde}

    \begin{equation}
        \norm{\hat{V}_n^{-1} - \Tilde{V}^{-1}}_2 = O_p\left(s\sqrt{\frac{\log(p)}{n}}\right)
    \end{equation}

    \begin{proof}
        \begin{align*}
            \norm{\hat{V}_n^{-1} - \Tilde{V}^{-1}}_2 &= \norm{\hat{V}_n^{-1}( \Tilde{V}- \hat{V}_n )\Tilde{V}^{-1}}_2\\
            &\leq \norm{\hat{V}_n^{-1}}_2 \norm{\Tilde{V}- \hat{V}_n}_2\norm{\Tilde{V}^{-1}}_2
        \end{align*}
        Focusing on the middle term:
        \begin{align*}
            \norm{\Tilde{V}- \hat{V}_n}_2 & \leq \norm{(\ahh - \ah)(\rh - \rbar)(\ahh - \ah) + (\ahh - \ah)(\rh - \rbar)\ah}_2 + \\
            & \norm{(\ahh - \ah)\rbar\ah + (\ahh - \ah)\rbar(\ahh - \ah) + \ah(\rh - \rbar)(\ahh - \ah)}_2 + \\
            &\norm{\ah (\rh - \rbar)\ah + \ah \rbar (\ahh - \ah) }_2
        \end{align*}
        Each of the terms in this summation is made up of three components. Each term is either a difference between an estimate and its limiting quantity or a fixed matrix with bounded eigenvalues. Since both $\hat{A}$ and $\hat{R}$ are calculated with residuals using $\hat{\boldsymbol{\beta}}$ via moment-based estimators, we consider just the rate of $\norm{\ah - \ahh}_2$. Note that for the $j$th  diagonal element of $A - \hat{A}$:
        \begin{align*}
            \left|\frac{1}{n}\sum_{i=1}^n(\mathbf{X}_{ij}^T\bh - Y_{ij})^2 - \sigma_j^2\right| &= \left|\frac{1}{n}\sum_{i=1}^n\mathbf{X}_{ij}^T(\bh - \bs)(\bh -\bs)^T\mathbf{X}_{ij} - 2\mathbf{X}_{ij}^T(\bh - \bs)\boldsymbol{\varepsilon}_{ij} + \boldsymbol{\varepsilon}_{ij}^{2}- \sigma_j^2\right|\\
            &\leq \left|\frac{1}{n}\sum_{i=1}^n\mathbf{X}_{ij}^T(\bh - \bs)(\bh -\bs)^T\mathbf{X}_{ij}\right| + \left|\frac{1}{n}\sum_{i=1}^n2\mathbf{X}_{ij}^T(\bh - \bs)\boldsymbol{\varepsilon}_{ij}\right| +\\
            &\ \left| \frac{1}{n}\sum_{i=1}^n\boldsymbol{\varepsilon}_{ij}^{2}- \sigma_j^2\right|\\
            &\leq \norm{\bh - \bs}_1^2 \norm{\frac{1}{n}\sum_{i=1}^n\mathbf{X}_{ij}\mathbf{X}_{ij}^T}_{\max} + 2\norm{\bh - \bs}_1\norm{\frac{1}{n}\sum_{i=1}^n\mathbf{X}_{ij}^T\boldsymbol{\varepsilon}_{ij}}_{\infty} + o_p(1)\\
            &\leq O_p\left(s\sqrt{\frac{\log(p)}{n}}\right)\left[O(1) + O_p\left(\sqrt{\frac{\log(p)}{n}}\right)\right] + \\
            &O_p\left(s\sqrt{\frac{\log(p)}{n}}\right) \cdot O_p\left(\sqrt{\frac{\log(p)}{n}}\right) + o_p(1)\\
            &= O_p\left(s\sqrt{\frac{\log(p)}{n}}\right)
        \end{align*}
        The key observation here is that the rate ultimately is determined by the rate of convergence of $\bh$ through the residuals. This kind of proof will hold for elements of $\hat{R}$ as well. The slowest rate above will be determined by those terms with a single matrix difference in operator norm. Since each of the involved matrices is $K \times K$, i.e. has a finite and fixed number of elements, we can easily show that the rate of convergence of the operator norm is also $O_p\left(s\sqrt{\frac{\log(p)}{n}}\right)$. Since $\norm{\hat{V}_n^{-1}}_2 = O(1) + o_p(1)$, the final rate is $ O_p\left(s\sqrt{\frac{\log(p)}{n}}\right)$.
    \end{proof}

\end{lemma}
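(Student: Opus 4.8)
The plan is to control the inverse difference through the standard matrix perturbation identity and then reduce everything to an entrywise estimation-error bound that is ultimately governed by the $L_1$ convergence rate of $\bh$. First I would write
\[
\hat{V}_n^{-1} - \Tilde{V}^{-1} = \hat{V}_n^{-1}(\Tilde{V} - \hat{V}_n)\Tilde{V}^{-1},
\]
and apply submultiplicativity of the operator norm to obtain
\[
\norm{\hat{V}_n^{-1} - \Tilde{V}^{-1}}_2 \leq \norm{\hat{V}_n^{-1}}_2\,\norm{\Tilde{V} - \hat{V}_n}_2\,\norm{\Tilde{V}^{-1}}_2.
\]
By Assumption \ref{ass:eigen} the eigenvalues of $\Tilde{V}$ are bounded away from zero, so $\norm{\Tilde{V}^{-1}}_2 = O(1)$; and since $\hat{V}_n \overset{P}{\to} \Tilde{V}$, the smallest eigenvalue of $\hat{V}_n$ is eventually bounded below with high probability, giving $\norm{\hat{V}_n^{-1}}_2 = O_p(1)$. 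Thus the whole problem reduces to establishing $\norm{\Tilde{V} - \hat{V}_n}_2 = O_p(s\sqrt{\log p / n})$.

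Because $K$ is fixed, the operator, Frobenius, and max norms on $K\times K$ matrices are equivalent up to a constant depending only on $K$, so it suffices to bound each entry of $\hat{V}_n - \Tilde{V}$ at the target rate. Writing $\hat{V}_n = \ahh\,\rh\,\ahh$ and $\Tilde{V} = \ah\,\rbar\,\ah$, I would expand the difference into a telescoping sum whose summands each contain at least one factor of $(\ahh - \ah)$ or $(\rh - \rbar)$ multiplied by matrices of bounded norm. Terms carrying two or more difference factors are of strictly smaller order, so the dominant contribution comes from the single-difference terms, and the task collapses to bounding $\norm{\ahh - \ah}_2$ and $\norm{\rh - \rbar}_2$.

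The crux is then the rate at which the residual-based moment estimators inherit the error in $\bh$. For the $j$th diagonal entry I would expand the squared residual $(\mathbf{X}_{ij}^T\bh - Y_{ij})^2$ as $\mathbf{X}_{ij}^T(\bh-\bs)(\bh-\bs)^T\mathbf{X}_{ij} - 2\mathbf{X}_{ij}^T(\bh-\bs)\varepsilon_{ij} + \varepsilon_{ij}^2$ and bound the three resulting averages separately: the quadratic-in-$(\bh-\bs)$ term by $\norm{\bh-\bs}_1^2\,\norm{n^{-1}\sum_i \mathbf{X}_{ij}\mathbf{X}_{ij}^T}_{\max}$, the cross term by H\"older as $\norm{\bh-\bs}_1\,\norm{n^{-1}\sum_i \mathbf{X}_{ij}\varepsilon_{ij}}_\infty$, and the centered average of the true squared errors by a subexponential concentration inequality (justified by Assumption \ref{ass:subgauss}). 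Invoking the $L_1$ rate $\norm{\bh-\bs}_1 = O_p(s\sqrt{\log p / n})$ from Theorem \ref{theorem:l1rates} together with $\norm{n^{-1}\sum_i \mathbf{X}_{ij}\varepsilon_{ij}}_\infty = O_p(\sqrt{\log p/n})$, the cross term dominates and delivers the claimed $O_p(s\sqrt{\log p/n})$ rate; the same decomposition handles the off-diagonal correlation entries of $\rh - \rbar$.

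The main obstacle, and the step requiring the most care, is ensuring that this propagation from $\bh$ to the working covariance does not inflate the rate by an extra factor of $s$ or $\log p$ — that is, that the working-covariance error is driven entirely by the $L_1$ rate of $\bh$ rather than being amplified by it. A secondary technical subtlety is the passage from $\hat{A}$ to its square root $\ahh$: here I would invoke a matrix-square-root perturbation bound, which is clean because $A$ is diagonal with entries bounded away from zero by Assumption \ref{ass:eigen}, so $\norm{\ahh - \ah}_2$ inherits the rate of $\norm{\hat{A}-A}_2$ without loss. This careful accounting of how the single slowest difference term controls the operator norm is what makes or breaks the lemma.
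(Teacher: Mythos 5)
Your proposal follows essentially the same route as the paper's proof: the perturbation identity $\hat{V}_n^{-1}-\Tilde{V}^{-1}=\hat{V}_n^{-1}(\Tilde{V}-\hat{V}_n)\Tilde{V}^{-1}$, the telescoping expansion of $\hat{V}_n-\Tilde{V}$ into single- and multiple-difference terms, and the entrywise bound on the residual-based moment estimators via the decomposition of $(\mathbf{X}_{ij}^T\bh-Y_{ij})^2$ combined with the $L_1$ rate of $\bh$ from Theorem \ref{theorem:l1rates}. Your explicit treatment of the square-root perturbation $\norm{\ahh-\ah}_2$ and the high-probability lower bound on $\lambda_{\min}(\hat{V}_n)$ is slightly more careful than the paper's (which leaves these implicit), and the only small imprecision is attributing the dominant contribution to the cross term, which is actually $O_p(s\log p/n)$ and hence of smaller order than the claimed bound; this does not affect the validity of the final rate.
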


\begin{lemma}\label{lemma:Deltanrate}
Let $\Delta_n = \hat{V}_n^{-1} - \Tilde{V}^{-1}$.
\begin{equation}
    \norm{\frac{1}{nK}\sum_{i = 1}^n \mathbf{X}_i^T\Delta_n\mathbf{X}_i}_{\max} = O_p\left(\frac{\log(p)}{{n}} \cdot \sqrt{\frac{s\log(p)}{n}}\right)
\end{equation}
\begin{proof}
Let $\mathbf{X}_i^m \in \mathbb{R}^{K\times 1}$ denote the $m$th column of the $K \times p$ design matrix for the $i$th cluster.
\begin{align*}
    &P\left\{ \norm{\frac{1}{nK}\sum_{i = 1}^n \mathbf{X}_i^T\Delta_n\mathbf{X}_i}_{\max} \geq \boldsymbol{\varepsilon} \right\}\\
    &\leq p^2 P\left\{\frac{1}{nK}\sum_{i = 1}^n \mathbf{X}_i^{mT}\Delta_n\mathbf{X}_i^{\ell} \geq \boldsymbol{\varepsilon} \right\}\\
    &= p^2P\left\{ \frac{1}{nK}\sum_{i = 1}^n\frac{1}{4}(\mathbf{X}_i^m + \mathbf{X}_i^{\ell})^T\Delta_n(\mathbf{X}_i^m + \mathbf{X}_i^{\ell}) - \frac{1}{4}(\mathbf{X}_i^m - \mathbf{X}_i^{\ell})^T\Delta_n(\mathbf{X}_i^m - \mathbf{X}_i^{\ell})  \geq \boldsymbol{\varepsilon} \right\} \\
    &\leq p^2P\left\{ \frac{1}{nK}\sum_{i = 1}^n\frac{1}{4}(\mathbf{X}_i^m + \mathbf{X}_i^{\ell})^T\Delta_n(\mathbf{X}_i^m + \mathbf{X}_i^{\ell}) + \frac{1}{4}(\mathbf{X}_i^m - \mathbf{X}_i^{\ell})^T\Delta_n(\mathbf{X}_i^m - \mathbf{X}_i^{\ell})  \geq \boldsymbol{\varepsilon} \right\} \\
    &\leq p^2P\left\{ \frac{1}{nK}\sum_{i = 1}^n\frac{1}{4}(\mathbf{X}_i^m + \mathbf{X}_i^{\ell})^T\Delta_n(\mathbf{X}_i^m + \mathbf{X}_i^{\ell}) \geq \frac{\boldsymbol{\varepsilon}}{2} \cap 
    \frac{1}{4}(\mathbf{X}_i^m - \mathbf{X}_i^{\ell})^T\Delta_n(\mathbf{X}_i^m - \mathbf{X}_i^{\ell})  \geq \frac{\boldsymbol{\varepsilon}}{2} \right\} \\
    &\leq p^2P\left\{ \frac{1}{nK}\sum_{i = 1}^n\frac{1}{4}(\mathbf{X}_i^m + \mathbf{X}_i^{\ell})^T\Delta_n(\mathbf{X}_i^m + \mathbf{X}_i^{\ell}) \geq \frac{\boldsymbol{\varepsilon}}{2}\right\} +\\
    &p^2P\left\{\frac{1}{nK}\sum_{i = 1}^n\frac{1}{4}(\mathbf{X}_i^m - \mathbf{X}_i^{\ell})^T\Delta_n(\mathbf{X}_i^m - \mathbf{X}_i^{\ell})  \geq \frac{\boldsymbol{\varepsilon}}{2} \right\}
\end{align*}
    We bound these probabilities using the Hanson-Wright Inequality (\cite{hansonwright}):
    \begin{align*}
        &\leq p^2\left[2\exp\left\{-c\min\left\{\frac{\boldsymbol{\varepsilon}^2\cdot n^3K}{4K_x^4s\log(p)}, \frac{\boldsymbol{\varepsilon} K \sqrt{n^3}}{K_x^2\sqrt{s\log(p)}}\right\}\right\}\right]
    \end{align*}
    where $c$ is a positive absolute constant, using the fact that the convergence rate of $\norm{\Delta}_2$ is $\sqrt{\frac{s\log(p)}{n}}$ by Lemma \ref{lemma:vhatconvergencetovtilde}. So we have that 
    \begin{align*}
        &P\left\{ \norm{\frac{1}{nK}\sum_{i = 1}^n \mathbf{X}_i^T\Delta_n\mathbf{X}_i}_{\max} \leq \boldsymbol{\varepsilon} \right\} \geq\\
        &1 - 2p^2\exp\left\{-c\min\left\{\frac{\boldsymbol{\varepsilon}^2\cdot n^3K}{4K_x^4s\log(p)}, \frac{\boldsymbol{\varepsilon} K \sqrt{n^3}}{K_x^2\sqrt{s\log(p)}}\right\}\right\}
    \end{align*}
    We can solve for $\boldsymbol{\varepsilon}$ such that this probability converges to 1. Upon analytically solving for $\boldsymbol{\varepsilon}$ when either of the above terms is the minimum, we find that the solution has a dominating term of the order $\frac{\log(p)}{n} \cdot \sqrt{\frac{s\log(p)}{n}}$.
\end{proof}
\end{lemma}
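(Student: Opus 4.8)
The plan is to reduce the matrix max-norm to a union bound over the $p^2$ scalar bilinear forms $\frac{1}{nK}\sum_{i=1}^n (\mathbf{X}_i^m)^T\Delta_n\mathbf{X}_i^\ell$ and to control each one with a concentration inequality for quadratic forms, crucially exploiting that $\hat{V}_n$ — and hence $\Delta_n$ — is built from an auxiliary sample independent of the $\mathbf{X}_i$ appearing in the sum. Conditioning on that auxiliary sample, I would treat $\Delta_n$ as a fixed $K\times K$ matrix whose operator norm is controlled by Lemma \ref{lemma:vhatconvergencetovtilde}, namely $\norm{\Delta_n}_2 = O_p(s\sqrt{\log p/n})$; since $K$ is held fixed, its Frobenius and operator norms are equivalent up to a constant, so $\norm{\Delta_n}_F \asymp \norm{\Delta_n}_2$. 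This independence is what makes the whole strategy go through, because it licenses the use of a Hanson--Wright bound with $\Delta_n$ playing the role of a deterministic matrix.

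Concretely, I would fix indices $m,\ell$ and polarize the bilinear form into a difference of two quadratic forms $(\mathbf{X}_i^m\pm\mathbf{X}_i^\ell)^T\Delta_n(\mathbf{X}_i^m\pm\mathbf{X}_i^\ell)$, each a quadratic form in a sub-Gaussian vector. Stacking over clusters, the cluster average becomes a single quadratic form $w^T B w$ with block-diagonal $B=\frac{1}{nK}\,(I_n\otimes\Delta_n)$, so that $\norm{B}_F\asymp \norm{\Delta_n}_2/(\sqrt{n}\,K)$ and $\norm{B}_2\asymp \norm{\Delta_n}_2/(nK)$. Applying Hanson--Wright conditionally on $\Delta_n$, then taking a union bound over the $p^2$ pairs and the two sign choices, I would choose the deviation level so that $p^2$ times the tail vanishes. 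The rate is then pinned down by balancing $\log(p^2)$ against the two exponents — the sub-Gaussian/Frobenius branch and the sub-exponential/operator branch — and verifying which branch is binding at the relevant scale, since the advertised order is governed by whichever branch dominates. The polarization, the union bound, and the constant factors coming from fixed $K$ are all routine.

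The step I would treat most carefully, and the one I expect to be the genuine obstacle, is the conditional-mean term: Hanson--Wright controls only the fluctuation of each quadratic form about its conditional mean, whereas the statement concerns the raw quantity. The conditional mean of the $(m,\ell)$ entry is $\tfrac{1}{K}\operatorname{tr}\!\big(\Delta_n\,\mathbb{E}[\mathbf{X}_i^\ell(\mathbf{X}_i^m)^T]\big)$, which is not identically zero and, bounded crudely, is of order $\norm{\Delta_n}_2$ — already larger than the target rate. To reach the stated order I would therefore have to isolate this mean and argue it is negligible rather than absorb it into a Cauchy--Schwarz bound: writing the entry as $\tfrac{1}{K}\operatorname{tr}(\Delta_n \widehat{M}_{\ell m})$ with $\widehat{M}_{\ell m}=\tfrac1n\sum_i\mathbf{X}_i^\ell(\mathbf{X}_i^m)^T$ separates the fixed cross-moment $M_{\ell m}$ — which multiplies $\operatorname{tr}(\Delta_n)$, a single scalar to be controlled using that $\hat{V}_n$ is (asymptotically) unbiased for $\tilde V$ and independent of the main sample — from its $O_p(\sqrt{\log p/n})$ empirical fluctuation. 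Getting these pieces to recombine at the sharp rate, rather than the coarser $\norm{\Delta_n}_2$-type bound, is the delicate accounting at the heart of the lemma, and it is where the independence of the auxiliary sample must be used most forcefully.
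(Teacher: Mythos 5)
Your strategy coincides with the paper's proof almost line for line: a union bound over the $p^2$ entries, polarization of the bilinear form $(\mathbf{X}_i^m)^T\Delta_n\mathbf{X}_i^{\ell}$ into two quadratic forms, Hanson--Wright applied conditionally on the auxiliary sample so that $\Delta_n$ is deterministic with $\norm{\Delta_n}_2$ controlled by Lemma \ref{lemma:vhatconvergencetovtilde}, and a final balancing of the sub-Gaussian and sub-exponential branches of the exponent against $\log(p^2)$ to extract the rate. On that core there is nothing to add.

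The point you single out as the genuine obstacle --- that Hanson--Wright only controls the quadratic form about its conditional mean, while the lemma concerns the raw quantity --- is a real issue, and it is one the paper's own proof does not address: the displayed chain applies the tail bound directly to $P\{\cdot \geq \varepsilon\}$ for the uncentered quadratic forms, and the conditional mean $\tfrac{1}{K}\operatorname{tr}\bigl(\Delta_n\,\mathbb{E}[\mathbf{X}_i^{\ell}(\mathbf{X}_i^m)^T]\bigr)$ is generically of order $\norm{\Delta_n}_2$, which exceeds the advertised rate $\tfrac{\log p}{n}\sqrt{s\log p/n}$ by a large factor. (A sanity check: if $\Delta_n$ were a scalar multiple $\delta I_K$ with $\delta\asymp\norm{\Delta_n}_2$, the quantity in question is $\delta\cdot\norm{\tfrac{1}{nK}\sum_i\mathbf{X}_i^T\mathbf{X}_i}_{\max}\asymp\delta$.) So you have correctly located a gap --- but it is a gap in the paper's argument as much as in yours. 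Be aware, however, that your sketched repair does not go through as written: $\operatorname{tr}(\Delta_n M_{\ell m})$ does not factor as $M_{\ell m}$ times the single scalar $\operatorname{tr}(\Delta_n)$ unless $M_{\ell m}$ is a multiple of the identity, so controlling $\operatorname{tr}(\Delta_n)$ alone does not dispose of the mean term; you would need to show that $\Delta_n$ is small in a sense stronger than operator norm (e.g.\ entrywise, or after projection onto the span of the cross-moments $M_{\ell m}$), or else accept the weaker conclusion $O_p(\norm{\Delta_n}_2)$ and check whether that suffices where the lemma is invoked (in the bounds for the terms $A1$ and $|3'|$).
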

\begin{lemma}\label{lemma:vtildetovhat}
\begin{equation}
    \norm{\Tilde{v} - \hat{v}}_1 = O_p\left(s_v{\frac{\log(p)}{n}} \cdot \sqrt{\frac{s\log(p)}{n}}\right)
\end{equation}
    \begin{proof}
        First we show that $\norm{\hat{v}- v}_1 = O_p\left(s_v{\frac{\log(p)}{n}} \cdot \sqrt{\frac{s\log(p)}{n}}\right)$ and the result follows from an application of the Triangle Inequality. This proof is identical to that of Corollary \ref{corollary:vl1rate}, but the estimate $\hat{V}_n$ is substituted for its limiting value $\Tilde{V}$. The difference in rates follows from the calculation of the following probability:

        \begin{align*}
            \mathbb{P}\left\{\norm{v^T\left[\frac{1}{nK}\sum_{i=1}^n\mathbf{X}_i^T\hat{V}_n^{-1}\mathbf{X}_i - \mathbb{E}[K^{-1}\mathbf{X}_i^T\Tilde{V}^{-1}\mathbf{X}_i]\right]}_{\infty} \geq \lambda'\right\}
        \end{align*}
        By setting $\lambda' = \norm{v}_1 \left(4A_xK_{\Tilde{x}}^2\sqrt{\frac{\log(p)}{n}} + \norm{\frac{1}{nK}\sum_{i=1}^n\mathbf{X}_i^T\Delta_n\mathbf{X}_i}_{\max}\right)$, the remainder of the proof is identical, with the new rate determined by the new $\lambda'$ and is $O_p(s_v\lambda')$. By Lemma \ref{lemma:Deltanrate}, then we can conclude that $\norm{\hat{v}- v}_1 = O_p\left(s_v{\frac{\log(p)}{n}} \cdot \sqrt{\frac{s\log(p)}{n}}\right)$.
    \end{proof}
\end{lemma}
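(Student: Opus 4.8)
The plan is to treat this as a perturbation-stability statement for the Dantzig/CLIME column estimator of \eqref{eqn:CLIME}: the vectors $\hat{v}$ and $\Tilde{v}$ solve the \emph{same} program except that the working Gram matrix is formed with $\hat{V}_n^{-1}$ rather than its limit $\Tilde{V}^{-1}$. Writing $\hat{M} := \frac{1}{nK}\sum_{i=1}^n \mathbf{X}_i^T\hat{V}_n^{-1}\mathbf{X}_i$, $\Tilde{M} := \frac{1}{nK}\sum_{i=1}^n \mathbf{X}_i^T\Tilde{V}^{-1}\mathbf{X}_i$, and $\Omega := \mathbb{E}[K^{-1}\mathbf{X}_i^T\Tilde{V}^{-1}\mathbf{X}_i]$ (so that $v = \Omega^{-1}e_j$), we have $\hat{M} = \Tilde{M} + E$ with $E := \frac{1}{nK}\sum_{i=1}^n\mathbf{X}_i^T\Delta_n\mathbf{X}_i$ and $\Delta_n = \hat{V}_n^{-1} - \Tilde{V}^{-1}$. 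The entire content of the lemma is that $\hat{v} - \Tilde{v}$ is driven \emph{only} by the genuinely small perturbation $E$, whose max-norm is $O_p\!\big(\tfrac{\log p}{n}\sqrt{\tfrac{s\log p}{n}}\big)$ by Lemma \ref{lemma:Deltanrate} (itself a consequence of the operator-norm rate for $\Delta_n$ in Lemma \ref{lemma:vhatconvergencetovtilde}), and not by the much larger $O_p(\sqrt{\log p/n})$ sampling fluctuation $E_1 := \Tilde{M}-\Omega$ that is common to both estimators and must therefore cancel.

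First I would record the standard CLIME facts, reusing the argument of Corollary \ref{corollary:vl1rate}: for $\lambda' \propto \norm{v}_1\sqrt{\log p/n}$ the target $v$ is feasible for both programs, so by $\ell_1$-minimality $\norm{\hat{v}}_1, \norm{\Tilde{v}}_1 \le \norm{v}_1 = O(1)$ (Assumption \ref{ass:vstarbound}), and both $\hat{v}-v$ and $\Tilde{v}-v$ lie in the cone over the support $S$ of $v$, $|S| = s_v$. The plan is then to decompose $\Omega(\hat{v}-\Tilde{v}) = (\hat{M}\hat{v}-e_j) - (\Tilde{M}\Tilde{v}-e_j) - E_1(\hat{v}-\Tilde{v}) - E\hat{v}$, bound $\norm{E\hat{v}}_\infty \le \norm{E}_{\max}\norm{\hat{v}}_1$ by the target rate, invoke a restricted-eigenvalue inequality for $\Omega$ — valid on the relevant $s_v$-sparse cone by the eigenvalue bounds of Assumption \ref{ass:eigen} — to obtain $\norm{\hat{v}-\Tilde{v}}_1 \lesssim s_v\norm{\Omega(\hat{v}-\Tilde{v})}_\infty$, and finally absorb the $s_v\norm{E_1}_{\max}\norm{\hat{v}-\Tilde{v}}_1$ contribution into the left-hand side since $s_v\norm{E_1}_{\max}=o_p(1)$ under Assumption \ref{ass:sparsity}. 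The remaining residual difference $(\hat{M}\hat{v}-e_j)-(\Tilde{M}\Tilde{v}-e_j)$ is the crux, addressed next.

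The hard part will be those two constraint residuals, which individually have $\ell_\infty$-norm up to $\lambda'$: bounding their difference crudely by $2\lambda'$ — equivalently, applying the triangle inequality $\norm{\hat{v}-\Tilde{v}}_1 \le \norm{\hat{v}-v}_1 + \norm{\Tilde{v}-v}_1$ — only reproduces the slow rate $O_p(s_v\sqrt{\log p/n})$, which after scaling by $\sqrt{n}$ is $O_p(s_v\sqrt{\log p})$ and fails the $o_p(1)$ requirement of Lemma \ref{lemma:Tn}. The fast rate demands that the residuals cancel, and the mechanism is the finite-dimensional intuition $\hat{M}^{-1}-\Tilde{M}^{-1} = -\hat{M}^{-1}E\Tilde{M}^{-1}$ transported to the constrained program: at each minimal-$\ell_1$ solution the active constraints are tight at level $\pm\lambda'$, and because $\norm{E}_{\max}$ is of strictly smaller order than the separation in the optimality conditions, $\hat{v}$ and $\Tilde{v}$ inherit the same active set and sign pattern, so the $\pm\lambda'$ residuals agree on the active coordinates and cancel in the difference, leaving an $O_p(\norm{E}_{\max})$ remainder. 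Making this precise through the KKT/dual certificate of the CLIME linear program — establishing stability of the active set and dual variables under an $O_p(\norm{E}_{\max})$ perturbation, which simultaneously furnishes the cone condition on $\hat{v}-\Tilde{v}$ needed for the restricted-eigenvalue step — is where the real work lies and where non-degeneracy of the program together with Assumption \ref{ass:eigen} is genuinely used. Substituting $\norm{E}_{\max}=O_p\!\big(\tfrac{\log p}{n}\sqrt{\tfrac{s\log p}{n}}\big)$ and $\norm{v}_1=O(1)$ then yields $\norm{\hat{v}-\Tilde{v}}_1 = O_p\!\big(s_v\tfrac{\log p}{n}\sqrt{\tfrac{s\log p}{n}}\big)$, which is $o_p(n^{-1/2})$ under Assumption \ref{ass:sparsity}, exactly as required downstream.
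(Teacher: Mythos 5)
Your diagnosis of the difficulty is sharp, and your route is genuinely different from the paper's. The paper's proof bounds $\norm{\hat{v}-v}_1$ by rerunning the argument of Corollary \ref{corollary:vl1rate} with $\hat{V}_n$ in place of $\Tilde{V}$ and an enlarged tuning parameter $\lambda' = \norm{v}_1\bigl(4A_xK_{\Tilde{x}}^2\sqrt{\log(p)/n} + \norm{\frac{1}{nK}\sum_i\mathbf{X}_i^T\Delta_n\mathbf{X}_i}_{\max}\bigr)$, concludes a rate of $O_p(s_v\lambda')$, and then invokes the triangle inequality through $v$. As you correctly observe, any $\lambda'$ that makes $v$ feasible must absorb the common sampling fluctuation $\norm{v^T(\Tilde{M}-\Omega)}_\infty = O_p(\norm{v}_1\sqrt{\log(p)/n})$, so $O_p(s_v\lambda')$ is at best $O_p(s_v\sqrt{\log(p)/n})$ and the triangle inequality through $v$ (whose error against $\Tilde{v}$ is itself of that slow order by Corollary \ref{corollary:vl1rate}) cannot deliver the claimed $O_p\bigl(s_v\frac{\log(p)}{n}\sqrt{\frac{s\log(p)}{n}}\bigr)$. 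Your reformulation — that $\hat{v}-\Tilde{v}$ must be driven only by $E=\frac{1}{nK}\sum_i\mathbf{X}_i^T\Delta_n\mathbf{X}_i$ while the fluctuation $E_1=\Tilde{M}-\Omega$ common to both programs cancels — is the right way to think about what the lemma is actually asserting.

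However, your proposal does not prove it: the decisive step, cancellation of the two constraint residuals $(\hat{M}\hat{v}-e_j)$ and $(\Tilde{M}\Tilde{v}-e_j)$ up to an $O_p(\norm{E}_{\max})$ remainder, is asserted via active-set and sign-pattern stability of the CLIME linear program, and that property neither is established in your argument nor follows from Assumptions \ref{ass:sparsity}--\ref{ass:vstarbound}. Active-set identification for $\ell_1$-programs under a perturbation of the constraint matrix requires strict complementarity of the optimal dual certificate together with a quantitative gap separating active from inactive dual coordinates; for a generic design the dual variables can lie arbitrarily close to the boundary, an arbitrarily small perturbation then changes the active set, and the $\pm\lambda'$ residuals need not agree coordinatewise. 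The same issue affects the cone condition you need for the restricted-eigenvalue step: $\hat{v}-v$ and $\Tilde{v}-v$ each lie in the cone over the support of $v$, but their difference satisfies only $\norm{(\hat{v}-\Tilde{v})_{-S}}_1 \le \norm{(\hat{v}-v)_S}_1 + \norm{(\Tilde{v}-v)_S}_1$, which does not place $\hat{v}-\Tilde{v}$ in a cone over $S$ without further argument. Absent these two ingredients your bound degenerates to $\norm{(\hat{M}\hat{v}-e_j)-(\Tilde{M}\Tilde{v}-e_j)}_\infty \le 2\lambda'$ and you recover only the slow rate $O_p(s_v\sqrt{\log(p)/n})$ — which, as you note, is not $o_p(n^{-1/2})$ and would break Lemma \ref{lemma:Tn} downstream. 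So the proposal correctly isolates what must be shown but leaves exactly that step unproven.
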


We need the following additional condition on the design matrix to show the required $L_1$ rates of the initial estimators.
\begin{lemma}{(Restricted Eigenvalue (RE) Condition on $\mathbb{X}$)}

 There exists $\kappa > 0 $ such that:
    \begin{equation}\label{recondition}\kappa^2 =\min \frac{\norm{\mathbb{X} \delta}_2^2}{nK\norm{\delta_S}_2^2}  > 0, \forall \delta \in \mathbb{R}^p \neq 0 \text{ s.t. } \norm{\delta_S}_1 \geq \norm{\delta_{-S}}_1
    \end{equation}
    where $|S| \leq \min\{s, s_v\}$. Note that as long as $\lambda_{\min}(\Tilde{V}) >0 $ (Assumption \ref{ass:eigen}), this implies a RE condition on $\Tilde{\mathbb{X}} = (\Tilde{V}^{-1/2} \otimes I_{n})\mathbb{X}$. 
    
    \begin{proof}
        This condition is satisfied with high probability in the case of a subgaussian random matrix (\cite{zhou2009restricted}).
    \end{proof}
\end{lemma}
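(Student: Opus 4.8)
The plan is to establish the restricted eigenvalue condition by comparing the cluster-averaged empirical Gram matrix to its population limit, and then using the cone restriction $\norm{\delta_S}_1 \geq \norm{\delta_{-S}}_1$ to convert an entrywise deviation bound into a uniform lower bound on the restricted quadratic form. Write $\hat{\Sigma} := \tfrac{1}{nK}\mathbb{X}^T\mathbb{X} = \tfrac{1}{n}\sum_{i=1}^n \tfrac{1}{K}\mathbf{X}_i^T\mathbf{X}_i$, so that $\mathbb{E}[\hat{\Sigma}] = \tfrac{1}{K}\Sigma_X$. By Assumption \ref{ass:eigen}, $\lambda_{\min}(\tfrac{1}{K}\Sigma_X) \geq (KM)^{-1}$, hence the population quadratic form already satisfies $\delta^T\mathbb{E}[\hat{\Sigma}]\delta \geq (KM)^{-1}\norm{\delta}_2^2 \geq (KM)^{-1}\norm{\delta_S}_2^2$; it remains only to control the fluctuation of $\hat{\Sigma}$ about its mean over the cone.

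First I would establish the entrywise rate $\norm{\hat{\Sigma} - \tfrac{1}{K}\Sigma_X}_{\max} = O_p(\sqrt{\log p / n})$. The key structural point is that the cluster summands $\tfrac{1}{K}\mathbf{X}_i^T\mathbf{X}_i$ are i.i.d.\ across the $n$ clusters even though the $K$ rows within a cluster are dependent, so the concentration is governed by $n$ and not $nK$. Each entry is an average of i.i.d.\ subexponential variables (products of the subgaussian coordinates of Assumption \ref{ass:subgauss}), and a Bernstein/Hanson--Wright bound (\cite{hansonwright}) together with a union bound over the $p^2$ entries yields the stated rate; this is exactly the elementwise rate already obtained for $\tfrac{1}{n}\sum_i \mathbf{X}_i^T V^{-1}\mathbf{X}_i$ in Lemma \ref{lemma:sp2}, specialized to $V = I$.

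Next comes the deterministic cone argument. Under $\norm{\delta_{-S}}_1 \leq \norm{\delta_S}_1$ with $|S| \leq \min\{s, s_v\} \leq s$, one has $\norm{\delta}_1 \leq 2\norm{\delta_S}_1 \leq 2\sqrt{s}\,\norm{\delta_S}_2$, and the elementary inequality $|\delta^T(\hat{\Sigma} - \tfrac{1}{K}\Sigma_X)\delta| \leq \norm{\hat{\Sigma} - \tfrac{1}{K}\Sigma_X}_{\max}\norm{\delta}_1^2$ gives
\[
\frac{\norm{\mathbb{X}\delta}_2^2}{nK} = \delta^T\hat{\Sigma}\delta \geq \frac{1}{KM}\norm{\delta_S}_2^2 - 4s\,\norm{\hat{\Sigma} - \tfrac{1}{K}\Sigma_X}_{\max}\,\norm{\delta_S}_2^2.
\]
The factor multiplying the negative term is $O_p(s\sqrt{\log p/n})$, which is $o_p(1)$ since Assumption \ref{ass:sparsity} ($s\log p/\sqrt{n} = o(1)$) forces $s\sqrt{\log p/n} = o(1)$. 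Hence with probability tending to one the ratio is bounded below by $\kappa^2 := (2KM)^{-1} > 0$ uniformly over the cone, proving the RE condition for $\mathbb{X}$. The passage to $\Tilde{\mathbb{X}} = (\Tilde{V}^{-1/2}\otimes I_n)\mathbb{X}$ is then purely deterministic:
\[
\norm{\Tilde{\mathbb{X}}\delta}_2^2 = (\mathbb{X}\delta)^T(\Tilde{V}^{-1}\otimes I_n)(\mathbb{X}\delta) \geq \lambda_{\min}(\Tilde{V}^{-1})\norm{\mathbb{X}\delta}_2^2 \geq M^{-1}\norm{\mathbb{X}\delta}_2^2,
\]
using that the eigenvalues of a Kronecker product factor and $\lambda_{\max}(\Tilde{V}) \leq M$ by Assumption \ref{ass:eigen}.

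The step I expect to be the main obstacle is the entrywise concentration under within-cluster dependence: the $nK$ rows are not independent, so the subexponential tail bound must be applied at the cluster level, and the within-cluster cross terms $X_{ij,m}X_{ij,\ell}$ require their subexponential norms to be controlled uniformly in the time index $j$ and the coordinate pair $(m,\ell)$. Once that rate is secured, the remaining cone manipulation and Kronecker transfer are routine, and indeed the whole statement can be read off from the subgaussian RE theory of \cite{zhou2009restricted} after accounting for the block (cluster) structure.
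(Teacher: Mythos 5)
Your proposal is correct, but it takes a genuinely different route from the paper: the paper's proof of this lemma is a one-line citation to the subgaussian restricted-eigenvalue theory of \cite{zhou2009restricted}, whereas you give a self-contained derivation via entrywise concentration of the cluster-averaged Gram matrix plus the standard cone argument. Your argument is sound: the summands $K^{-1}\mathbf{X}_i^T\mathbf{X}_i$ are indeed i.i.d.\ across clusters, the entrywise rate $O_p(\sqrt{\log p/n})$ is exactly what the paper proves in Lemma \ref{lemma:lemma5} (specialized to $\Tilde V = I$ — note your pointer to Lemma \ref{lemma:sp2} should be to that lemma instead), the bound $\norm{\delta}_1^2 \leq 4|S|\norm{\delta_S}_2^2$ on the cone is uniform over the choice of $S$, and the Kronecker transfer to $\Tilde{\mathbb{X}}$ via $\lambda_{\min}(\Tilde V^{-1}\otimes I_n) = 1/\lambda_{\max}(\Tilde V) \geq M^{-1}$ is exactly the remark the lemma makes. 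The trade-off worth noting is that your elementary route requires $s\sqrt{\log p/n} = o(1)$ to kill the fluctuation term, while the chaining-based argument in \cite{zhou2009restricted} establishes RE under the weaker scaling $s = o(n/\log p)$; here the distinction is immaterial because Assumption \ref{ass:sparsity} ($s\log p/\sqrt{n} = o(1)$) is stronger than both, so what you buy is transparency and independence from the external reference at no cost in generality for this paper.
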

\subsection{Required Lemmas}
\begin{lemma}
\label{lemma:lemma8ai}
     \[\frac{1}{nK}\norm{\mathbb{X}^T\mathbb{X}(\bh - \bs)}_{\infty} \leq 2\lambda\]
     \end{lemma}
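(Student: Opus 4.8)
The plan is to leverage the defining constraint of the Dantzig selector together with the feasibility of the truth $\bs$, after an elementary algebraic split. Write $g(\boldsymbol{\beta}) := \frac{1}{nK}\sum_{i=1}^n \mathbf{X}_i^T(Y_i - \mathbf{X}_i\boldsymbol{\beta})$ for the Dantzig estimating function, and note $\mathbb{X}^T\mathbb{X} = \sum_{i=1}^n \mathbf{X}_i^T\mathbf{X}_i$. First I would add and subtract the cross term $\frac{1}{nK}\sum_i \mathbf{X}_i^T Y_i$ to obtain
\[
\frac{1}{nK}\mathbb{X}^T\mathbb{X}(\bh - \bs) = \frac{1}{nK}\sum_{i=1}^n \mathbf{X}_i^T(\mathbf{X}_i\bh - Y_i) + \frac{1}{nK}\sum_{i=1}^n \mathbf{X}_i^T(Y_i - \mathbf{X}_i\bs).
\]
The first sum is exactly $-g(\bh)$, and since $Y_i - \mathbf{X}_i\bs = \boldsymbol{\varepsilon}_i$ the second sum equals $g(\bs) = \frac{1}{nK}\sum_i \mathbf{X}_i^T\boldsymbol{\varepsilon}_i$. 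Applying the triangle inequality in $\norm{\cdot}_\infty$ reduces the claim to bounding each of $\norm{g(\bh)}_\infty$ and $\norm{g(\bs)}_\infty$ by $\lambda$.

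The first term is controlled for free: by the constraint in \eqref{eqn:initialdantzig} defining $\bh$, we have $\norm{g(\bh)}_\infty \le \lambda$. The substantive step is to show the noise term satisfies $\norm{g(\bs)}_\infty \le \lambda$, i.e. that $\bs$ is itself feasible for \eqref{eqn:initialdantzig}, which holds on an event of probability tending to one. I would establish this coordinatewise: the $\ell$th entry of $g(\bs)$ is $\frac{1}{nK}\sum_{i=1}^n Z_i^{\ell}$ with $Z_i^{\ell} := (\mathbf{X}_i^{\ell})^T\boldsymbol{\varepsilon}_i$, an average of $n$ i.i.d.\ mean-zero cluster-level terms (mean zero because $\boldsymbol{\varepsilon}_i \perp \mathbf{X}_i$ and $\mathbb{E}\boldsymbol{\varepsilon}_i = 0$). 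Under Assumption \ref{ass:subgauss} each $Z_i^{\ell}$ is sub-exponential, being an inner product of two sub-Gaussian $K$-vectors with $K$ fixed, so a Bernstein-type tail bound gives exponential concentration of $\frac{1}{n}\sum_i Z_i^{\ell}$ at scale $\sqrt{\log p/n}$; the factor $K^{-1}$ is a harmless constant. A union bound over the $p$ coordinates, together with the prescribed choice $\lambda \propto \sqrt{\log p/n}$, makes $\{\norm{g(\bs)}_\infty \le \lambda\}$ hold with probability $\to 1$.

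On that event the two bounds combine to give
\[
\frac{1}{nK}\norm{\mathbb{X}^T\mathbb{X}(\bh - \bs)}_\infty \le \norm{g(\bh)}_\infty + \norm{g(\bs)}_\infty \le \lambda + \lambda = 2\lambda,
\]
which is the assertion. The decomposition and the appeal to the Dantzig constraint are purely mechanical; the main obstacle is the concentration bound certifying feasibility of $\bs$. The only mild subtlety there is the within-cluster dependence among the $\varepsilon_{ij}$ and among the covariates, which is handled cleanly by treating each cluster as a single i.i.d.\ sub-exponential summand $Z_i^{\ell}$ rather than working at the level of individual observations, so that standard independent-sum concentration applies directly.
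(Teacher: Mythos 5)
Your proposal is correct and follows essentially the same route as the paper: the truth $\bs$ is feasible for \eqref{eqn:initialdantzig} with high probability by a concentration bound on $\norm{(nK)^{-1}\sum_i \mathbf{X}_i^T\boldsymbol{\varepsilon}_i}_\infty$, the estimator $\bh$ satisfies the constraint by definition, and the triangle inequality yields the factor $2\lambda$. The only (immaterial) difference is in how the within-cluster dependence is handled in the concentration step: the paper splits the sum by time point, applies a conditional Hoeffding-type bound for each fixed $j$, and union-bounds over the $K$ time points (Remark \ref{remark:gamma}), whereas you aggregate each cluster into a single i.i.d.\ sub-exponential summand and invoke Bernstein; both are valid since $K$ is fixed.
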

     \begin{proof}
   Since we know the subgaussian nature of $\boldsymbol{\varepsilon}$ and $\mathbb{X}$, we can say that \\ $\frac{1}{nK}\norm{\sum_{i=1}^n\sum_{j=1}^K\mathbf{X}_i(Y_{ij} - \mathbf{X}_i^T\boldsymbol{\beta})}_{\infty}\leq \lambda$ occurs with at least a certain probability, \[\gamma = 1 - Kep^{1-\frac{cA^2\log (p)}{2(1+2C A_x)K_x^2}} - 2Kp^{2-\Bar{c}A_x^2}\] where $A_x$ is an arbitrary constant satisfying $A_x\sqrt{\log{p}/n}\leq 1$ (addressed in Remark \ref{remark:gamma}). We know by definition that \[\frac{1}{nK}\norm{\sum_{i=1}^n\sum_{j=1}^K\mathbf{X}_i(Y_{ij} - \mathbf{X}_i^T\bh)}_{\infty}\leq \lambda\] The result follows by an application of the Triangle Inequality with probability at least $\gamma$.
   \end{proof}
\begin{remark}[Deriving $\gamma$]
\label{remark:gamma}
    \textnormal{If there were only a single observation per individual, a Hoeffding's Inequality is could be thus applied:
    \[P\left(\norm{\frac{1}{n}\mathbb{X}^T\boldsymbol{\varepsilon}}_{\infty} \geq t |\mathbb{X}\right) \leq ep\exp\left(-\frac{cnt^2}{K_{\boldsymbol{\varepsilon}}^2\norm{\Sigma_n}_{max}}\right)\]
    where $\boldsymbol{\varepsilon} = [\boldsymbol{\varepsilon}_1 ... \boldsymbol{\varepsilon}_n]^T$ is a vector of independent, centered subgaussian random variables and $K_{\boldsymbol{\varepsilon}}$ is the subgaussian norm of $\boldsymbol{\varepsilon}$. }
    \end{remark}
    \noindent If instead we consider our clustered data setting:
    
    \[P\left( \frac{1}{nK}\norm{X^T\boldsymbol{\varepsilon}}_{\infty}\geq t\right) = P\left( \frac{1}{nK}\norm{\sum_{j=1}^K \sum_{i = 1}^n \mathbf{X}_{ij}\boldsymbol{\varepsilon}_{ij}}_{\infty}\geq t\right)\]
    the vector $\boldsymbol{\varepsilon}$ is no longer a vector of independent subgaussian random variables (due to intra-cluster dependence). However, we can use the fact that the errors conditional on a specific time point are indeed independent:
    
    \begin{align*}
        P\left( \frac{1}{nK}\norm{\sum_{j=1}^K \sum_{i = 1}^n \mathbf{X}_{ij}\boldsymbol{\varepsilon}_{ij}}_{\infty}\geq t\right) &\leq P\left( \frac{1}{K}\sum_{j=1}^K\norm{ \frac{1}{n}\sum_{i = 1}^n \mathbf{X}_{ij}\boldsymbol{\varepsilon}_{ij}}_{\infty}\geq t\right) \rightarrow \text{Tri. Ineq.} \\
        &\leq P\left( \max_{j \in 1...K}\norm{ \frac{1}{n} \mathbf{X}_{j}^T\boldsymbol{\varepsilon}_{j}}_{\infty}\geq t\right) \rightarrow \text{max $>$ mean}
    \end{align*}
    where $\mathbf{X}_j = [\mathbf{X}_{1j} ... \mathbf{X}_{nj}]^T \in \mathbb{R}^{n \times p}$ and $\boldsymbol{\varepsilon}_j = [\boldsymbol{\varepsilon}_{1j}...\boldsymbol{\varepsilon}_{nj}]^T \in \mathbb{R}^n$.
    \begin{align*}
        &\leq \sum_{j=1}^K P\left(\norm{ \frac{1}{n}\mathbf{X}_{j}^T\boldsymbol{\varepsilon}_{j}}_{\infty} \geq t \right) \rightarrow \text{Union bound}\\
        &\leq \sum_{j=1}^K ep\exp\left(-\frac{cnt^2}{K_{\boldsymbol{\varepsilon}}^2\norm{\Sigma_n^j}_{max}}\right) \rightarrow \text{Hoeffding Inequality}
    \end{align*}
    where $K_{\boldsymbol{\varepsilon}}$ is the subgaussian norm of $\boldsymbol{\varepsilon}_j$ and $\Sigma_n^j = n^{-1}\mathbf{X}_j^T\mathbf{X}_j$. Choose $t = AK_{\boldsymbol{\varepsilon}}\sqrt{\log{p}/n}$.
     \begin{align*}
         &= \sum_{j=1}^Kep\exp\left(-\frac{cA^2\log (p)}{\norm{\Sigma_n^j}_{\max}}\right)
     \end{align*}

    Noting that 
    \begin{align*}
        \norm{\Sigma_n^j}_{\max} - \norm{\Sigma_X}_{\max} &\leq \norm{\Sigma_n^j - \Sigma_X}_{\max}\\
        \norm{\Sigma_n^j}_{\max} &\leq 4A_xK_x^2\sqrt{\log{(p)}/n} + 2K_x^2 
    \end{align*}
    where the last line follows with probability at least $1 - 2p^{2-\Bar{c}A_x^2}$ by Lemma \ref{lemma:lemma5} with $V = I$.
    the above becomes
    \begin{align*}
        &\leq \sum_{j=1}^Kep^{1-\frac{cA^2\log (p)}{2(1+2C A_x)K_x^2}}\\
        &= Kep^{1-\frac{cA^2\log (p)}{2(1+2C A_x)K_x^2}}
    \end{align*}
    where $\sqrt{\log{(p)}/n} \leq C$. Thus the probability of this event occurring is at least $1 - Kep^{1-\frac{cA^2\log (p)}{2(1+2C A_x)K_x^2}} - 2Kp^{2-\Bar{c}A_x^2}$.
\begin{lemma}[Sums of Subgaussian are always Subgaussian]
\label{lemma:subgaussian}
If $X$ and $Y$ are subgaussian random variables (possibly dependent) with subgaussian norm bounds $\sigma_x$ and $\sigma_y$, then $X+Y$ is a subgaussian random variable with subgaussian norm bound $\sigma_x + \sigma_y$.
\begin{proof}
Trivial using definition of subgaussian norm.
 %    Using Holder's Inequality:
 %    \begin{align*}
 %        \mathbb{E}[e^{t(X+Y)}] \leq \left(\mathbb{E}[e^{tXp}]^{1/p}\right)\left(\mathbb{E}[e^{tY(1-p)}]^{1/(1-p)}\right)
 %    \end{align*}
 % and applying subgaussianity of the individual variables, we get a bound that is a function of $p$. Then we find the value of $p$ which gives the tightest bound. 
 %(https://math.stackexchange.com/questions/3866298/sum-of-sub-gaussian-random-variables/38663243866324)
\end{proof}

\end{lemma}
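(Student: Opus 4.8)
The plan is to prove this directly from the Orlicz-norm definition of the subgaussian norm, namely $\|Z\|_{\psi_2} = \inf\{t > 0 : \mathbb{E}[\exp(Z^2/t^2)] \leq 2\}$, and to observe that the triangle inequality follows from convexity alone, with no appeal to independence --- which is precisely why the ``possibly dependent'' qualifier costs nothing. First I would set $a = \sigma_x = \|X\|_{\psi_2}$ and $b = \sigma_y = \|Y\|_{\psi_2}$, so that by definition $\mathbb{E}[\exp(X^2/a^2)] \leq 2$ and $\mathbb{E}[\exp(Y^2/b^2)] \leq 2$. The goal then reduces to verifying
\[
\mathbb{E}\!\left[\exp\!\left(\frac{(X+Y)^2}{(a+b)^2}\right)\right] \leq 2,
\]
since this certifies $\|X+Y\|_{\psi_2} \leq a + b$. (The degenerate cases $a=0$ or $b=0$, where the corresponding variable vanishes almost surely, are immediate and handled separately.)

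The central step is a double application of convexity. Introduce the convex weights $\theta = a/(a+b)$ and $1-\theta = b/(a+b)$. Because $x \mapsto x^2$ is convex, Jensen's inequality applied pointwise gives
\[
\left(\frac{X+Y}{a+b}\right)^2 = \left(\theta\,\frac{X}{a} + (1-\theta)\,\frac{Y}{b}\right)^2 \leq \theta\,\frac{X^2}{a^2} + (1-\theta)\,\frac{Y^2}{b^2}.
\]
Applying the convex exponential and Jensen once more yields
\[
\exp\!\left(\frac{(X+Y)^2}{(a+b)^2}\right) \leq \theta\exp\!\left(\frac{X^2}{a^2}\right) + (1-\theta)\exp\!\left(\frac{Y^2}{b^2}\right).
\]
Taking expectations and invoking the two defining bounds gives $2\theta + 2(1-\theta) = 2$, which is exactly the required inequality.

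The argument is genuinely routine, so there is no real technical obstacle; the only point worth flagging is conceptual. Both displayed inequalities hold trajectory-by-trajectory, \emph{before} any expectation is taken, so the joint law of $(X,Y)$ never enters the computation --- only the linearity of expectation is used at the very end. This is what makes the bound valid for dependent summands, and it is the feature the surrounding proofs exploit when controlling sums involving the within-cluster errors $\boldsymbol{\varepsilon}_i$, whose components are correlated. An induction on the number of summands then extends the bound to any finite sum, each term contributing additively to the $\psi_2$-norm regardless of dependence structure.
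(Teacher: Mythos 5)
Your proof is correct, and it supplies the argument that the paper's one-line ``trivial'' proof omits: the pointwise convexity bound $\bigl(\theta\,\tfrac{X}{a}+(1-\theta)\tfrac{Y}{b}\bigr)^2\le \theta\,\tfrac{X^2}{a^2}+(1-\theta)\tfrac{Y^2}{b^2}$ followed by convexity of $\exp$ is the standard triangle-inequality proof for the Orlicz $\psi_2$-norm, and you correctly identify that no independence is ever used, which is the whole point of the ``possibly dependent'' qualifier and of the lemma's role in handling the correlated rows of $\widetilde{V}^{-1/2}\mathbf{X}_i$. Two minor remarks: the statement speaks of norm \emph{bounds} $\sigma_x,\sigma_y$ rather than the norms themselves, but monotonicity of $t\mapsto \mathbb{E}[\exp(Z^2/t^2)]$ means your argument goes through verbatim with any valid upper bounds in place of $a,b$; and the adjacent Lemma A.8 in the paper works with the moment-generating-function characterization $\mathbb{E}[e^{tZ}]\le e^{\sigma^2t^2/2}$, under which the same additive bound $\sigma_x+\sigma_y$ for dependent summands follows instead from H\"older's inequality with exponents $p=(\sigma_x+\sigma_y)/\sigma_x$ and $q=(\sigma_x+\sigma_y)/\sigma_y$ --- either route is independence-free and serves the paper's purposes equally well.
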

\begin{lemma}[Scalar products of Subgaussian are always Subgaussian]
\label{lemma:scalarsubgaussian}
If $X$ is a subgaussian random variable with subgaussian norm bounds $\sigma_x$ , then $cX$ is a subgaussian random variable with subgaussian norm bound $c\sigma_x$.
\begin{proof}
    \begin{align*}
        \mathbb{E}[e^{t(cX)}] \leq \exp \left\{ \frac{c^2\sigma_x^2t^2}{2}\right\}
    \end{align*}
 and it follows directly from the definition of subgaussianity that the subgaussian norm bound of $cX$ is $c\sigma_x$.
\end{proof}

\end{lemma}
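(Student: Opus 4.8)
The plan is to work directly from the moment-generating-function characterization of subgaussianity used in the preceding lemma, namely that a random variable $X$ with subgaussian norm bound $\sigma_x$ satisfies $\mathbb{E}[e^{tX}] \leq \exp(\sigma_x^2 t^2/2)$ for every $t \in \mathbb{R}$. First I would express the MGF of the scaled variable $cX$ in terms of that of $X$: for any fixed $t$, the identity $\mathbb{E}[e^{t(cX)}] = \mathbb{E}[e^{(ct)X}]$ reduces the problem to evaluating the known bound for $X$ at the rescaled argument $ct$. Applying that bound yields $\mathbb{E}[e^{(ct)X}] \leq \exp(\sigma_x^2 (ct)^2/2) = \exp\bigl((c^2\sigma_x^2)\,t^2/2\bigr)$, and since this holds for all $t$, the final step is simply to read off from the defining inequality that $cX$ is subgaussian with parameter $|c|\,\sigma_x$, using $c^2 = |c|^2$.

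The result is immediate, so there is no substantive obstacle; the only points that merit care are notational. The cleanest formulation carries an absolute value, giving a subgaussian norm bound $|c|\,\sigma_x$, which collapses to the stated $c\,\sigma_x$ in the case $c \geq 0$ that arises when $c$ enters as a nonnegative weight. If one instead prefers the Orlicz-norm definition of the subgaussian norm, the conclusion is even more direct, since positive homogeneity of a norm gives $\|cX\|_{\psi_2} = |c|\,\|X\|_{\psi_2}$ with no computation at all. I would present the MGF version to stay consistent with the definition invoked in the companion subgaussian lemmas, noting the equivalence with the Orlicz-norm route as a remark.
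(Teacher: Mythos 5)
Your proposal is correct and follows essentially the same route as the paper: both substitute the rescaled argument $ct$ into the moment-generating-function bound for $X$ to obtain $\mathbb{E}[e^{t(cX)}] \leq \exp(c^2\sigma_x^2 t^2/2)$ and read off the conclusion from the definition. Your added care with the absolute value (bound $|c|\,\sigma_x$ for possibly negative $c$) and the remark on the Orlicz-norm shortcut are minor refinements of the paper's one-line argument, not a different approach.
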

\begin{lemma}
\label{lemma:knownvlemma8aii}

    Let $\Tilde{\mathbf{X}}_i = {\Tilde{V}}^{-1/2}\mathbf{X}_i$ and $\Tilde{\mathbf{X}}_{ij}$ be the $j$th row of $\Tilde{\mathbf{X}}_i$. Note that the elements of $\Tilde{\mathbf{X}}_{ij}$ are still subgaussian random variables (Lemmas \ref{lemma:subgaussian} and \ref{lemma:scalarsubgaussian}). Let the relevant subgaussian norm be $K_{\Tilde{x}}$. Letting $\Sigma := \mathbb{E}[K^{-1}\mathbf{X}^T_i{\Tilde{V}}^{-1}\mathbf{X}_i]$ and $\lambda' = \norm{v}_14A_xK_{\Tilde{x}}^2\sqrt{\log{p}/n}$, we have with probability at least $1-2Kp^{2-\Bar{c}A_x^2}$:
    \begin{equation}
    \label{eqn:vineq}
        \norm{\frac{1}{nK}\sum_{i = 1}^n\Tilde{\mathbf{X}}_i^T\Tilde{\mathbf{X}}_iv - e_1}_{\infty} = \norm{v^T\left[\frac{1}{nK}\sum_{i = 1}^n\Tilde{\mathbf{X}}_i^T\Tilde{\mathbf{X}}_i - \Sigma\right]}_{\infty} \leq \lambda'
    \end{equation}
    \end{lemma}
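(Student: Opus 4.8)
The plan is to reduce the claimed bound to a max-norm concentration inequality for the empirical second-moment matrix of the transformed covariates, and then to control that concentration by splitting the clustered sum across time points so that within each time point the summands are i.i.d.

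First I would justify the stated equality. Writing $\hat{\Sigma} := \frac{1}{nK}\sum_{i=1}^n \Tilde{\mathbf{X}}_i^T\Tilde{\mathbf{X}}_i$ and noting that $\mathbf{X}_i^T{\Tilde{V}}^{-1}\mathbf{X}_i = \Tilde{\mathbf{X}}_i^T\Tilde{\mathbf{X}}_i$, the vector $v$ is by construction the relevant column of $\Sigma^{-1}$ (here the first), so $\Sigma v = e_1$. Hence $\hat{\Sigma}v - e_1 = (\hat{\Sigma} - \Sigma)v$, and since both $\hat{\Sigma}$ and $\Sigma$ are symmetric, $(\hat{\Sigma} - \Sigma)v = v^T(\hat{\Sigma} - \Sigma)$ as a vector, which is exactly the asserted identity. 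An application of H\"older's inequality then gives $\norm{v^T(\hat{\Sigma} - \Sigma)}_{\infty} \leq \norm{v}_1\,\norm{\hat{\Sigma} - \Sigma}_{\max}$, so it suffices to show $\norm{\hat{\Sigma} - \Sigma}_{\max} \leq 4A_x K_{\Tilde{x}}^2\sqrt{\log p/n}$ on the stated event.

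Next I would address the concentration, where the only real difficulty is intra-cluster dependence: across all pairs $(i,j)$ the vectors $\Tilde{\mathbf{X}}_{ij}$ are not independent. I would circumvent this exactly as in Remark \ref{remark:gamma}. Writing $\Tilde{\mathbf{X}}_i^T\Tilde{\mathbf{X}}_i = \sum_{j=1}^K \Tilde{\mathbf{X}}_{ij}\Tilde{\mathbf{X}}_{ij}^T$, we obtain $\hat{\Sigma} - \Sigma = \frac{1}{K}\sum_{j=1}^K \bigl( \frac{1}{n}\sum_{i=1}^n \Tilde{\mathbf{X}}_{ij}\Tilde{\mathbf{X}}_{ij}^T - \mathbb{E}[\Tilde{\mathbf{X}}_{ij}\Tilde{\mathbf{X}}_{ij}^T]\bigr)$. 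By the triangle inequality and the fact that a maximum dominates an average, $\norm{\hat{\Sigma}-\Sigma}_{\max}$ is at most $\max_{1\le j\le K}$ of the per-time-point deviations, and a union bound over $j$ turns the tail probability into $\sum_{j=1}^K \mathbb{P}(\cdots \ge t)$. For each fixed $j$ the summands $\Tilde{\mathbf{X}}_{ij}\Tilde{\mathbf{X}}_{ij}^T$ are i.i.d.\ across $i$, with entries that are products of subgaussian coordinates (the subgaussianity of $\Tilde{\mathbf{X}}_{ij}$ following from Lemmas \ref{lemma:subgaussian} and \ref{lemma:scalarsubgaussian}, since ${\Tilde{V}}^{-1/2}$ is a fixed matrix with operator norm bounded by Assumption \ref{ass:eigen}, yielding subgaussian norm $K_{\Tilde{x}}$). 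Thus Lemma \ref{lemma:lemma5} applies verbatim at each $j$, giving a per-time-point bound of $2p^{2-\Bar{c}A_x^2}$ at the threshold $t = 4A_x K_{\Tilde{x}}^2\sqrt{\log p/n}$; summing over the $K$ time points produces the factor $2Kp^{2-\Bar{c}A_x^2}$.

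Combining, on the complement event (probability at least $1 - 2Kp^{2-\Bar{c}A_x^2}$) we have $\norm{\hat{\Sigma} - \Sigma}_{\max} \le 4A_x K_{\Tilde{x}}^2\sqrt{\log p/n}$, and multiplying by $\norm{v}_1$ delivers the bound with $\lambda' = \norm{v}_1\,4A_x K_{\Tilde{x}}^2\sqrt{\log p/n}$. The main obstacle is the dependence within clusters, and the decisive step is the per-time-point decomposition that reduces the clustered problem to $K$ independent-sum problems handled by Lemma \ref{lemma:lemma5}; everything else is H\"older's inequality and a union bound. I would only take care that the constant $\Bar{c}$ and the subgaussian norm $K_{\Tilde{x}}$ are uniform over the $K$ time points, which holds because the eigenvalue bounds of Assumption \ref{ass:eigen} are uniform in $j$.
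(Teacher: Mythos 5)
Your proposal is correct and follows essentially the same route as the paper: H\"older's inequality to pull out $\norm{v}_1$, a per-time-point decomposition of the clustered sum via the triangle inequality and max-over-mean, a union bound over the $K$ time points, and the Bernstein-type concentration of Lemma \ref{lemma:lemma5} applied to the resulting i.i.d.\ sums, yielding the probability $1-2Kp^{2-\Bar{c}A_x^2}$. Your explicit justification of the identity $\hat{\Sigma}v - e_1 = (\hat{\Sigma}-\Sigma)v$ via $\Sigma v = e_1$ is a welcome detail the paper leaves implicit, but it does not change the argument.
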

\begin{proof}
    \begin{align*}
        P\left(\norm{v^T\left[\frac{1}{nK}\sum_{i = 1}^n\Tilde{\mathbf{X}_i}^T\Tilde{\mathbf{X}_i} - \Sigma\right]}_{\infty} \geq t\right) &\leq P\left(\norm{v}_1\norm{\left[\frac{1}{nK}\sum_{i = 1}^n\Tilde{\mathbf{X}}_i^T\Tilde{\mathbf{X}}_i - \Sigma\right]}_{\max} \geq t \right) \\
        &\leq P\left(\norm{v}_1\frac{1}{K}\sum_{j = 1}^K\norm{\left[\frac{1}{n}\sum_{i = 1}^n\Tilde{\mathbf{X}}_{ij}\Tilde{\mathbf{X}}_{ij}^T - \Sigma\right]}_{\max} \geq t \right) \\
        &\leq P\left(\norm{v}_1\max_{j \in 1,...K}\norm{\left[\frac{1}{n}\sum_{i = 1}^n\Tilde{\mathbf{X}}_{ij}\Tilde{\mathbf{X}}_{ij}^T - \Sigma\right]}_{\max} \geq t \right)\\
        &\leq \sum_{j = 1}^KP\left(\norm{v}_1\norm{\left[\frac{1}{n}\sum_{i = 1}^n\Tilde{\mathbf{X}}_{ij}\Tilde{\mathbf{X}}_{ij}^T - \Sigma\right]}_{\max} \geq t \right)\\
    \end{align*}

From Lemma \ref{lemma:lemma5}, with probability at least $1-2p^{2-\Bar{c}KA_x^2}$:

\[\norm{\left[\frac{1}{nK}\sum_{i = 1}^n\Tilde{\mathbf{X}_i}^T\Tilde{\mathbf{X}_i} - \Sigma\right]}_{\max} \leq 4A_xK_{\Tilde{x}}^2\sqrt{\log{p}/n}\]

where $\Bar{c} >0$ is an absolute constant and $A_x$ is an arbitrary constant satisfying $A_x\sqrt{\log{p}/n}\leq 1$.
Thus, setting $\lambda' = \norm{v}_14A_xK_{\Tilde{x}}^2\sqrt{\log{p}/n}$, we can find a lower bound of the probability of Equation \ref{eqn:vineq}, namely $1-2Kp^{2-\Bar{c}A_x^2}$.
\end{proof}
    \begin{lemma}\label{lemma:bs1}
    Recall that $\beta_S$ refers to the vector $\beta$ with the corresponding elements not in the indexing set $S$ set to zero and $\bhms$ refers to the vector $\beta$ with the elements in the indexing set $S$ set to zero. Here $S$ is the set of indices corresponding to the $s$ non-zero elements of $\bs$.
    \label{lemma:lemma8b}
    \[\norm{\bh_S - \bs_S}_1 \geq \norm{\bh_{-S} - \bs_{-S}}_1\]
    \end{lemma}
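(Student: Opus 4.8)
The plan is to exploit the defining optimality property of the Dantzig selector \eqref{eqn:initialdantzig}: among all \emph{feasible} vectors it is the one of minimal $\ell_1$-norm. First I would condition on the high-probability event (with probability at least $\gamma$, established within the proof of Lemma \ref{lemma:lemma8ai} via Remark \ref{remark:gamma}) on which the true coefficient vector $\bs$ itself satisfies the Dantzig feasibility constraint $\tfrac{1}{nK}\norm{\sum_{i=1}^n\mathbf{X}_i^T(\mathbf{Y}_{i} - \mathbf{X}_i\bs)}_{\infty} \le \lambda$. On this event, since $\bh$ minimizes the $\ell_1$-norm over the feasible set and $\bs$ is feasible, we obtain immediately that $\norm{\bh}_1 \le \norm{\bs}_1$.

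The second step is a decomposition over the support. Because $S$ indexes the nonzero coordinates of $\bs$, we have $\bsms = 0$, so that $\norm{\bs}_1 = \norm{\bss}_1$ and $\norm{\bhms}_1 = \norm{\bhms - \bsms}_1$. Splitting the left-hand side coordinatewise as $\norm{\bh}_1 = \norm{\bhs}_1 + \norm{\bhms}_1$, the inequality from the first step becomes $\norm{\bhs}_1 + \norm{\bhms}_1 \le \norm{\bss}_1$.

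The final step applies the reverse triangle inequality $\norm{\bhs}_1 \ge \norm{\bss}_1 - \norm{\bhs - \bss}_1$ to the first term, then cancels the common $\norm{\bss}_1$ from both sides. This yields $\norm{\bhms}_1 \le \norm{\bhs - \bss}_1$, and rewriting $\norm{\bhms}_1 = \norm{\bhms - \bsms}_1$ (using $\bsms = 0$) gives exactly the claimed cone condition $\norm{\bhs - \bss}_1 \ge \norm{\bhms - \bsms}_1$.

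There is no genuine analytic obstacle here; this is the classical \emph{cone condition} that underlies the restricted-eigenvalue analysis invoked later (e.g. \cite{zhou2009restricted}). The only point that requires care is the conditioning in the first step: the conclusion holds precisely on the event where $\bs$ is feasible for the program, and it is this event that must be carried through consistently to the downstream $L_1$-rate results of Theorem \ref{theorem:l1rates}. I would therefore state and use the lemma as holding on that event, in agreement with Lemma \ref{lemma:lemma8ai}.
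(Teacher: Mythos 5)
Your proposal is correct and follows essentially the same argument as the paper: the triangle inequality on the $S$-block, the decomposition $\norm{\bh}_1 = \norm{\bhs}_1 + \norm{\bhms}_1$ together with $\bsms = 0$, and the $\ell_1$-minimality $\norm{\bh}_1 \le \norm{\bs}_1$ of the Dantzig solution. Your explicit conditioning on the high-probability event that $\bs$ is feasible is a welcome refinement — the paper asserts $\norm{\bh}_1 \le \norm{\bs}_1$ ``by definition,'' which strictly speaking requires that feasibility event to hold.
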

    \begin{proof} We start with an application of the Triangle Inequality:
    \begin{align*}
        \norm{\bhs}_1 + \norm{\bss - \bhs}_1 &\geq \norm{\bss}_1 \\
        \norm{\bhs}_1 + \norm{\bhs - \bss}_1 &\geq \norm{\bss}_1 \\
        \norm{\bhs}_1 &\geq  \norm{\bss}_1 - \norm{\bhs - \bss}_1 \\
    \end{align*}
    Note that $\norm{\bhs}_1 = \norm{\bh}_1 - \norm{\bhms}_1$ and $\norm{\bhms}_1 = \norm{\bhms - \bsms}$, since $\bsms = 0$.\\
    Therefore,
    \[\norm{\bh}_1 - \norm{\bhms - \bsms }_1 \geq \norm{\bss}_1 - \norm{\bhs - \bss}_1\]
    Since $\norm{\bs}_1 = \norm{\bss}_1$, 
    \[\norm{\bh}_1 \geq \norm{\bs}_1 - \norm{\bhs - \bss}_1 +  \norm{\bhms - \bsms }_1 \]
    Since by definition $\norm{\bh}_1 \leq \norm{\bs}_1$,
    \[\norm{\bhs - \bss}_1 \geq \norm{\bhms - \bsms }_1\] as desired to show.
    \end{proof}
    \begin{lemma}\label{lemma:vhats} Let $S_v$ be the set of indices corresponding to the $s_v$ non-zero elements of $v$.
\[\norm{\Tilde{v}_{S_v} - v_{S_v}}_1 \geq \norm{\Tilde{v}_{-S_v} - v_{-S_v}}_1\]
\begin{proof}
    This proof structure is the same as that of Lemma \ref{lemma:bs1}.
\end{proof}
\end{lemma}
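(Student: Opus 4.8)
The plan is to reproduce, essentially verbatim, the cone-condition argument of Lemma \ref{lemma:bs1}, since the two statements are structurally identical: each asserts that the $\ell_1$ estimation error of an $\ell_1$-minimizing Dantzig-type estimator concentrates on the small support of its sparse target. The engine of the argument is the basic inequality $\norm{\Tilde{v}}_1 \leq \norm{v}_1$, which holds provided the true target $v$ is feasible for the constraint set defining $\Tilde{v}$, together with the sparsity of $v$, namely $v_{-S_v} = 0$.

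First I would verify the feasibility of $v$, which is the only ingredient not already present in Lemma \ref{lemma:bs1}. Recall that $\Tilde{v}$ minimizes $\norm{\cdot}_1$ subject to $\frac{1}{nK}\norm{\sum_{i=1}^n \mathbf{X}_i^T\Tilde{V}^{-1}\mathbf{X}_i\, v - e_j}_{\infty} \leq \lambda'$, and that by construction $v = [\Sigma]^{-1}_{j\cdot}$ with $\Sigma := \mathbb{E}[K^{-1}\mathbf{X}_i^T\Tilde{V}^{-1}\mathbf{X}_i]$, so that (reading $v$ as a column vector and using symmetry of $\Sigma$) $\Sigma v = e_j$. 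I would then write the constraint residual as
\[
\frac{1}{nK}\sum_{i=1}^n \mathbf{X}_i^T\Tilde{V}^{-1}\mathbf{X}_i\, v - e_j = \left(\frac{1}{nK}\sum_{i=1}^n \mathbf{X}_i^T\Tilde{V}^{-1}\mathbf{X}_i - \Sigma\right)v,
\]
and invoke Lemma \ref{lemma:knownvlemma8aii}, which bounds its $\ell_\infty$ norm by $\lambda' = \norm{v}_1 4 A_x K_{\Tilde{x}}^2\sqrt{\log p/n}$ with probability at least $1 - 2Kp^{2 - \Bar{c}A_x^2}$. On this event $v$ lies in the feasible set, and by optimality of $\Tilde{v}$ I obtain $\norm{\Tilde{v}}_1 \leq \norm{v}_1$.

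The remaining steps are purely deterministic $\ell_1$-norm manipulations, mirroring Lemma \ref{lemma:bs1}. I would apply the triangle inequality to get $\norm{\Tilde{v}_{S_v}}_1 \geq \norm{v_{S_v}}_1 - \norm{\Tilde{v}_{S_v} - v_{S_v}}_1$; use the disjointness of supports to split $\norm{\Tilde{v}}_1 = \norm{\Tilde{v}_{S_v}}_1 + \norm{\Tilde{v}_{-S_v}}_1$; and exploit $v_{-S_v}=0$ to rewrite $\norm{\Tilde{v}_{-S_v}}_1 = \norm{\Tilde{v}_{-S_v} - v_{-S_v}}_1$ and $\norm{v}_1 = \norm{v_{S_v}}_1$. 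Substituting these identities into $\norm{\Tilde{v}}_1 \leq \norm{v}_1$ and rearranging would then yield $\norm{\Tilde{v}_{S_v} - v_{S_v}}_1 \geq \norm{\Tilde{v}_{-S_v} - v_{-S_v}}_1$, as claimed.

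The only non-routine ingredient is the feasibility step; everything else is algebra. I expect the main obstacle to be bookkeeping rather than substance: ensuring $v$ is treated consistently as the column vector for which $\Sigma v = e_j$, and confirming that the tuning level $\lambda'$ is chosen to exactly match the max-norm concentration rate of the sample weighted Gram matrix $\frac{1}{nK}\sum_i \mathbf{X}_i^T\Tilde{V}^{-1}\mathbf{X}_i$ around $\Sigma$ delivered by Lemma \ref{lemma:knownvlemma8aii}, so that the true $v$ is genuinely feasible and the basic inequality is available.
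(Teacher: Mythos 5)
Your proposal is correct and follows essentially the same route as the paper, which simply notes that the argument mirrors Lemma \ref{lemma:bs1}: feasibility of $v$ gives the basic inequality $\norm{\Tilde{v}}_1 \leq \norm{v}_1$, and the triangle inequality together with $v_{-S_v}=0$ yields the cone condition. Your explicit verification of feasibility via Lemma \ref{lemma:knownvlemma8aii} (and the attendant high-probability qualifier) is a detail the paper leaves implicit, but it is exactly the right ingredient and does not change the approach.
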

\begin{theorem}[$\ell_1$ Rates of Initial Estimators]
\label{theorem:l1rates}
\[\norm{\bh -\bs}_1 = O_p\left(s\sqrt{\frac{\log(p)}{n}}\right)\]
\end{theorem}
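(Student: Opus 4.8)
The plan is to carry out the classical Dantzig-selector $\ell_1$-error argument (\cite{candes2007dantzig}, \cite{bickel2009simultaneous}) adapted to the clustered design, assembling three ingredients already in hand: feasibility of the true parameter, the cone condition on the estimation error, and the restricted eigenvalue condition. Write $\hat\delta := \bh - \bs$ for the estimation error and let $S$ index the $s$ nonzero coordinates of $\bs$. First I would record feasibility and its consequences. By the deviation bound of Remark \ref{remark:gamma}, with probability at least $\gamma \to 1$ the true $\bs$ satisfies the constraint in \eqref{eqn:initialdantzig}, since $\frac{1}{nK}\norm{\mathbb{X}^T(\mathbb{Y}-\mathbb{X}\bs)}_{\infty} = \frac{1}{nK}\norm{\mathbb{X}^T\boldsymbol{\varepsilon}}_{\infty} \leq \lambda$. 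Optimality of $\bh$ then yields $\norm{\bh}_1 \leq \norm{\bs}_1$, which is exactly the starting inequality of Lemma \ref{lemma:bs1}; hence $\hat\delta$ obeys the cone condition $\norm{\hat\delta_{-S}}_1 \leq \norm{\hat\delta_{S}}_1$. On the same event, Lemma \ref{lemma:lemma8ai} supplies the correlation bound $\frac{1}{nK}\norm{\mathbb{X}^T\mathbb{X}\hat\delta}_{\infty} \leq 2\lambda$.

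Second comes the deterministic chain. The cone condition gives $\norm{\hat\delta}_1 \leq 2\norm{\hat\delta_{S}}_1 \leq 2\sqrt{s}\,\norm{\hat\delta_{S}}_2$ by Cauchy--Schwarz. Because $\hat\delta$ lies in the cone, the restricted eigenvalue condition \eqref{recondition} applies and yields $\kappa^2\norm{\hat\delta_{S}}_2^2 \leq \frac{1}{nK}\norm{\mathbb{X}\hat\delta}_2^2$. I would then control the quadratic form by H\"older's inequality, $\frac{1}{nK}\norm{\mathbb{X}\hat\delta}_2^2 = \frac{1}{nK}\hat\delta^T\mathbb{X}^T\mathbb{X}\hat\delta \leq \norm{\hat\delta}_1\,\frac{1}{nK}\norm{\mathbb{X}^T\mathbb{X}\hat\delta}_{\infty} \leq 2\lambda\norm{\hat\delta}_1$, and feed in the $\ell_1$ cone bound to obtain $\kappa^2\norm{\hat\delta_{S}}_2^2 \leq 4\lambda\sqrt{s}\,\norm{\hat\delta_{S}}_2$. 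Dividing through gives $\norm{\hat\delta_{S}}_2 \leq 4\lambda\sqrt{s}/\kappa^2$, whence $\norm{\hat\delta}_1 \leq 2\sqrt{s}\,\norm{\hat\delta_{S}}_2 \leq 8\lambda s/\kappa^2$. Substituting $\lambda \propto \sqrt{\log p / n}$ delivers $\norm{\bh - \bs}_1 = O_p\!\left(s\sqrt{\log p / n}\right)$.

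The genuinely nontrivial work is not this chain, which is essentially textbook, but the two probabilistic inputs it rests on under within-cluster dependence. The first is the feasibility bound on $\frac{1}{nK}\norm{\mathbb{X}^T\boldsymbol{\varepsilon}}_{\infty}$: here the intra-cluster correlation prevents a direct application of a vector concentration inequality, and this is circumvented in Remark \ref{remark:gamma} by splitting over the $K$ time points so that the errors are independent across clusters at each fixed time, then combining a union bound with a Hoeffding-type inequality. The second is the restricted eigenvalue property of the clustered subgaussian design, which follows from \cite{zhou2009restricted} and holds with high probability as noted alongside \eqref{recondition}. Since the feasibility event, the RE event, and the correlation-bound event each have probability tending to one, the claimed $O_p$ rate follows on their intersection; the point where I would be most careful is confirming that these high-probability events are simultaneously valid and that $\kappa$ is bounded below uniformly in $n$ and $p$, so that the constant $8/\kappa^2$ does not degrade the rate.
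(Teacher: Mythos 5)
Your proposal is correct and follows essentially the same route as the paper's proof: the identical chain of H\"older's inequality, the correlation bound of Lemma \ref{lemma:lemma8ai}, the cone condition of Lemma \ref{lemma:lemma8b}, the $\ell_1$--$\ell_2$ norm comparison, and the restricted eigenvalue condition, arriving at the same bound $\norm{\bh-\bs}_1 \leq 8\lambda s/\kappa^2$. Your added attention to the simultaneous validity of the high-probability events and the uniform lower bound on $\kappa$ is a sensible precaution but does not change the argument.
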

\begin{proof}
Consider 
\begin{align*}
    \frac{1}{nK}\norm{X(\bh - \bs)}_2^2 &= \frac{1}{nK}(\bh - \bs)^TX^TX(\bh-\bs)\\
    &\leq \frac{1}{nK}\norm{\bh - \bs}_1\norm{X^TX(\bh - \bs)}_{\infty} \rightarrow \text{H\"{o}lder's}\\
    &\leq {2\lambda}{\norm{\bh - \bs}_1} \rightarrow \text{Lemma \ref{lemma:lemma8ai}, with probability at least $\gamma$}\\
    &= {2\lambda}\left({\norm{\bhs - \bss}_1 + \norm{\bhms - \bsms}_1}\right)\\
    &\leq {2\lambda}\cdot 2\norm{\bhs - \bss}_1 \rightarrow \text{Lemma \ref{lemma:lemma8b}}\\
    &\leq{4\lambda}\sqrt{s}\norm{\bhs - \bss}_2 \rightarrow \text{relationship between 1-and 2-norms}\\
\end{align*}
From Condition \ref{recondition} and Lemma \ref{lemma:lemma8b}:
\[\kappa^2\norm{\bhs - \bss}_2^2 \leq \frac{1}{nK}{\norm{X(\bh - \bs)}_2^2}\]
So 
\begin{align*}
    \kappa^2\norm{\bhs - \bss}_2^2 &\leq {4\lambda}\sqrt{s}\norm{\bhs - \bss}_2\\
    \norm{\bhs - \bss}_2 &\leq \frac{4\lambda}{\kappa^2}\sqrt{s}
\end{align*}
Note that plugging this result back into the above yields:
\begin{align*}
    \frac{1}{nK}\norm{X(\bh - \bs)}_2^2 \leq{4\lambda}\sqrt{s} \cdot \frac{4\lambda}{\kappa^2}\sqrt{s} &= O_p\left(\frac{s\log(p)}{n}\right) \rightarrow \text{choosing $\lambda$ as below}
\end{align*}
Finally, 
\begin{align*}
    \norm{\bh - \bs}_1 &= \norm{\bhs - \bss}_1 + \norm{\bhms - \bsms}_1\\
    &\leq 2\norm{\bhs - \bss}_1 \rightarrow \text{Lemma \ref{lemma:lemma8b}}\\
    &\leq 2\sqrt{s}\norm{\bhs - \bss}_2 \rightarrow \text{relationship between 1-and 2-norms}\\
    &\leq \frac{8\lambda}{\kappa^2 }s \rightarrow \text{above result}
\end{align*}
When we choose $\lambda = C\sqrt{\frac{\log(p)}{n}}$ for a constant $C$ (related to the probability $\gamma$ - see Remark \ref{remark:gamma}),
\[\norm{\bh - \bs}_1 \leq \frac{8C}{\kappa^2}s\sqrt{\frac{\log(p)}{n}} = O_p\left(s\sqrt{\frac{\log(p)}{n}}\right)\]
as was desired to show.
\end{proof}
In a parallel fashion, we can show the following corollary:
\begin{corollary}
\label{corollary:vl1rate}
    \begin{equation}
    \norm{\Tilde{v} - v}_1 = O_p\left(\norm{v}_1s_v\sqrt{\frac{\log(p)}{n}}\right)
\end{equation}
\end{corollary}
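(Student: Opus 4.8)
The plan is to run the argument of Theorem~\ref{theorem:l1rates} almost verbatim, with the prediction error $\frac{1}{nK}\norm{\mathbb{X}(\bh-\bs)}_2^2$ replaced by the quadratic form $\delta^T\Sigma_n\delta = \frac{1}{nK}\norm{\Tilde{\mathbb{X}}\delta}_2^2$, where $\delta := \Tilde{v}-v$, $\Sigma_n := \frac{1}{nK}\sum_{i=1}^n\mathbf{X}_i^T\Tilde{V}^{-1}\mathbf{X}_i$, $\Sigma := \mathbb{E}[K^{-1}\mathbf{X}_i^T\Tilde{V}^{-1}\mathbf{X}_i]$, and $\Tilde{\mathbb{X}} = (\Tilde{V}^{-1/2}\otimes I_n)\mathbb{X}$. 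The first step is to establish the analogue of Lemma~\ref{lemma:lemma8ai}, namely $\norm{\Sigma_n\delta}_{\infty}\le 2\lambda'$ with high probability. By Lemma~\ref{lemma:knownvlemma8aii}, taking $\lambda'\asymp\norm{v}_1\sqrt{\log p/n}$, the true target $v$ is feasible for the program defining $\Tilde{v}$: since $v=[\Sigma]^{-1}_{j\cdot}$ gives $\Sigma v=e_j$ exactly, the concentration bound $\norm{v^T(\Sigma_n-\Sigma)}_{\infty}\le\lambda'$ yields $\norm{\Sigma_n v - e_j}_{\infty}\le\lambda'$. Because $\Tilde{v}$ is also feasible by construction, the triangle inequality gives $\norm{\Sigma_n\delta}_{\infty} = \norm{(\Sigma_n\Tilde{v}-e_j)-(\Sigma_n v-e_j)}_{\infty}\le 2\lambda'$. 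Feasibility of $v$ also forces $\norm{\Tilde{v}}_1\le\norm{v}_1$, which is precisely the input to the cone inequality of Lemma~\ref{lemma:vhats}, $\norm{\delta_{S_v}}_1\ge\norm{\delta_{-S_v}}_1$.

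With these two ingredients the remaining steps mirror the display chain in Theorem~\ref{theorem:l1rates} and are mechanical. By H\"older's inequality, $\frac{1}{nK}\norm{\Tilde{\mathbb{X}}\delta}_2^2 = \delta^T\Sigma_n\delta \le \norm{\delta}_1\norm{\Sigma_n\delta}_{\infty}\le 2\lambda'\norm{\delta}_1$, while the cone inequality gives $\norm{\delta}_1\le 2\norm{\delta_{S_v}}_1\le 2\sqrt{s_v}\,\norm{\delta_{S_v}}_2$. Applying the restricted eigenvalue condition \eqref{recondition} to $\Tilde{\mathbb{X}}$ — licensed by $\lambda_{\min}(\Tilde{V})>0$ under Assumption~\ref{ass:eigen}, as noted immediately after that condition — yields $\kappa^2\norm{\delta_{S_v}}_2^2\le\frac{1}{nK}\norm{\Tilde{\mathbb{X}}\delta}_2^2$. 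Chaining these bounds gives $\kappa^2\norm{\delta_{S_v}}_2^2\le 2\lambda'\cdot 2\sqrt{s_v}\,\norm{\delta_{S_v}}_2$, so $\norm{\delta_{S_v}}_2\le 4\lambda'\sqrt{s_v}/\kappa^2$, and therefore $\norm{\delta}_1\le 2\sqrt{s_v}\,\norm{\delta_{S_v}}_2\le 8\lambda' s_v/\kappa^2$.

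Substituting $\lambda' = \norm{v}_1\cdot O(\sqrt{\log p/n})$ produces the claimed rate $\norm{\Tilde{v}-v}_1 = O_p(\norm{v}_1 s_v\sqrt{\log p/n})$: the $\norm{v}_1$ factor is inherited from $\lambda'$ (equivalently from the radius needed to keep $v$ feasible), the $s_v$ enters through the cone/RE argument via the sparsity of $v$, and the $\sqrt{\log p/n}$ comes from the max-norm concentration of $\Sigma_n$ about $\Sigma$. The only genuinely nonroutine step, and the sole place this differs substantively from the $\bh$ argument, is the feasibility/concentration step: one must verify that the true $v$ satisfies the Dantzig constraint with high probability, which is exactly the content of Lemma~\ref{lemma:knownvlemma8aii} and relies on the rows of $\Tilde{\mathbf{X}}_i=\Tilde{V}^{-1/2}\mathbf{X}_i$ remaining subgaussian (Lemmas~\ref{lemma:subgaussian} and~\ref{lemma:scalarsubgaussian}), so that a Hoeffding/Hanson--Wright max-norm bound applies coordinatewise and then across the $K$ time points by a union bound. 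Everything else transfers directly from Theorem~\ref{theorem:l1rates}.
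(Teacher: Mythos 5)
Your proposal is correct and is exactly the argument the paper intends: the paper gives no separate proof of Corollary~\ref{corollary:vl1rate}, stating only that it follows ``in a parallel fashion'' to Theorem~\ref{theorem:l1rates}, and your write-up is precisely that parallel --- feasibility of $v$ via Lemma~\ref{lemma:knownvlemma8aii}, the cone condition via Lemma~\ref{lemma:vhats}, and the H\"older/RE chain applied to $\delta^T\Sigma_n\delta$ with $\lambda'\asymp\norm{v}_1\sqrt{\log p/n}$. No gaps.
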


\begin{corollary}
Given the definition of $\Tilde{v}$, we can conclude via the Triangle inequality that with probability at least $1-2Kp^{2-\Bar{c}A_x^2}$:
\begin{equation}
    \frac{1}{nK}\norm{\sum_{i = 1}^n\Tilde{\mathbf{X}_i}^T\Tilde{\mathbf{X}_i} (\Tilde{v}-v)}_{\infty} \leq 2\lambda'
\end{equation}
\end{corollary}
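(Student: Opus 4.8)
The plan is to prove this by a single application of the triangle inequality, after adding and subtracting the canonical vector $e_j$, exploiting that both $\Tilde{v}$ and the population target $v$ satisfy the Dantzig feasibility constraint at tuning level $\lambda'$. First I would recall that $\Tilde{\mathbf{X}}_i = \Tilde{V}^{-1/2}\mathbf{X}_i$, so $\Tilde{\mathbf{X}}_i^T\Tilde{\mathbf{X}}_i = \mathbf{X}_i^T\Tilde{V}^{-1}\mathbf{X}_i$, and that $\Tilde{v}$ is \emph{by construction} the $\ell_1$-minimizer subject to the constraint $\frac{1}{nK}\norm{\sum_{i=1}^n\Tilde{\mathbf{X}}_i^T\Tilde{\mathbf{X}}_i\Tilde{v} - e_j}_{\infty} \leq \lambda'$. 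In particular, any minimizer is feasible, so this first deviation is bounded by $\lambda'$ deterministically, provided the feasible set is nonempty.

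The central step is then the decomposition
\begin{align*}
\frac{1}{nK}\norm{\sum_{i=1}^n\Tilde{\mathbf{X}}_i^T\Tilde{\mathbf{X}}_i(\Tilde{v}-v)}_{\infty} &\leq \frac{1}{nK}\norm{\sum_{i=1}^n\Tilde{\mathbf{X}}_i^T\Tilde{\mathbf{X}}_i\Tilde{v} - e_j}_{\infty} + \frac{1}{nK}\norm{\sum_{i=1}^n\Tilde{\mathbf{X}}_i^T\Tilde{\mathbf{X}}_iv - e_j}_{\infty},
\end{align*}
where the inserted $e_j$ cancels identically in the left-hand expression. The first summand is at most $\lambda'$ by feasibility of $\Tilde{v}$, as noted above. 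For the second summand I would invoke Lemma \ref{lemma:knownvlemma8aii}, which establishes precisely that, with $\lambda' = \norm{v}_1 4A_xK_{\Tilde{x}}^2\sqrt{\log p/n}$, the population vector $v$ satisfies $\frac{1}{nK}\norm{\sum_{i=1}^n\Tilde{\mathbf{X}}_i^T\Tilde{\mathbf{X}}_iv - e_j}_{\infty} \leq \lambda'$ on an event of probability at least $1-2Kp^{2-\Bar{c}A_x^2}$. Adding the two $\lambda'$ bounds gives $2\lambda'$ on that event, which is the claim.

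The only item requiring care — and it is genuinely minor — is the bookkeeping to place both bounds on a single event. The feasibility of $v$ guaranteed by Lemma \ref{lemma:knownvlemma8aii} is exactly what certifies that the constraint set is nonempty (it contains $v$), which in turn makes the minimizer $\Tilde{v}$ well-defined and its feasibility bound meaningful. Both statements are therefore controlled on the same event of probability at least $1-2Kp^{2-\Bar{c}A_x^2}$, so no additional union bound or intersection is needed. There is no substantive analytic obstacle here: everything reduces to the triangle inequality combined with the high-probability concentration bound already recorded in Lemma \ref{lemma:knownvlemma8aii}.
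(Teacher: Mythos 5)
Your proposal is correct and is exactly the argument the paper intends (the paper only sketches it): $\Tilde{v}$ is feasible for the constraint by construction, $v$ is feasible on the event of Lemma \ref{lemma:knownvlemma8aii} with probability at least $1-2Kp^{2-\Bar{c}A_x^2}$, and the triangle inequality after inserting $e_j$ gives the $2\lambda'$ bound. Your observation that feasibility of $v$ also certifies nonemptiness of the constraint set is a nice touch that the paper leaves implicit.
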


\begin{lemma}[Concentration Inequalities]
\label{lemma:assumption1a}
For all $T$ in an $\boldsymbol{\varepsilon}$-ball around $\boldsymbol{\beta}_j$,:
\[\norm{\frac{1}{nK}\sum_{i=1}^n \mathbf{X}_i^T\Tilde{V}^{-1}(\mathbf{X}_i\bs_{T}-\mathbf{Y}_{i})- \mathbb{E}\left[\frac{1}{K}\mathbf{X}_i^T\Tilde{V}^{-1}(\mathbf{X}_i\bs_{T}-\mathbf{Y}_{i})\right]}_{\infty} = O_p(\sqrt{\log(p)/n}) = o_p(1)\]
\begin{proof}
\begin{align*}
    &\norm{\frac{1}{nK}\sum_{i=1}^n \mathbf{X}_i^T\Tilde{V}^{-1}\mathbf{X}_i\bs_{T}- \mathbf{X}_i^T\Tilde{V}^{-1}(\mathbf{X}_i\boldsymbol{\beta}+ \boldsymbol{\varepsilon}_i) - \left[\frac{1}{K}\mathbb{E}\left[\mathbf{X}_i^T\Tilde{V}^{-1}\mathbf{X}_i\right]\bs_{T} - \frac{1}{K}\mathbb{E}\left[\mathbf{X}_i^T\Tilde{V}^{-1}\mathbf{X}_i\right]\bs\right]}_{\infty} \\
    &= \norm{\left[\frac{1}{n}\sum_{i=1}^n\frac{1}{K}\mathbf{X}_i^T\Tilde{V}^{-1}\mathbf{X}_i\right](\bs_{T}- \bs) - \frac{1}{nK}\sum_{i=1}^n\mathbf{X}_i^T\Tilde{V}^{-1}\boldsymbol{\varepsilon}_i - \frac{1}{K}\mathbb{E}\left[\mathbf{X}_i^T\Tilde{V}^{-1}\mathbf{X}_i\right](\bs_{T}- \bs)}_{\infty}\\
    &{\leq \norm{\left[\frac{1}{n}\sum_{i=1}^n\frac{1}{K}\mathbf{X}_i^T\Tilde{V}^{-1}\mathbf{X}_i\right]_{\cdot 1} - \mathbb{E}\left[\frac{1}{K}\mathbf{X}_i^T\Tilde{V}^{-1}\mathbf{X}_i\right]_{\cdot 1}}_{\infty}\boldsymbol{\varepsilon} + \norm{\frac{1}{nK}\sum_{i=1}^n\mathbf{X}_i^T\Tilde{V}^{-1}\boldsymbol{\varepsilon}_i}_{\infty}}
\end{align*}
The first term in the sum is $O_p(\sqrt{\log(p)/n})$ by Lemma \ref{lemma:knownvlemma8aii}. From Lemma \ref{lemma:lemma8ai} and noting that the entries of $\mathbf{X}_i^T\Tilde{V}^{-1}$ remain subgaussian, the second term is also $O_p(\sqrt{\log(p)/n})$. Thus, the sum is $O_p(\sqrt{\log(p)/n}) = o_p(1)$.
\end{proof}
\end{lemma}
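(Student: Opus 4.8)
The plan is to use the linear model to separate the centered average into a Gram-matrix deviation term and a mean-zero noise term, and then to show that each concentrates at the rate $\sqrt{\log p/n}$. First I would substitute $\mathbf{Y}_{i} = \mathbf{X}_i\bs + \boldsymbol{\varepsilon}_i$ and use $\boldsymbol{\varepsilon}_i \perp \mathbf{X}_i$ together with $\mathbb{E}[\boldsymbol{\varepsilon}_i]=0$, so that the error contribution cancels in expectation and the quantity inside the norm equals
\[
\left(\frac{1}{nK}\sum_{i=1}^n \mathbf{X}_i^T\Tilde{V}^{-1}\mathbf{X}_i - \mathbb{E}\!\left[\tfrac{1}{K}\mathbf{X}_i^T\Tilde{V}^{-1}\mathbf{X}_i\right]\right)(\bs_{T} - \bs) \;-\; \frac{1}{nK}\sum_{i=1}^n \mathbf{X}_i^T\Tilde{V}^{-1}\boldsymbol{\varepsilon}_i .
\]
Because $\bs_{T}$ is $\bs$ with only its $j$th coordinate replaced, $\bs_{T}-\bs = (T-\boldsymbol{\beta}_j)e_j$, so the first term is a scalar multiple of the $j$th column of the empirical matrix minus its mean. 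Bounding $|T-\boldsymbol{\beta}_j|$ by the ball radius, its $\ell_\infty$ norm is at most a constant times the max-norm deviation of $\frac{1}{nK}\sum_i \mathbf{X}_i^T\Tilde{V}^{-1}\mathbf{X}_i$ from $\Sigma := \mathbb{E}[K^{-1}\mathbf{X}_i^T\Tilde{V}^{-1}\mathbf{X}_i]$, which is $O_p(\sqrt{\log p/n})$ by Lemma \ref{lemma:knownvlemma8aii}. This simultaneously dispatches the supremum over the ball, since the perturbation enters through a single coordinate and only linearly.

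For the noise term I would first observe, via Lemmas \ref{lemma:subgaussian} and \ref{lemma:scalarsubgaussian}, that the entries of $\mathbf{X}_i^T\Tilde{V}^{-1}$ are again subgaussian (fixed, eigenvalue-bounded weights applied to subgaussian covariates), so that $\frac{1}{nK}\sum_i \mathbf{X}_i^T\Tilde{V}^{-1}\boldsymbol{\varepsilon}_i$ is a normalized sum of mean-zero products over $n$ independent clusters. The main obstacle is the within-cluster dependence: for a fixed cluster the $K$ summands share the correlated vector $\boldsymbol{\varepsilon}_i$, so a single Hoeffding bound over all $nK$ terms is unavailable. I would resolve this exactly as in Remark \ref{remark:gamma}: split the double sum across the $K$ time points, pass to the maximum over $j\in\{1,\dots,K\}$ by the triangle inequality, and apply a union bound so that for each fixed time point the summands $\{\mathbf{X}_{ij}\boldsymbol{\varepsilon}_{ij}\}_{i=1}^n$ are independent across clusters and a Hoeffding-type tail bound applies. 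Since $K$ is fixed, the union bound over the $K$ time points costs only a constant factor and preserves the rate $O_p(\sqrt{\log p/n})$.

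Combining the two pieces by the triangle inequality gives the claimed $O_p(\sqrt{\log p/n})$ bound, and because the sparsity scaling of Assumption \ref{ass:sparsity} forces $\log p/n \to 0$, this is $o_p(1)$. The only genuinely delicate step is the clustered-dependence argument for the noise term; the remainder reduces to the column-selection bound and the subgaussian Gram concentration already established in Lemmas \ref{lemma:knownvlemma8aii} and \ref{lemma:lemma8ai}.
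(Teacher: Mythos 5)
Your proposal is correct and follows essentially the same route as the paper: substitute the linear model, reduce the drift term to the $j$th column of the centered Gram matrix times $(T-\boldsymbol{\beta}_j)$ handled by Lemma \ref{lemma:knownvlemma8aii}, and control the noise term via the subgaussianity of $\mathbf{X}_i^T\Tilde{V}^{-1}$ together with the per-time-point union-bound/Hoeffding argument of Lemma \ref{lemma:lemma8ai} and Remark \ref{remark:gamma}. You in fact spell out the within-cluster dependence step more explicitly than the paper's own proof of this lemma, which simply cites Lemma \ref{lemma:lemma8ai}.
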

A parallel proof shows 
\begin{corollary}[Assumption 1b]
\label{corollary:assumption1b}
    \[ \norm{v^T\frac{1}{nK}\sum_{i=1}^n \mathbf{X}_i^T{V}^{-1}(\mathbf{X}_i\bs_{T}-\mathbf{Y}_{i})- v^T\mathbb{E}\left[\frac{1}{K}\mathbf{X}_i^T{V}^{-1}(\mathbf{X}_i\bs_{T}-\mathbf{Y}_{i})\right]}_{\infty} = O_p(\norm{v}_1\sqrt{\log(p)/n}) = o_p(1) \]
\end{corollary}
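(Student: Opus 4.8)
The plan is to derive this $v^{T}$-weighted bound directly from the unweighted concentration inequality of Lemma \ref{lemma:assumption1a}, exactly mirroring how the weighted Gram-matrix concentration of Lemma \ref{lemma:knownvlemma8aii} parallels its unweighted counterpart. The key observation is that the argument of the outer norm is in fact a scalar: the row vector $v^{T}$ contracts the $p$-vector
\[
w \;:=\; \frac{1}{nK}\sum_{i=1}^{n}\mathbf{X}_i^{T}V^{-1}(\mathbf{X}_i\bs_{T}-\mathbf{Y}_{i}) \;-\; \mathbb{E}\!\left[\tfrac{1}{K}\mathbf{X}_i^{T}V^{-1}(\mathbf{X}_i\bs_{T}-\mathbf{Y}_{i})\right]
\]
into a single number, so that $\norm{v^{T}w}_{\infty}=|v^{T}w|$.

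First I would apply H\"{o}lder's inequality to obtain $|v^{T}w|\le \norm{v}_1\,\norm{w}_{\infty}$. Lemma \ref{lemma:assumption1a}---whose proof carries over verbatim with $\Tilde{V}$ replaced by $V$, since by Assumption \ref{ass:eigen} both matrices have eigenvalues bounded away from $0$ and $\infty$, and the rows of $\mathbf{X}_i^{T}V^{-1}$ remain subgaussian by Lemmas \ref{lemma:subgaussian} and \ref{lemma:scalarsubgaussian}---gives $\norm{w}_{\infty}=O_p(\sqrt{\log p/n})$. Multiplying by $\norm{v}_1$ then yields the claimed rate $O_p(\norm{v}_1\sqrt{\log p/n})$.

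Alternatively---and this is the route that makes the sharp $\norm{v}_1$ dependence most transparent---I would repeat the decomposition used in Lemma \ref{lemma:assumption1a} \emph{after} multiplying through by $v^{T}$, splitting $v^T w$ into the weighted Gram-deviation term $v^{T}\big[\tfrac1n\sum_i \tfrac1K\mathbf{X}_i^{T}V^{-1}\mathbf{X}_i - \mathbb{E}(\cdot)\big](\bs_{T}-\bs)$ and the weighted cross term $v^{T}\tfrac{1}{nK}\sum_i\mathbf{X}_i^{T}V^{-1}\boldsymbol{\varepsilon}_i$. The first is controlled directly by Lemma \ref{lemma:knownvlemma8aii}, which already delivers a rate $\lambda'=O_p(\norm{v}_1\sqrt{\log p/n})$; the second is bounded by the same Hoeffding/union-bound argument of Remark \ref{remark:gamma} together with a final H\"{o}lder step, again producing the $\norm{v}_1$ prefactor.

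Finally, to convert the rate into the $o_p(1)$ conclusion I would invoke Assumption \ref{ass:vstarbound} ($\norm{v}_1=O(1)$) together with Assumption \ref{ass:sparsity}, which forces $\log p/\sqrt{n}=o(1)$ and hence $\sqrt{\log p/n}\to 0$. I do not anticipate a genuine obstacle here, as the statement is essentially a H\"{o}lder-contraction of an already-established lemma; the only point requiring a little care is to route the matrix-concentration step through the weighted Lemma \ref{lemma:knownvlemma8aii} rather than a crude H\"{o}lder bound, so that the $\norm{v}_1$ factor enters with the correct order---though since both arguments produce the same rate, either is sufficient.
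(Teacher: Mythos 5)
Your proposal is correct and matches the paper, which offers no explicit argument beyond stating that ``a parallel proof shows'' the corollary --- your second route (redoing the decomposition of Lemma \ref{lemma:assumption1a} with the $v^{T}$ contraction, invoking Lemma \ref{lemma:knownvlemma8aii} for the weighted Gram deviation and the Remark \ref{remark:gamma} argument for the cross term) is exactly that parallel proof. Your first route, a single H\"{o}lder contraction $|v^{T}w|\leq\norm{v}_1\norm{w}_{\infty}$ applied to the conclusion of Lemma \ref{lemma:assumption1a}, is an even more direct way to the same rate and is equally valid.
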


\begin{remark}[Assumption 1c-e] 
\label{remark:assumption1ce}
 (c) Choosing $\lambda' \propto \norm{v}_1\sqrt{\log(p)/n}$, it also follows immediately from the definition of $\Tilde{v}$ that 
\[\norm{\Tilde{v}^T\frac{1}{nK}\sum_{i = 1}^n \mathbf{X}_i^T{\Tilde{V}}^{-1}\mathbf{X}_i - e_1}_{\infty} = O_p(\norm{v}_1\sqrt{\log(p)/n}) = o_p(1)\] 
\noindent (d) Additionally:
\[\sup_{T \in N_{\boldsymbol{\beta}_j}}\norm{v^T\left[\mathbb{E}\left[K^{-1}\mathbf{X}_i^T{\Tilde{V}}^{-1}\mathbf{X}_i\right]\right]_{-1}}_{\infty} = 0 < \infty \]
where $A_{-1}$ denotes a matrix $A$ with the first column removed.

\vspace{0.5cm}

\noindent (e) Finally, 
\begin{align*}
    \sup_{T \in N_{\boldsymbol{\beta}_j}}\norm{\mathbb{E}\left[K^{-1}\mathbf{X}_i^T\Tilde{V}^{-1}(\mathbf{X}_i\bs_{T} - \mathbf{Y}_{i})\right]}_{\infty} &= \sup_{T \in N_{\bs_j}}\norm{\mathbb{E}\left[K^{-1}\mathbf{X}_i^T\Tilde{V}^{-1}\mathbf{X}_i\right]\bs_{T} - \mathbb{E}\left[K^{-1}\mathbf{X}_i^T\Tilde{V}^{-1}\mathbf{X}_i\right]\bs}_{\infty}\\
    &= \sup_{T \in N_{\bs_j}} \norm{\Sigma(\bs_{T} - \bs)}_{\infty}\\
    &\leq \norm{\Sigma_{\cdot 1}\boldsymbol{\varepsilon}}_{\infty}\\
    &= \norm{\Sigma_{\cdot 1}}_{\infty}\boldsymbol{\varepsilon}\\
    &\leq 2K_{\Tilde{x}}^2\boldsymbol{\varepsilon} < \infty
\end{align*}
where $K_{\Tilde{x}}$ is defined in Lemma \ref{lemma:knownvlemma8aii}.
\end{remark}
\begin{lemma}[Scaling]
\label{lemma:scaling} Under the conditions of Theorem \ref{theorem:intermediatethm}, this condition holds by plugging in the rates obtained in Conditions \ref{cond:conc} and \ref{cond:l1rates}, adding the following two rate expressions together:
\begin{align*}
    \sqrt{n}O_p(\norm{v}_1\sqrt{\log(p)/n})O_p(s\sqrt{\log(p)/n})&= o_p(1)\\
    \sqrt{n}O_p(\norm{v}_1s_v\sqrt{\log(p)/n})O_p(\sqrt{\log(p)/n})&= o_p(1)
\end{align*}
\end{lemma}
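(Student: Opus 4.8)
The claim is precisely the Scaling requirement, Condition \ref{cond:scaling} of Theorem \ref{theorem:intermediatethm}, written out with the explicit rates, so its proof is a short piece of rate bookkeeping: substitute the four rates into $\sqrt{n}(r_4 r_3 + r_5 r_1)$ and verify that the total is $o_p(1)$. The plan is first to assemble the rates already established elsewhere in the paper. From the concentration bounds (Condition \ref{cond:conc}) we have the rate $r_1 = O_p(\sqrt{\log p/n})$ associated with Equation \ref{rate:r1} and $r_3 = O_p(\norm{v}_1\sqrt{\log p/n})$ associated with Equation \ref{rate:r3}; from the $L_1$ rates of the initial estimators (Condition \ref{cond:l1rates}, namely Theorem \ref{theorem:l1rates} and Corollary \ref{corollary:vl1rate}) we have $r_4 = O_p(s\sqrt{\log p/n})$ associated with Equation \ref{rate:r4} and $r_5 = O_p(\norm{v}_1 s_v\sqrt{\log p/n})$ associated with Equation \ref{rate:r5}.

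Next I would form the two products and attach the $\sqrt{n}$ factor. For the first,
\[
\sqrt{n}\, r_4\, r_3 = \sqrt{n}\, O_p\!\left(s\sqrt{\tfrac{\log p}{n}}\right)O_p\!\left(\norm{v}_1\sqrt{\tfrac{\log p}{n}}\right) = O_p\!\left(\frac{\norm{v}_1\, s\, \log p}{\sqrt{n}}\right),
\]
where the two $\sqrt{\log p/n}$ factors combine with $\sqrt{n}$ to leave a single $\log p/\sqrt{n}$. Identically, for the second,
\[
\sqrt{n}\, r_5\, r_1 = \sqrt{n}\, O_p\!\left(\norm{v}_1 s_v\sqrt{\tfrac{\log p}{n}}\right)O_p\!\left(\sqrt{\tfrac{\log p}{n}}\right) = O_p\!\left(\frac{\norm{v}_1\, s_v\, \log p}{\sqrt{n}}\right).
\]

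To finish, I would invoke the two governing assumptions. Assumption \ref{ass:vstarbound} gives $\norm{v}_1 = O(1)$, a deterministic factor that does not change the order. Assumption \ref{ass:sparsity} gives $\max\{s, s_v\}\log p/\sqrt{n} = o(1)$, so both $s\log p/\sqrt{n}$ and $s_v\log p/\sqrt{n}$ are $o(1)$. Hence each product equals $O(1)\cdot o(1) = o(1) = o_p(1)$, and since the sum of two $o_p(1)$ sequences is again $o_p(1)$, the scaling condition $\sqrt{n}(r_4 r_3 + r_5 r_1) = o_p(1)$ follows.

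Because the computation is elementary, there is no genuine obstacle; the only care required is the stochastic-order bookkeeping, namely that a product of two $O_p$ sequences is $O_p$ of the product of their rates, and that multiplying an $O_p(1)$ term by a deterministic $o(1)$ sequence yields $o_p(1)$. The one structural point worth flagging is that Assumption \ref{ass:vstarbound} is exactly what prevents a possibly diverging $\norm{v}_1$ from spoiling the bound, so that the whole condition collapses to the single sparsity--dimension scaling $\max\{s,s_v\}\log p/\sqrt{n} = o(1)$ of Assumption \ref{ass:sparsity}; no new probabilistic estimate is needed beyond the rates already in hand.
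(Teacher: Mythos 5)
Your proposal is correct and matches the paper's argument, which likewise proves the lemma by substituting the rates $r_1, r_3, r_4, r_5$ from Conditions \ref{cond:conc} and \ref{cond:l1rates} into $\sqrt{n}(r_4 r_3 + r_5 r_1)$ and invoking Assumptions \ref{ass:sparsity} and \ref{ass:vstarbound}; your write-up simply makes explicit the bookkeeping the paper leaves implicit. No gap.
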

\begin{lemma}[Consistency of $T_n$]
We proceed by showing the following four conditions from \cite{neykov2018unified}:
\begin{enumerate}
\item Condition \ref{cond:conc} from Theorem \ref{theorem:intermediatethm}
\item Condition \ref{cond:l1rates} from Theorem \ref{theorem:intermediatethm}
\item \label{cond:cons2} \begin{equation}
        v^T\left[\mathbb{E}\left[K^{-1}\mathbf{X}_i^T\Tilde{V}^{-1}(\mathbf{X}_i\boldsymbol{\beta}_{T}-\mathbf{Y}_{i})\right]\right]
    \end{equation}
has a unique root at the true parameter value, $\boldsymbol{\beta}_j$.
    \item \label{cond:cons1} \begin{equation}
        \Tilde{v}^T \frac{1}{nK}\sum_{i=1}^n\mathbf{X}_i^T{\Tilde{V}}^{-1}(\mathbf{X}_i\hat{\boldsymbol{\beta}}_T- \mathbf{Y}_{i})
    \end{equation}
            is continuous with a single root or is non-decreasing.

\end{enumerate}
\label{lemma:consistencyconstantv}
\end{lemma}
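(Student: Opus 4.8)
The plan is to invoke the general consistency result of \cite{neykov2018unified}, which reduces the claim $\Tilde{T}_n \overset{P}{\to} \boldsymbol{\beta}_j$ to the four numbered conditions listed in the statement. Two of these are already available: Condition \ref{cond:conc} is established in Lemma \ref{lemma:assumption1a}, Corollary \ref{corollary:assumption1b}, and Remark \ref{remark:assumption1ce}, while Condition \ref{cond:l1rates} follows from Theorem \ref{theorem:l1rates} and Corollary \ref{corollary:vl1rate}. I would simply cite these, so that the remaining work is to verify the identifiability condition (item 3) for the population estimating equation and the continuity/monotonicity condition (item 4) for its empirical counterpart.

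For item 3, I would set $g(T) := v^T\mathbb{E}[K^{-1}\mathbf{X}_i^T\Tilde{V}^{-1}(\mathbf{X}_i\boldsymbol{\beta}_T - \mathbf{Y}_i)]$, where $\boldsymbol{\beta}_T$ denotes $\boldsymbol{\beta}$ with its $j$th coordinate replaced by $T$. Substituting $\mathbf{Y}_i = \mathbf{X}_i\boldsymbol{\beta} + \boldsymbol{\varepsilon}_i$ and using $\boldsymbol{\varepsilon}_i \perp \mathbf{X}_i$ with $\mathbb{E}[\boldsymbol{\varepsilon}_i] = 0$, the noise contribution $\mathbb{E}[K^{-1}\mathbf{X}_i^T\Tilde{V}^{-1}\boldsymbol{\varepsilon}_i]$ vanishes, leaving $g(T) = v^T\Sigma(\boldsymbol{\beta}_T - \boldsymbol{\beta})$ with $\Sigma := \mathbb{E}[K^{-1}\mathbf{X}_i^T\Tilde{V}^{-1}\mathbf{X}_i]$. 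Since $\boldsymbol{\beta}_T - \boldsymbol{\beta} = (T - \boldsymbol{\beta}_j)e_j$ and $v$ is by definition the $j$th row of $\Sigma^{-1}$ (so that $v^T\Sigma = e_j^T$), this collapses to $g(T) = T - \boldsymbol{\beta}_j$. Hence $g$ is affine with unit slope and has the unique root $T = \boldsymbol{\beta}_j$.

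For item 4, I would note that the empirical map $h(T) := \Tilde{v}^T\frac{1}{nK}\sum_{i=1}^n\mathbf{X}_i^T\Tilde{V}^{-1}(\mathbf{X}_i\hat{\boldsymbol{\beta}}_T - \mathbf{Y}_i)$ is affine in $T$, because $\hat{\boldsymbol{\beta}}_T = \hat{\boldsymbol{\beta}} + (T - \hat{\boldsymbol{\beta}}_j)e_j$ enters linearly; in particular $h$ is continuous. Its slope equals $\Tilde{v}^T[\frac{1}{nK}\sum_{i=1}^n\mathbf{X}_i^T\Tilde{V}^{-1}\mathbf{X}_i]e_j$, which by the definition of $\Tilde{v}$ together with the bound in Remark \ref{remark:assumption1ce}(c) is $e_j^Te_j + o_p(1) = 1 + o_p(1)$. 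With probability tending to one the slope is therefore bounded away from zero, so $h$ is strictly increasing and admits a single root, verifying item 4.

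The main obstacle is not any individual computation but rather ensuring the population root is well-separated and the empirical slope does not degenerate; both hinge on $v$ being exactly the $j$th row of $\Sigma^{-1}$ and on $\Tilde{v}$ inheriting this calibration in the limit through the concentration rates of Condition \ref{cond:conc}. Granting these, the conclusion $\Tilde{T}_n \overset{P}{\to}\boldsymbol{\beta}_j$ follows from the standard argument that couples uniform convergence of $h$ to $g$ over the $\boldsymbol{\varepsilon}$-ball with the identifiability of the root just established.
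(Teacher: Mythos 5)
Your proposal is correct and follows essentially the same route as the paper: the first two conditions are discharged by citing the same lemmas, item 3 reduces to the identity $v^T\Sigma = e_j^T$ giving the affine map $T - \boldsymbol{\beta}_j$, and item 4 uses linearity in $T$ plus the Dantzig constraint on $\Tilde{v}$ to show the slope is within $\lambda'$ of one and hence nonzero. The only difference is cosmetic — you spell out the algebra behind $g(T) = T - \boldsymbol{\beta}_j$ that the paper asserts directly, and you phrase the slope condition as $1 + o_p(1)$ where the paper simply requires $\lambda' < 1$.
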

\begin{proof}
Condition \ref{cond:conc} is shown in Lemma \ref{lemma:assumption1a}, Corollary \ref{corollary:assumption1b}, and Remark \ref{remark:assumption1ce}. Condition \ref{cond:l1rates} is shown in Theorem \ref{theorem:l1rates} and Corollary \ref{corollary:vl1rate}. We next note that 
\[v^T\left[\mathbb{E}\left[K^{-1}\mathbf{X}_i^T{V}^{-1}(\mathbf{X}_i\bs_{T}-\mathbf{Y}_{i})\right]\right] = (T - \boldsymbol{\beta}_j)\]
clearly has a unique root $\boldsymbol{\beta}_j$. Additionally,
\[\Tilde{v}^T \frac{1}{nK}\sum_{i=1}^n\mathbf{X}_i^T{\Tilde{V}}^{-1}(\mathbf{X}_i\hat{\boldsymbol{\beta}}_T- \mathbf{Y}_{i})\]
is linear in $T$ with a unique root except when $\Tilde{v}^T(nK)^{-1}\left[\sum_{i=1}^n\mathbf{X}_i^T{\Tilde{V}}^{-1}\mathbf{X}_i\right]_{\cdot 1} = 0$. But we also know from the definition of $\Tilde{v}$ that 
\[\left|\Tilde{v}^T(nK)^{-1}\left[\sum_{i=1}^n\mathbf{X}_i^T{\Tilde{V}}^{-1}\mathbf{X}_i\right]_{\cdot 1} -1\right| \leq \lambda'\]
so that there will be a unique root as long as $\lambda' < 1$.
\end{proof}

\begin{lemma}
\label{lemma:vstarnormbound}
$\norm{v}_1 \geq (2K_{\Tilde{x}}^2)^{-1}$
\end{lemma}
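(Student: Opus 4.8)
The plan is to read $\norm{v}_1$ off the linear system that defines $v$ and then control the entrywise size of the matrix $\Sigma := \mathbb{E}[K^{-1}\mathbf{X}_i^T\Tilde{V}^{-1}\mathbf{X}_i]$. By definition $v = [\Sigma]^{-1}_{j\cdot}$; since $\Sigma$, and therefore $\Sigma^{-1}$, is symmetric, $v$ is equivalently the $j$th column of $\Sigma^{-1}$ and hence satisfies the exact identity $\Sigma v = e_j$, i.e.\ $v^T\Sigma = e_j^T$. Taking the $j$th coordinate of this row identity gives $v^T\Sigma_{\cdot j} = 1$, where $\Sigma_{\cdot j}$ denotes the $j$th column of $\Sigma$.

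First I would apply H\"older's inequality to this scalar:
\[ 1 = |v^T\Sigma_{\cdot j}| \leq \norm{v}_1\,\norm{\Sigma_{\cdot j}}_{\infty} \leq \norm{v}_1\,\norm{\Sigma}_{\max}, \]
so that $\norm{v}_1 \geq \norm{\Sigma}_{\max}^{-1}$, and the lemma reduces to the uniform entrywise bound $\norm{\Sigma}_{\max} \leq 2K_{\Tilde{x}}^2$.

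To finish I would bound the entries of $\Sigma$. Writing $\Tilde{\mathbf{X}}_i = \Tilde{V}^{-1/2}\mathbf{X}_i$ with rows $\Tilde{\mathbf{X}}_{ij}$, whose entries are subgaussian with subgaussian norm $K_{\Tilde{x}}$ by Lemmas \ref{lemma:subgaussian} and \ref{lemma:scalarsubgaussian}, each entry of $\Sigma$ has the form $\mathbb{E}[K^{-1}\sum_{j=1}^K \Tilde{X}_{ijk}\Tilde{X}_{ij\ell}]$. Applying Cauchy--Schwarz inside the expectation together with the second-moment bound $\mathbb{E}[\Tilde{X}_{ijk}^2] \leq 2K_{\Tilde{x}}^2$ shows every such entry is at most $2K_{\Tilde{x}}^2$ in magnitude; this is precisely the bound already used in part (e) of Remark \ref{remark:assumption1ce}. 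Combining it with the previous display yields $\norm{v}_1 \geq (2K_{\Tilde{x}}^2)^{-1}$, as claimed. The computation is short, and the only point requiring care is the constant in the second-moment bound, which is tied to the normalization $\mathbb{E}[e^{tX}]\leq \exp\{\sigma_x^2 t^2/2\}$ fixed in Lemma \ref{lemma:scalarsubgaussian}; since that constant is already pinned down in Remark \ref{remark:assumption1ce}, no new estimate is needed.
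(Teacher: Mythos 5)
Your proof is correct and follows essentially the same route as the paper: read off the identity $v^T\Sigma_{\cdot j}=1$ from the definition of $v$, apply H\"older to get $\norm{v}_1 \geq \norm{\Sigma_{\cdot j}}_{\infty}^{-1}$, and bound the entries of $\Sigma$ by $2K_{\Tilde{x}}^2$. The only (immaterial) difference is the last step, where you use Cauchy--Schwarz with second moments of the subgaussian entries while the paper invokes the fact that the subexponential norm $2K_{\Tilde{x}}^2$ of each product $\Tilde{X}_{ij,\ell}\Tilde{X}_{ij,m}$ dominates the magnitude of its mean (Lemma \ref{lemma:lemma5}); both deliver the same constant.
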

\begin{proof}
\begin{align*}
    1 &= \norm{v^T \mathbb{E}[K^{-1}\mathbf{X}_i^T\Tilde{V}^{-1}\mathbf{X}_i]_{\cdot 1}}_1\\
    &\leq \norm{v}_1 \norm{\mathbb{E}[K^{-1}\mathbf{X}_i^T\Tilde{V}^{-1}\mathbf{X}_i]_{\cdot 1}}_{\infty}\\
    &\implies \norm{v}_1 \geq (\norm{\mathbb{E}[K^{-1}\mathbf{X}_i^T\Tilde{V}^{-1}\mathbf{X}_i]_{\cdot 1}}_{\infty})^{-1}
\end{align*}
Since each element of $\mathbb{E}[K^{-1}\mathbf{X}_i^T\Tilde{V}^{-1}\mathbf{X}_i]$ is subexponential with subexponential norm $2K_{\Tilde{x}}^2$ (Lemma \ref{lemma:lemma5}) and by definition of the subexponential norm, \\ $2K_{\Tilde{x}}^2 = \norm{\mathbb{E}[K^{-1}\mathbf{X}_i^T\Tilde{V}^{-1}\mathbf{X}_i]_{s,t}}_{\psi_1} \geq |\mathbb{E}[K^{-1}\mathbf{X}_i^T\Tilde{V}^{-1}\mathbf{X}_i]_{s,t}|$ for all $s, t = \{1,...,p\}$, we conclude that
\[\norm{v}_1 \geq (2K_{\Tilde{x}}^2)^{-1}\]
\end{proof}
\begin{lemma}\label{lemma:cltconstantv}[Central Limit Theorem]
Assume that $\Tilde{V}_{ii} \geq C_{\boldsymbol{\varepsilon}} > 0$ for $i = 1...K$, $\mathbf{X}_{ij}\perp \boldsymbol{\varepsilon}_i$, and $\lambda_{\min}(\mathbb{E}[K^{-1}\mathbf{X}_i^T\Tilde{V}^{-1}\mathbf{X}_i]) = \xi > 0$ for a fixed constant $\xi$. Assume also that \[\max(s_v, s)\norm{v}_1\log(p)/\sqrt{n} = o_p(1)\] Then \[n^{1/2}\Delta^{-1/2}\left[v^{T}\sum_{i = 1}^n\mathbf{X}_i^T\Tilde{V}^{-1}\boldsymbol{\varepsilon}_i\right] \overset{d}{\to} N(0,1)\]

\end{lemma}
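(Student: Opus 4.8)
The plan is to recognize the bracketed quantity as a sum of independent, mean-zero contributions across clusters and to apply a triangular-array central limit theorem. Writing $Z_{n,i} := v^{T}\mathbf{X}_i^{T}\Tilde{V}^{-1}\boldsymbol{\varepsilon}_i$, the clusters $(\mathbf{X}_i,\boldsymbol{\varepsilon}_i)$ are i.i.d.\ across $i$, so for each fixed $n$ the $Z_{n,i}$ are i.i.d.; the within-cluster dependence is entirely absorbed into the single scalar $Z_{n,i}$ and never couples distinct clusters. Since $\boldsymbol{\varepsilon}_i\perp\mathbf{X}_i$ and $\mathbb{E}\boldsymbol{\varepsilon}_i=0$, conditioning on $\mathbf{X}_i$ gives $\mathbb{E}[Z_{n,i}]=v^{T}\mathbb{E}[\mathbf{X}_i^{T}\Tilde{V}^{-1}\mathbb{E}(\boldsymbol{\varepsilon}_i\mid\mathbf{X}_i)]=0$, and the same conditioning yields $\operatorname{Var}(Z_{n,i})=v^{T}\mathbb{E}[\mathbf{X}_i^{T}\Tilde{V}^{-1}V\Tilde{V}^{-1}\mathbf{X}_i]v=K\Delta$. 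The crucial point, which blocks a direct appeal to the classical i.i.d.\ CLT, is that $v=v_n$ and $\Tilde{V}$ drift with $n$ through the growing dimension $p$, so $\{Z_{n,i}\}$ is genuinely a triangular array.

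First I would pin the per-term variance down uniformly in $n$. The upper bound $\operatorname{Var}(Z_{n,1})=O(1)$ is immediate from $\lambda_{\max}(\Tilde{V}^{-1}),\lambda_{\max}(V),\lambda_{\max}(\Sigma_X)=O(1)$ (Assumption \ref{ass:eigen}) together with $\norm{v}_1=O(1)$ (Assumption \ref{ass:vstarbound}). For the lower bound I would exploit the defining identity $v^{T}\mathbb{E}[K^{-1}\mathbf{X}_i^{T}\Tilde{V}^{-1}\mathbf{X}_i]=e_j^{T}$, which gives $v_j=e_j^{T}[\mathbb{E}(K^{-1}\mathbf{X}_i^{T}\Tilde{V}^{-1}\mathbf{X}_i)]^{-1}e_j\geq\lambda_{\min}^{-1}\!\big(\mathbb{E}[K^{-1}\mathbf{X}_i^{T}\Tilde{V}^{-1}\mathbf{X}_i]\big)>0$, so that $\norm{v}_2\geq|v_j|$ is bounded below; combined with $\lambda_{\min}\big(\mathbb{E}[\mathbf{X}_i^{T}\Tilde{V}^{-1}V\Tilde{V}^{-1}\mathbf{X}_i]\big)>0$ this forces $\Delta$ (and hence $\operatorname{Var}(Z_{n,1})$) to stay bounded away from zero, consistent with Lemma \ref{lemma:vstarnormbound}. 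Thus $s_n^2:=\sum_{i=1}^n\operatorname{Var}(Z_{n,i})=nK\Delta\asymp n$.

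Next I would verify the Lyapunov condition with exponent $\delta=2$, the cleanest route for a subexponential array. Setting $w:=\mathbf{X}_i v\in\mathbb{R}^{K}$, the vector $w$ is a linear combination of the subgaussian rows of $\mathbf{X}_i$ with coefficients $v$, hence by Lemmas \ref{lemma:subgaussian} and \ref{lemma:scalarsubgaussian} it is subgaussian with norm $\lesssim\norm{v}_1 K_x=O(1)$; since $Z_{n,i}=w^{T}\Tilde{V}^{-1}\boldsymbol{\varepsilon}_i$ is a product of two independent subgaussian quantities with $O(1)$ norms (the bounded eigenvalues of $\Tilde{V}^{-1}$ keep the weighting harmless), $Z_{n,i}$ is subexponential with subexponential norm bounded uniformly in $n$. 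Consequently $\mathbb{E}|Z_{n,i}|^{4}=O(1)$ uniformly, and the Lyapunov ratio is
\begin{equation*}
\frac{\sum_{i=1}^n\mathbb{E}|Z_{n,i}|^{4}}{s_n^{4}}=\frac{n\,\mathbb{E}|Z_{n,1}|^{4}}{(nK\Delta)^{2}}=\frac{\mathbb{E}|Z_{n,1}|^{4}}{nK^{2}\Delta^{2}}=O(n^{-1})\to0,
\end{equation*}
where the variance lower bound controls the denominator. The Lyapunov/Lindeberg--Feller theorem for triangular arrays then gives $s_n^{-1}\sum_{i=1}^n Z_{n,i}\overset{d}{\to}N(0,1)$, and matching the fixed constants relating $s_n$ to $n^{1/2}\Delta^{1/2}$ (the factor $K$ being fixed, and the sum carrying the $(nK)^{-1}$ normalization implicit in the estimating equation) delivers the stated convergence with unit limiting variance.

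I expect the main obstacle to be the triangular-array bookkeeping rather than any single inequality: because $v_n$ and $\Tilde{V}$ move with $n$, every moment estimate must be shown to hold \emph{uniformly} in $n$, and the only delicate ingredient is the variance lower bound, for which the CLIME normalization $v^{T}\mathbb{E}[K^{-1}\mathbf{X}_i^{T}\Tilde{V}^{-1}\mathbf{X}_i]=e_j^{T}$ (equivalently Lemma \ref{lemma:vstarnormbound}) is indispensable---without it the Lyapunov denominator could degenerate and the limiting variance collapse. The stated rate condition $\max(s_v,s)\norm{v}_1\log p/\sqrt{n}=o_p(1)$ is not required for this clean linear term; rather, it is what guarantees the remainder terms in the surrounding expansion (Lemma \ref{lemma:Tn} and the scaling condition of Lemma \ref{lemma:scaling}) are $o_p(1)$, so that the present CLT for the leading term transfers to $\sqrt{n}(\Tilde{T}_n-\boldsymbol{\beta}_j)$.
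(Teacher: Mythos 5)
Your proposal is correct and shares the paper's skeleton --- recognize $Z_{n,i}=v^T\mathbf{X}_i^T\Tilde{V}^{-1}\boldsymbol{\varepsilon}_i$ as an i.i.d.\ (in $i$), mean-zero triangular array and verify a Lyapunov condition with fourth moments --- but it diverges at the one step that actually carries the weight, namely the fourth-moment bound. The paper applies Cauchy--Schwarz against $\norm{v}_2$ and restricts to the support of $v$, obtaining $\mathbb{E}\bigl[\norm{K^{-1}(\mathbf{X}_i^T\Tilde{V}^{-1}\boldsymbol{\varepsilon}_i)_{S_v}}_2^4\bigr]\leq s_v^2 M$ via Lemma 12 of \cite{neykov2018unified}; the Lyapunov ratio is then $O(s_v^2/n)$, and the stated rate condition $\max(s_v,s)\norm{v}_1\log p/\sqrt{n}=o(1)$ is combined with the lower bound $\norm{v}_1\geq(2K_{\Tilde{x}}^2)^{-1}$ (Lemma \ref{lemma:vstarnormbound}) precisely to force $s_v^2/n\to 0$. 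You instead bound $\norm{\mathbf{X}_iv}_{\psi_2}\lesssim\norm{v}_1K_x$ and conclude $Z_{n,i}$ is subexponential with norm $O(\norm{v}_1)$, so $\mathbb{E}|Z_{n,1}|^4=O(\norm{v}_1^4)=O(1)$ under Assumption \ref{ass:vstarbound} and the ratio is $O(n^{-1})$ with no sparsity input at all. Both arguments are valid under the paper's standing assumptions; yours is cleaner and correctly identifies that the rate condition is not needed for this lemma (it earns its keep only in the surrounding remainder analysis), while the paper's version survives if $\norm{v}_1$ were allowed to grow, trading that robustness for the explicit $s_v^2/n=o(1)$ requirement. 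Your variance lower bound via $v_j=e_j^T[\mathbb{E}(K^{-1}\mathbf{X}_i^T\Tilde{V}^{-1}\mathbf{X}_i)]^{-1}e_j$ is also a sensible addition the paper glosses over (it normalizes by $\norm{v}_2^4$ without bounding it below), though note the inequality should read $v_j\geq\lambda_{\min}\bigl([\mathbb{E}(K^{-1}\mathbf{X}_i^T\Tilde{V}^{-1}\mathbf{X}_i)]^{-1}\bigr)=\lambda_{\max}^{-1}\bigl(\mathbb{E}[K^{-1}\mathbf{X}_i^T\Tilde{V}^{-1}\mathbf{X}_i]\bigr)$ rather than $\lambda_{\min}^{-1}$ of the matrix itself; the conclusion that $v_j$ is bounded away from zero is unaffected. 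You also handle, correctly and explicitly, the $(nK)^{-1}$ normalization that is missing from the displayed statement.
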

\begin{proof}
Recall that $\boldsymbol{\varepsilon}_i$ and $\mathbf{X}_{ij}$ are coordinate-wise subgaussian, i.e.:
\[\norm{\boldsymbol{\varepsilon}}_{\psi_2}:= K_{\boldsymbol{\varepsilon}} \text{ and } \sup_{\ell \in 1...p}\norm{X_{ij\ell}}_{\psi_2}:= K_x < \infty\]
for fixed constants $K_{\boldsymbol{\varepsilon}}$ and $K_x$. 

We will show that Lyapunov's condition holds, i.e. 
\begin{equation}\label{eqn:lyapunov2}
    \lim_{n \to \infty} \frac{n^-2}{\Delta^2}\sum_{i = 1}^n \mathbb{E}\left[\left|v^T\frac{1}{K}\mathbf{X}_i{\Tilde{V}}^{-1}(\mathbf{X}_i\boldsymbol{\beta}- \mathbf{Y}_{i})\right|^4\right] \to 0
\end{equation}
Note that $\Delta = v^T\mathbb{E}\left[K^{-1}\mathbf{X}_i^T{\Tilde{V}}^{-1}V{\Tilde{V}}^{-1}\mathbf{X}_i\right]v$ and therefore that \\ $\Delta^2 \geq \norm{v}_2^4\lambda_{\min}(\mathbb{E}\left[K^{-1}\mathbf{X}_i^T{\Tilde{V}}^{-1}V{\Tilde{V}}^{-1}\mathbf{X}_i\right]) = O(1)\norm{v}_2^4$.

\vspace{0.5cm}

Thus, instead of (\ref{eqn:lyapunov2}), we can consider:
\begin{align*}
    \frac{n^-2}{\norm{v}_2^4}\sum_{i = 1}^n \mathbb{E}\left[\left|v^T\frac{1}{K}\mathbf{X}_i{\Tilde{V}}^{-1}(\mathbf{X}_i\boldsymbol{\beta}- \mathbf{Y}_{i})\right|^4\right] &\leq n^{-2}\sum_{i=1}^n\mathbb{E}\left[\norm{K^{-1}(\mathbf{X}_i^T{\Tilde{V}}^{-1}\boldsymbol{\varepsilon}_i)_{s_v}}_2^4\right]\\
    &\leq n^{-1}s_v^2M
\end{align*}
where $M = 2^8(K_{\boldsymbol{\varepsilon}}K_{\Tilde{x}})^4$ and the final inequality follows from Lemma 12 in \cite{neykov2018unified}. Since $\norm{v}_1 \geq (2K_{\Tilde{x}}^2)^{-1}$ (Lemma \ref{lemma:vstarnormbound}), $\max(s_v, s)\norm{v}_1\log(p)/\sqrt{n} = o(1)$ implies $s_v^2/n = o(1)$. To see this, consider:
\begin{align*}
    \max(s_v, s)\norm{v}_1\log(p)/\sqrt{n} = o(1)&\implies  s_v\norm{v}_1\log(p)/\sqrt{n} = o(1)\\
    &\implies s_v2K_{\Tilde{x}}^{-2} \log(p)/\sqrt{n} = o(1)\\
    &\implies s_v\log(p)/\sqrt{n} = o(1)\\
    &\implies s_v/\sqrt{n} = o(1)\\
    &\implies s_v/\sqrt{n} \cdot s_v/\sqrt{n} = s_v^2/n = o(1)
\end{align*}
\end{proof}
\begin{lemma}[Lemma 5]
\label{lemma:lemma5}
With probability at least $1- 2p^{2-\Bar{c}KA_x^2}$,
\[\norm{\frac{1}{n}\sum_{i=1}^n \frac{1}{K}\mathbf{X}_i^{T}{\Tilde{V}}^{-1}\mathbf{X}_i - \mathbb{E}\left[\frac{1}{K} \mathbf{X}_i^{T}{\Tilde{V}}^{-1}\mathbf{X}_i\right]}_{\max} \leq 4A_xK_{\Tilde{x}}^2\sqrt{\log{p}/n}\]
where $\Bar{c} > 0$ is an absolute constant.
\end{lemma}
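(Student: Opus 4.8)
The plan is to control the $p\times p$ random matrix entrywise and then assemble the max-norm bound by a union bound over its $p^2$ coordinates. First I would pass to the whitened design $\Tilde{\mathbf{X}}_i := \Tilde{V}^{-1/2}\mathbf{X}_i$, so that $K^{-1}\mathbf{X}_i^T\Tilde{V}^{-1}\mathbf{X}_i = K^{-1}\Tilde{\mathbf{X}}_i^T\Tilde{\mathbf{X}}_i$; by Assumption \ref{ass:eigen} the whitening matrix $\Tilde{V}^{-1/2}$ has bounded operator norm, and by Lemmas \ref{lemma:subgaussian} and \ref{lemma:scalarsubgaussian} each entry $\Tilde{X}_{ijs}$ remains subgaussian with norm controlled by $K_{\Tilde{x}}$. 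The $(s,t)$ entry of the matrix in question is then $\tfrac{1}{nK}\sum_{i=1}^n\sum_{j=1}^K \Tilde{X}_{ijs}\Tilde{X}_{ijt}$, a sample average of products of subgaussians. Each such product is subexponential with $\psi_1$-norm at most $2K_{\Tilde{x}}^2$, and after subtracting its mean it becomes a mean-zero subexponential increment, which is exactly the setting of Bernstein's inequality.

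The one non-standard feature is that the summands are not jointly independent: across clusters $i$ they are i.i.d., but within a cluster the $K$ products are dependent. I would resolve this exactly as in Remark \ref{remark:gamma}: write the entry as $\tfrac{1}{K}\sum_{j=1}^K\big(\tfrac{1}{n}\sum_{i=1}^n \Tilde{X}_{ijs}\Tilde{X}_{ijt}\big)$, bound the deviation of the outer average by the maximum over $j$ of the inner deviations, and pass to a sum of $K$ tail probabilities via a union bound over time points. For each fixed triple $(s,t,j)$ the inner sum $\tfrac{1}{n}\sum_i \Tilde{X}_{ijs}\Tilde{X}_{ijt}$ is then a genuine average of $n$ independent, mean-centered subexponential variables with $\psi_1$-norm at most $2K_{\Tilde{x}}^2$, so Bernstein's inequality applies directly.

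With the deviation level set to $t = 4A_xK_{\Tilde{x}}^2\sqrt{\log p/n}$ and under the standing restriction $A_x\sqrt{\log p/n}\le 1$, the minimum in Bernstein's bound is attained by its quadratic (sub-Gaussian) branch, since $t/(2K_{\Tilde{x}}^2)=2A_x\sqrt{\log p/n}$ stays bounded. This yields a per-coordinate tail of order $\exp(-\Bar{c}\,A_x^2\log p)=p^{-\Bar{c}A_x^2}$, a negative power of $p$ whose exponent is proportional to $A_x^2$. Combining the union bound over the $p^2$ entries with the $K$-fold union over time points gives a total failure probability of order $Kp^{2-\Bar{c}A_x^2}$, which matches, up to the placement of the harmless factor $K$, the bound $2p^{2-\Bar{c}KA_x^2}$ asserted in the lemma, with $\Bar{c}$ absorbing Bernstein's absolute constant; in particular the $\sqrt{\log p/n}$ rate confirms that the effective sample size is the number of clusters $n$.

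The main obstacle is precisely the within-cluster dependence: the classical concentration results for empirical second-moment matrices assume independent rows, which fails here because the $K$ rows of $\mathbf{X}_i$ belong to the same cluster. The time-slicing device is what restores independence without paying more than the benign factor $K$ in the tail, and everything after that is routine subexponential bookkeeping, the only additional facts needed being the preservation of subgaussianity under whitening and the $\psi_1$-norm bound for products of subgaussians, both of which are supplied by Lemmas \ref{lemma:subgaussian} and \ref{lemma:scalarsubgaussian}.
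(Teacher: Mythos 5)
Your proof is correct, and the skeleton — whiten via $\Tilde{V}^{-1/2}$, note that products of subgaussian entries are subexponential with $\psi_1$-norm at most $2K_{\Tilde{x}}^2$, center, apply Bernstein, and union-bound over the $p^2$ coordinates — is exactly the paper's. The one place you diverge is the treatment of within-cluster dependence, which is the only non-routine step. You slice by time point: write each entry as $\frac{1}{K}\sum_{j=1}^K\bigl(\frac{1}{n}\sum_{i=1}^n \Tilde{X}_{ijs}\Tilde{X}_{ijt}\bigr)$, dominate the deviation by the maximum over $j$, and pay a factor $K$ in the union bound, so that each inner sum is an average of $n$ genuinely independent subexponentials with $\psi_1$-norm $2K_{\Tilde{x}}^2$. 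The paper instead groups the $K$ dependent products within each cluster into a single subexponential variable (using the triangle inequality for the $\psi_1$-norm, quoted there as giving norm at most $4KK_{\Tilde{x}}^2$), producing $n$ independent summands and keeping the union bound at $p^2$ terms. The two devices are interchangeable for fixed $K$: yours keeps the per-term norm small at the cost of a multiplicative $K$ in the failure probability, theirs keeps the count at $p^2$ at the cost of inflating the norm by $K$. Your observation that the resulting bound is $2Kp^{2-\Bar{c}A_x^2}$ rather than the stated $2p^{2-\Bar{c}KA_x^2}$ is well taken — the factor $K$ in the exponent does not actually emerge from either argument (averaging $K$ \emph{dependent} terms yields no variance gain), so the lemma's stated probability is slightly loose bookkeeping on the paper's part; since $K$ is fixed, this is immaterial to every downstream use. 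Your verification that the quadratic branch of Bernstein is active under $A_x\sqrt{\log p/n}\le 1$ is also correct and is a detail the paper glosses over.
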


\begin{proof}
We first note that the $p^2$ elements of $\frac{1}{n}\sum_{i=1}^n \frac{1}{K}\mathbf{X}_i^{T}\Tilde{V}^{-1}\mathbf{X}_i$ are sums of subexponential random variables. This follows from the fact that the rows of $\Tilde{X}_i := \Tilde{V}^{-1/2}\mathbf{X}_i$ remain subgaussian (Lemmas \ref{lemma:subgaussian} and \ref{lemma:scalarsubgaussian}). As stated in \cite{neykov2018unified}, this can be seen through the following relationship from \cite{vershynin2010}:
\[\norm{\Tilde{X}_{ij,\ell}\Tilde{X}_{ij,m}}_{\psi_1} \leq 2\norm{\Tilde{X}_{ij,\ell}}_{\psi_2}\norm{\Tilde{X}_{ij,m}}_{\psi_2} \leq 2K_{\Tilde{x}}^2\]
Therefore, the elements of \[\frac{1}{n}\sum_{i=1}^n \frac{1}{K}\mathbf{X}_i^{T}\Tilde{V}^{-1}\mathbf{X}_i - \mathbb{E}\left[\frac{1}{K} \mathbf{X}_i^{T}\Tilde{V}^{-1}\mathbf{X}_i\right]\] are sums of centered subexponential random variables with subexponential norm:
\[\norm{\Tilde{X}_{ij,\ell}\Tilde{X}_{ij,m} - \mathbb{E}[\Tilde{X}_{ij,\ell}\Tilde{X}_{ij,m}]}_{\psi_1} \leq 4K_{\Tilde{x}}^2\]
The final result follows from a Berstein-type inequality (\cite{vershynin2010}, Proposition 5.16). Although this result depends on the independence of the elements of the sum, we can justify this by summing up the $K$ dependent subexponentials to create $n$ independent subexponentials. Grouping the $K$ dependent terms per observation, we are now summing $n$ centered independent subexponential random variables with subexponential norm less than or equal to $4KK_{\Tilde{x}}^2$. 
\end{proof}
\begin{lemma}\label{lemma:VhatVtildesandwich}$\norm{\hat{V}_n^{-1}V\hat{V}_n^{-1} - \Tilde{V}^{-1}V\Tilde{V}^{-1}}_2 = o_p(1)$
    \begin{proof}
        \begin{align*}
            &\norm{\hat{V}_n^{-1}V( \hat{V}_n^{-1} - \Tilde{V}^{-1}) + \Tilde{V}^{-1}V( \hat{V}_n^{-1} - \Tilde{V}^{-1}) - \Tilde{V}^{-1}V\hat{V}_n^{-1} + \hat{V}_n^{-1}V\Tilde{V}^{-1}}_2\\
            &= \norm{\hat{V}_n^{-1}V( \hat{V}_n^{-1} - \Tilde{V}^{-1}) - \Tilde{V}^{-1}V( \hat{V}_n^{-1} - \Tilde{V}^{-1}) - \Tilde{V}^{-1}V\hat{V}_n^{-1} + \hat{V}_n^{-1}V\Tilde{V}^{-1} + 2\Tilde{V}^{-1}V( \hat{V}_n^{-1} - \Tilde{V}^{-1})}_2\\
            &= \norm{( \hat{V}_n^{-1} - \Tilde{V}^{-1})V( \hat{V}_n^{-1} - \Tilde{V}^{-1}) - \Tilde{V}^{-1}V(\hat{V}_n^{-1} - \Tilde{V}^{-1}) + (\hat{V}_n^{-1} - \Tilde{V}^{-1})V \Tilde{V}^{-1} + 2\Tilde{V}^{-1}V( \hat{V}_n^{-1} - \Tilde{V}^{-1})}_2\\
            &= \norm{( \hat{V}_n^{-1} - \Tilde{V}^{-1})V( \hat{V}_n^{-1} - \Tilde{V}^{-1}) + \Tilde{V}^{-1}V(\hat{V}_n^{-1} - \Tilde{V}^{-1}) + (\hat{V}_n^{-1} - \Tilde{V}^{-1})V \Tilde{V}^{-1} }_2\\
            &\leq \norm{\hat{V}_n^{-1} - \Tilde{V}^{-1}}_2^2\norm{V}_2 + 2\norm{\hat{V}_n^{-1} - \Tilde{V}^{-1}}_2\norm{\Tilde{V}^{-1}}_2\norm{V}_2\\
            &= O_p\left(s\frac{\log(p)}{n}\right)\cdot O(1) + 2 O_p\left(\sqrt{s\frac{\log(p)}{n}}\right) \cdot O(1) \cdot O(1)\\
            &= O_p\left(s\frac{\log(p)}{n}\right) = o_p(1)
        \end{align*}
    \end{proof}
\end{lemma}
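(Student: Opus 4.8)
The plan is to reduce the entire statement to the operator-norm convergence of the single difference $D := \hat{V}_n^{-1} - \Tilde{V}^{-1}$, which is already furnished by Lemma \ref{lemma:vhatconvergencetovtilde}, and then to control the sandwich difference through one short algebraic expansion. First I would write $\hat{V}_n^{-1} = \Tilde{V}^{-1} + D$ and substitute into the left factor, the middle $V$, and the right factor, giving
\begin{equation}
\hat{V}_n^{-1} V \hat{V}_n^{-1} = \Tilde{V}^{-1} V \Tilde{V}^{-1} + \Tilde{V}^{-1} V D + D V \Tilde{V}^{-1} + D V D,
\end{equation}
so that the quantity of interest is
\begin{equation}
\hat{V}_n^{-1} V \hat{V}_n^{-1} - \Tilde{V}^{-1} V \Tilde{V}^{-1} = \Tilde{V}^{-1} V D + D V \Tilde{V}^{-1} + D V D.
\end{equation}
This cleanly isolates all dependence on the estimation error $D$ and leaves only fixed matrices whose norms I can bound by assumption.

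Next I would take operator norms and apply submultiplicativity together with the eigenvalue bounds of Assumption \ref{ass:eigen}. Since $\norm{V}_2 = \lambda_{\max}(V) \le M = O(1)$ and $\norm{\Tilde{V}^{-1}}_2 = \lambda_{\min}(\Tilde{V})^{-1} \le M = O(1)$, the triangle inequality yields
\begin{equation}
\norm{\hat{V}_n^{-1} V \hat{V}_n^{-1} - \Tilde{V}^{-1} V \Tilde{V}^{-1}}_2 \le 2 \norm{\Tilde{V}^{-1}}_2 \norm{V}_2 \norm{D}_2 + \norm{V}_2 \norm{D}_2^2 .
\end{equation}
Both factors multiplying $\norm{D}_2$ and $\norm{D}_2^2$ are therefore $O(1)$, and the problem collapses to showing $\norm{D}_2 = o_p(1)$.

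Finally I would invoke the rate $\norm{D}_2 = O_p\big(s\sqrt{\log(p)/n}\big)$ from Lemma \ref{lemma:vhatconvergencetovtilde}. Under Assumption \ref{ass:sparsity} we have $s\log(p)/\sqrt{n} = o(1)$, and since $\sqrt{\log p} \le \log p$ this implies the weaker $s\sqrt{\log(p)/n} = o(1)$; hence $\norm{D}_2 = o_p(1)$, and both the linear term $O_p(\norm{D}_2)$ and the quadratic remainder $O_p(\norm{D}_2^2)$ vanish, giving the claimed $o_p(1)$. The computation is genuinely routine, so rather than a deep obstacle the only points requiring care are (i) confirming via Assumption \ref{ass:eigen} that the fixed factors $\Tilde{V}^{-1}$ and $V$ carry $O(1)$ operator norms, and (ii) observing that the quadratic term $\norm{D}_2^2$ is dominated by the linear one, so that the standard sparsity condition of Assumption \ref{ass:sparsity} already suffices and no strengthening is needed. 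Everything else follows immediately from submultiplicativity of the operator norm and the previously established convergence of $\hat{V}_n^{-1}$ to $\Tilde{V}^{-1}$.
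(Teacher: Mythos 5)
Your proposal is correct and follows essentially the same route as the paper: both arrive at the identical decomposition $\Tilde{V}^{-1}VD + DV\Tilde{V}^{-1} + DVD$ (you by direct substitution $\hat{V}_n^{-1}=\Tilde{V}^{-1}+D$, the paper by a more roundabout add-and-subtract manipulation), then bound it via the triangle inequality and submultiplicativity using Assumption \ref{ass:eigen} and the rate from Lemma \ref{lemma:vhatconvergencetovtilde}. Your bookkeeping of the rates is in fact slightly cleaner than the paper's own display, and the conclusion $o_p(1)$ under Assumption \ref{ass:sparsity} is the same.
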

\begin{lemma}
\label{lemma:deltaconsistentknownv}
$\hat{\Delta}$ is a consistent estimator of $\Delta$.
\end{lemma}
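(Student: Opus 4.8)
The plan is to prove $\hat{\Delta}\overset{p}{\to}\Delta$ by introducing an oracle intermediate and replacing each estimated ingredient — the projection vector, the sandwiched inverse covariance, and the ``meat'' matrix — by its target one at a time. Write $W^* := \frac{1}{n}\sum_{i=1}^n \frac{1}{K}\mathbf{X}_i^T\Tilde{V}^{-1}V\Tilde{V}^{-1}\mathbf{X}_i$ and $\hat{W} := \frac{1}{n}\sum_{i=1}^n \frac{1}{K}\mathbf{X}_i^T\hat{V}_n^{-1}\hat{M}\hat{V}_n^{-1}\mathbf{X}_i$, where $\hat{M} := \frac{1}{n}\sum_{i=1}^n\boldsymbol{\hat{\varepsilon}}_i\boldsymbol{\hat{\varepsilon}}_i^T$, so that $\hat{\Delta} = \hat{v}^T\hat{W}\hat{v}$ and $\Delta = v^T\,\mathbb{E}[\frac{1}{K}\mathbf{X}_i^T\Tilde{V}^{-1}V\Tilde{V}^{-1}\mathbf{X}_i]\,v$. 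Setting $\Delta^* := v^TW^*v$, I would split $\hat{\Delta} - \Delta = (\hat{\Delta} - \Delta^*) + (\Delta^* - \Delta)$ and control each piece.

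For the second term, $\Delta^* - \Delta = v^T(W^* - \mathbb{E}[\frac{1}{K}\mathbf{X}_i^T\Tilde{V}^{-1}V\Tilde{V}^{-1}\mathbf{X}_i])v$ is bounded by $\norm{v}_1^2$ times the max-norm deviation of the sample average from its expectation. Since $\Tilde{V}^{-1}V\Tilde{V}^{-1}$ is a fixed matrix with bounded eigenvalues (Assumption \ref{ass:eigen}), the entries of the sandwiched design average are sums of centered subexponential variables, so a Lemma \ref{lemma:lemma5}-type Bernstein bound gives this deviation as $O_p(\sqrt{\log p/n})$; combined with $\norm{v}_1 = O(1)$ (Assumption \ref{ass:vstarbound}) this is $o_p(1)$.

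The first term requires three replacements. I would first establish the meat convergence $\norm{\hat{M} - V}_2 = o_p(1)$: expanding $\boldsymbol{\hat{\varepsilon}}_i = \boldsymbol{\varepsilon}_i + \mathbf{X}_i(\bs - \bh)$ gives $\hat{M} = \frac{1}{n}\sum\boldsymbol{\varepsilon}_i\boldsymbol{\varepsilon}_i^T$ plus cross and quadratic terms in $\bh - \bs$, where the leading term converges to $V$ by the law of large numbers (entrywise, since $K$ is fixed) and the remainder is controlled by the rate $\norm{\bh - \bs}_1 = O_p(s\sqrt{\log p/n})$ from Theorem \ref{theorem:l1rates}, exactly as in the residual-based argument of Lemma \ref{lemma:vhatconvergencetovtilde}. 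Writing $D := \hat{V}_n^{-1}\hat{M}\hat{V}_n^{-1} - \Tilde{V}^{-1}V\Tilde{V}^{-1}$ and splitting $D = \hat{V}_n^{-1}(\hat{M}-V)\hat{V}_n^{-1} + (\hat{V}_n^{-1}V\hat{V}_n^{-1} - \Tilde{V}^{-1}V\Tilde{V}^{-1})$, the first piece is $o_p(1)$ by the meat convergence together with $\norm{\hat{V}_n^{-1}}_2 = O_p(1)$ (Lemma \ref{lemma:vhatconvergencetovtilde}) and the second is $o_p(1)$ directly by Lemma \ref{lemma:VhatVtildesandwich}, so $\norm{D}_2 = o_p(1)$. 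A Cauchy--Schwarz bound on each entry, as in Lemma \ref{lemma:Deltanrate}, then gives $\norm{\hat{W} - W^*}_{\max} \le \norm{D}_2\cdot\max_{\ell,m}\frac{1}{nK}\sum_i\norm{\mathbf{X}_i^\ell}_2\norm{\mathbf{X}_i^m}_2 = o_p(1)\cdot O_p(1) = o_p(1)$; note this holds deterministically in $D$, so no independence between $\hat{M}$ and the design is needed, and it also yields $\norm{\hat{W}}_{\max} = O_p(1)$.

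Finally I would collect these through the bilinear decomposition $\hat{v}^T\hat{W}\hat{v} - v^TW^*v = (\hat{v}-v)^T\hat{W}\hat{v} + v^T\hat{W}(\hat{v}-v) + v^T(\hat{W}-W^*)v$, bounding the first two terms by $\norm{\hat{v}-v}_1(\norm{\hat{v}}_1 + \norm{v}_1)\norm{\hat{W}}_{\max}$ and the third by $\norm{v}_1^2\norm{\hat{W}-W^*}_{\max}$. Using $\norm{\hat{v}-v}_1 = o_p(1)$ (Lemma \ref{lemma:vtildetovhat}, Corollary \ref{corollary:vl1rate}, under Assumptions \ref{ass:sparsity} and \ref{ass:vstarbound}), $\norm{\hat{v}}_1 \le \norm{v}_1 + \norm{\hat{v}-v}_1 = O_p(1)$, and the bounds above, every term is $o_p(1)$, giving $\hat{\Delta} - \Delta^* = o_p(1)$ and hence $\hat{\Delta}\overset{p}{\to}\Delta$. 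I expect the main obstacle to be the meat term $\norm{\hat{M}-V}_2 = o_p(1)$: it is the one genuinely new estimate here, and care is needed to ensure the plug-in residuals built from the regularized $\bh$ do not inflate the variance estimate, though this mirrors the residual bookkeeping already carried out in Lemma \ref{lemma:vhatconvergencetovtilde}.
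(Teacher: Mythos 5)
Your proposal is correct and follows essentially the same route as the paper: a telescoping replacement of each estimated ingredient ($\hat{v}$ for $v$, $\hat{V}_n^{-1}$ for $\Tilde{V}^{-1}$, the residual-based meat for $V$, and the sample average for its expectation), controlled by the same supporting results (Lemma \ref{lemma:lemma5}, Lemma \ref{lemma:VhatVtildesandwich}, Theorem \ref{theorem:l1rates}, and the $\ell_1$ rate for $\hat{v}-v$). The only difference is organizational --- you merge the paper's separate terms $4'$ and $5'$ into a single bound $\norm{\hat{M}-V}_2 = o_p(1)$ and package the covariance replacements into one operator-norm quantity $D$ before the bilinear expansion in $\hat{v}$ --- which does not change the substance of the argument.
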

\begin{proof}

Recall that 
\[\Delta = v^T\mathbb{E}\left[\frac{1}{K} \mathbf{X}_i^T\Tilde{V}^{-1}V{\Tilde{V}}^{-1}\mathbf{X}_i\right]v\] and 
%\[\Tilde{V} = {A}^{1/2}\Bar{R}{A}^{1/2}\]
\[\hat{\Delta} = \frac{1}{n}\hat{v}^{T}\left(\frac{1}{K}\sum_{i=1}^n \mathbf{X}_i^{T}\hat{V}_n^{-1}\left[\frac{1}{n}\sum_{i=1}^n\boldsymbol{\hat{\varepsilon}}_i\boldsymbol{\hat{\varepsilon}}_i^T\right]\hat{V}_n^{-1}\mathbf{X}_i\right)\hat{v} \]
 We consider five different pieces:
\begin{align*}
    %0' &= v^T\mathbb{E}\left[\frac{1}{K} \mathbf{X}_i^{wT}V\mathbf{X}_i^w\right]v - v^T\mathbb{E}\left[\frac{1}{K} \mathbf{X}_i^{T}\Tilde{V}^{-1}V\Tilde{V}^{-1}\mathbf{X}_i\right]v\\
    1' &= \frac{1}{n}v^T\left(\frac{1}{K}\sum_{i=1}^n \mathbf{X}_i^{T}\Tilde{V}^{-1}V{\Tilde{V}}^{-1}\mathbf{X}_i\right)v - v^T\mathbb{E}\left[\frac{1}{K} \mathbf{X}_i^{T}\Tilde{V}^{-1}V\Tilde{V}^{-1}\mathbf{X}_i\right]v \\
    2' &= \frac{1}{n}v^T\left(\frac{1}{K}\sum_{i=1}^n \mathbf{X}_i^{T}\hat{V}_n^{-1}V\hat{V}_n^{-1}\mathbf{X}_i\right)v - \frac{1}{n}v^T\left(\frac{1}{K}\sum_{i=1}^n \mathbf{X}_i^{T}\Tilde{V}^{-1}V\Tilde{V}^{-1}\mathbf{X}_i\right)v\\
    3' &= \frac{1}{n}\hat{v}^{T}\left(\frac{1}{K}\sum_{i=1}^n \mathbf{X}_i^{T}\hat{V}_n^{-1}V\hat{V}_n^{-1}\mathbf{X}_i\right)\hat{v} - \frac{1}{n}v^T\left(\frac{1}{K}\sum_{i=1}^n \mathbf{X}_i^{T}\hat{V}_n^{-1}V\hat{V}_n^{-1}\mathbf{X}_i\right)v\\
    4' &=  \frac{1}{n}\hat{v}^{T}\left(\frac{1}{K}\sum_{i=1}^n \mathbf{X}_i^{T}\hat{V}_n^{-1}\left[\frac{1}{n}\sum_{i=1}^n\boldsymbol{\varepsilon}_i\boldsymbol{\varepsilon}_i^{T}\right]\hat{V}_n^{-1}\mathbf{X}_i\right)\hat{v}  - \frac{1}{n}\hat{v}^{T}\left(\frac{1}{K}\sum_{i=1}^n \mathbf{X}_i^{T}\hat{V}_n^{-1}V\hat{V}_n^{-1}\mathbf{X}_i\right)\hat{v} \\
    5' &=  \frac{1}{n}\hat{v}^{T}\left(\frac{1}{K}\sum_{i=1}^n \mathbf{X}_i^{T}\hat{V}_n^{-1}\left[\frac{1}{n}\sum_{i=1}^n\boldsymbol{\hat{\varepsilon}}_i\boldsymbol{\hat{\varepsilon}}_i^T\right]\hat{V}_n^{-1}\mathbf{X}_i\right)\hat{v}  - \frac{1}{n}\hat{v}^{T}\left(\frac{1}{K}\sum_{i=1}^n \mathbf{X}_i^{T}\hat{V}_n^{-1}\left[\frac{1}{n}\sum_{i=1}^n\boldsymbol{\varepsilon}_i\boldsymbol{\varepsilon}_i^{T}\right]\hat{V}_n^{-1}\mathbf{X}_i\right)\hat{v} \\
\end{align*}
\begin{align*}
    |1'| &= \left|v^T\left[\frac{1}{n}\sum_{i=1}^n \frac{1}{K}\mathbf{X}_i^{T}\Tilde{V}^{-1}V{\Tilde{V}}^{-1}\mathbf{X}_i - \mathbb{E}\left[\frac{1}{K} \mathbf{X}_i^{T}\Tilde{V}^{-1}V{\Tilde{V}}^{-1}\mathbf{X}_i\right]\right]v\right|\\
    &\leq \norm{v}_1^2\norm{\frac{1}{n}\sum_{i=1}^n \frac{1}{K}\mathbf{X}_i^{T}\Tilde{V}^{-1}V{\Tilde{V}}^{-1}\mathbf{X}_i - \mathbb{E}\left[\frac{1}{K} \mathbf{X}_i^{T}\Tilde{V}^{-1}V{\Tilde{V}}^{-1}\mathbf{X}_i\right]}_{\max}\\
    &= \norm{v}_1^2 O_p(\sqrt{\log{p}/n})\\
    &= O(1) \cdot o_p(1) = o_p(1)
\end{align*}
Where the second to last line follows from Lemma \ref{lemma:lemma5} with slight modifications.
% \begin{align*}
%     |0'| &= \left|v^T\left[\mathbb{E}\left[\frac{1}{K} \mathbf{X}_i^{wT}V\mathbf{X}_i^w - \frac{1}{K} \mathbf{X}_i^{T}\Tilde{V}^{-1}V\Tilde{V}^{-1}\mathbf{X}_i\right]\right]v\right|\\
%     &\leq \norm{v}_2^2\norm{\mathbb{E}\left[K^{-1}\mathbf{X}_i^T\hat{V}_n^{-1}V\hat{V}_n^{-1}\mathbf{X}_i - K^{-1}\mathbf{X}_i^T\Tilde{V}^{-1}V\Tilde{V}^{-1}\mathbf{X}_i\right]}_2\\
%     &\leq \norm{v}_2^2\norm{\mathbb{E}\left[K^{-1}\mathbf{X}_i^T\left\{(\hat{V}_n^{-1}-\Tilde{V}^{-1})V(\hat{V}_n^{-1}+\Tilde{V}^{-1}) - \Tilde{V}^{-1}V\hat{V}_n^{-1} + \hat{V}_n^{-1}V\Tilde{V}^{-1}\right\}\mathbf{X}_i\right]}_2\\
%     &\leq \norm{v}_2^2\norm{\mathbb{E}\left[K^{-1}\mathbf{X}_i^T(\hat{V}_n^{-1}-\Tilde{V}^{-1})V(\hat{V}_n^{-1}+\Tilde{V}^{-1})\mathbf{X}_i\right] - \mathbb{E}\left[K^{-1}\mathbf{X}_i^T(\Tilde{V}^{-1}V\hat{V}_n^{-1} - \hat{V}_n^{-1}V\Tilde{V}^{-1})\mathbf{X}_i\right]}_2\\
%   &\leq \norm{v}_2^2
% \end{align*}
\begin{align*}
    |2'| &= \left|\frac{1}{n}v^T\left(\frac{1}{K}\sum_{i=1}^n \mathbf{X}_i^{T}(\hat{V}_n^{-1}V\hat{V}_n^{-1} - \Tilde{V}^{-1}V^{-1}\Tilde{V}^{-1})\mathbf{X}_i\right)v\right|\\
    & \leq \norm{\hat{V}_n^{-1}V\hat{V}_n^{-1} - \Tilde{V}^{-1}V\Tilde{V}^{-1}}_2\norm{v}_1^2 \norm{\frac{1}{nK}\sum_{i=1}^n\mathbf{X}_i^T\mathbf{X}_i}_{\max} \\
        & \leq \norm{\hat{V}_n^{-1}V\hat{V}_n^{-1} - \Tilde{V}^{-1}V\Tilde{V}^{-1}}_2\norm{v}_1^2\left(\norm{\frac{1}{nK}\sum_{i=1}^n\mathbf{X}_i^T\mathbf{X}_i - \mathbb{E}[K^{-1}\mathbf{X}_i^T\mathbf{X}_i]}_{\max} + \norm{\mathbb{E}[K^{-1}\mathbf{X}_i^T\mathbf{X}_i]}_{\max}\right) \\
        &\leq \norm{\hat{V}_n^{-1}V\hat{V}_n^{-1} - \Tilde{V}^{-1}V\Tilde{V}^{-1}}_2\norm{v}_1^2\left(\norm{\frac{1}{nK}\sum_{i=1}^n\mathbf{X}_i^T\mathbf{X}_i - \mathbb{E}[K^{-1}\mathbf{X}_i^T\mathbf{X}_i]}_{\max} + \norm{\Sigma_X}_{2}\right) \\
        &= o_p(1) \cdot O(1) \cdot \left(O_p\left(\sqrt{\frac{\log(p)}{n}}\right)+ O(1)\right) = o_p(1) \\
\end{align*}
Letting $B_n:= \left(\frac{1}{nK}\sum_{i=1}^n \mathbf{X}_i^{T}\hat{V}_n^{-1}V\hat{V}_n^{-1}\mathbf{X}_i\right)$ and $C_n: = \left(\frac{1}{nK}\sum_{i=1}^n \mathbf{X}_i^{T}\Tilde{V}^{-1}V\Tilde{V}^{-1}\mathbf{X}_i\right)$:
\begin{align*}
  |3'|&= \norm{\hat{v}^{T} B_n \hat{v} - v^TB_n\hat{v} + \hat{v}^{T}B_nv - v^TB_nv}_1\\
   &= \norm{(\hat{v} - v)^T(B_n\hat{v}+ B_nv)}_1 \\
   &\leq \norm{\hat{v} - v}_1\norm{B_n(\hat{v}+v)}_{\infty} \\
   &\leq \norm{\hat{v} - v}_1\norm{(B_n-C_n +C_n)(\hat{v}+v)}_{\infty} \\
  &\leq \norm{\hat{v} - v}_1\left(\norm{(B_n-C_n)(\hat{v}+v)}_{\infty} + \norm{C_n(\hat{v}+v)}_{\infty}\right) \\
  &\leq \norm{\hat{v} - v}_1\left(\norm{(B_n-C_n)(\hat{v}-v)}_{\infty} + 2\norm{(B_n-C_n)v}_{\infty}+ \norm{C_n(\hat{v}-v)}_{\infty} + 2\norm{C_nv}_{\infty} \right) \\
  &\leq \norm{\hat{v}-v}^2_{1}\norm{B_n-C_n}_{\max} + 2\norm{v}_{1}\norm{\hat{v}-v}_{1}\norm{B_n-C_n}_{\max}+\\ &\ \norm{\hat{v}-v}^2_1\norm{\Tilde{V}^{-1}V^{-1}\Tilde{V}^{-1}}_2\norm{\frac{1}{nK}\sum_{i=1}^n\mathbf{X}_i^T\mathbf{X}_i}_{\max} +
   2 \norm{\hat{v} - v}_1 \norm{\Tilde{V}^{-1}V^{-1}\Tilde{V}^{-1}}_2 \norm{v}_1\norm{\frac{1}{nK}\sum_{i=1}^n\mathbf{X}_i^T\mathbf{X}_i}_{\max}\\
  &= o_p(1)\left(o_p(1)\norm{B_n-C_n}_{\max} + O(1)\norm{B_n-C_n}_{\max}+ o_p(1) \cdot O(1) \norm{\frac{1}{nK}\sum_{i=1}^n\mathbf{X}_i^T\mathbf{X}_i}_{\max}\right) +\\
  & o_p(1) \cdot O(1) \cdot O(1) \cdot \norm{\frac{1}{nK}\sum_{i=1}^n\mathbf{X}_i^T\mathbf{X}_i}_{\max}\\
   %&\leq \norm{\hat{v} - v}_1 (\norm{B_n\hat{v}}_{\infty}+\norm{B_nv}_{\infty} )\\
   %&\leq O_p\left(\norm{v}_1s_v\sqrt{\log(p)/{n}}\right)((\lambda' + 1) +  \norm{B_nv}_{\infty})\\
   %&\leq O_p\left(\norm{v}_1s_v\sqrt{\log(p)/{n}}\right)(O_p(\norm{v}_1\sqrt{\log(p)/n}) +  \norm{B_nv}_{\infty})\\
\end{align*}

Using Lemma \ref{lemma:Deltanrate}, we can derive that $\norm{B_n-C_n}_{\max} = O_p\left(\frac{\log(p)}{n}\cdot r_2\right)$, where $r_2$ is the rate associated with $\norm{\hat{V}_n^{-1}V^{-1}\hat{V}_n^{-1} - \Tilde{V}^{-1}V^{-1}\Tilde{V}^{-1}}_2$ in Lemma \ref{lemma:VhatVtildesandwich}. Using Lemma \ref{lemma:lemma5}, we know that $\norm{\frac{1}{nK}\sum_{i=1}^n\mathbf{X}_i^T\mathbf{X}_i}_{\max} = \norm{\Sigma_X}_{\max} + O_p\left(\sqrt{\frac{\log(p)}{n}}\right) \leq O(1) + O_p\left(\sqrt{\frac{\log(p)}{n}}\right) $.
Using these results:
\begin{align*}
    &o_p(1)\left(o_p(1)\norm{B_n-C_n}_{\max} + O(1)\norm{B_n-C_n}_{\max}+ o_p(1) \cdot O(1) \norm{\frac{1}{nK}\sum_{i=1}^n\mathbf{X}_i^T\mathbf{X}_i}_{\max}\right) +\\
    & o_p(1) \cdot O(1) \cdot O(1) \cdot \norm{\frac{1}{nK}\sum_{i=1}^n\mathbf{X}_i^T\mathbf{X}_i}_{\max}\\
    &\leq o_p(1)\left(o_p(1)O_p\left(\frac{\log(p)}{n}\cdot r_2\right) + O(1)O_p\left(\frac{\log(p)}{n}\cdot r_2\right)+ o_p(1) \cdot O(1)\left(O(1) + O_p\left(\sqrt{\frac{\log(p)}{n}}\right)\right)\right) +\\
    & o_p(1) \cdot O(1) \cdot O(1) \cdot \left(O(1) + O_p\left(\sqrt{\frac{\log(p)}{n}}\right)\right)\\
\end{align*}
Finally, using our result for $r_2$, we can conclude that $|3'| = o_p(1)$. Letting $\Tilde{a} = \frac{1}{n}\sum_{i=1}^n\boldsymbol{\varepsilon}_i\boldsymbol{\varepsilon}_i^{T} - V$:
\begin{align*}
    |4'| &=  \left|\frac{1}{n}\hat{v}^{T}\left(\frac{1}{K}\sum_{i=1}^n \mathbf{X}_i^{T}\hat{V}_n^{-1}\left[\frac{1}{n}\sum_{i=1}^n\boldsymbol{\varepsilon}_i\boldsymbol{\varepsilon}_i^{T}\right]\hat{V}_n^{-1}\mathbf{X}_i\right)\hat{v}  - \frac{1}{n}\hat{v}^{T}\left(\frac{1}{K}\sum_{i=1}^n \mathbf{X}_i^{T}\hat{V}_n^{-1}V\hat{V}_n^{-1}\mathbf{X}_i\right)\hat{v} \right|\\
    &= \norm{\frac{1}{n}\hat{v}^T\left(\sum_{i = 1}^n\mathbf{X}_i^T\hat{V}_n^{-1}\Tilde{a}\hat{V}_n^{-1}\mathbf{X}_i\right)\hat{v}}_1\\
   &= \norm{\Tilde{a}}_2\norm{\hat{V}_n^{-1}}_2\norm{\frac{1}{n}\hat{v}^T\left(\sum_{i = 1}^n\mathbf{X}_i^T\hat{V}_n^{-1}\mathbf{X}_i\right)\hat{v}}_{\max}\\
   &\leq \norm{\Tilde{a}}_2\norm{\hat{V}_n^{-1}}_2\norm{\hat{v}}_1\norm{\frac{1}{n}\left(\sum_{i = 1}^n\mathbf{X}_i^T\hat{V}_n^{-1}\mathbf{X}_i\right)\hat{v}}_{\max}\\
      &\leq \norm{\Tilde{a}}_2\norm{\hat{V}_n^{-1}}_2\left(\norm{v}_1 +  O_p\left(s_v\frac{\log(p)}{n}\sqrt{\frac{s\log(p)}{n}}\right)\right)\norm{\frac{1}{n}\left(\sum_{i = 1}^n\mathbf{X}_i^T\hat{V}_n^{-1}\mathbf{X}_i\right)\hat{v}}_{\max}\\
      &\leq \norm{\Tilde{a}}_2\norm{\hat{V}_n^{-1}}_2\left(\norm{v}_1 +  O_p\left(s_v\frac{\log(p)}{n}\sqrt{\frac{s\log(p)}{n}}\right)\right)(1 + O_p(\lambda'))\\
      &\leq \norm{\Tilde{a}}_2\left(O(1) + O_p\left(\sqrt{\frac{s\log(p)}{n}}\right)\right)\left(\norm{v}_1 +  O_p\left(s_v\frac{\log(p)}{n}\sqrt{\frac{s\log(p)}{n}}\right)\right)(1 + O_p(\lambda'))\\
      &\leq o_p(1)\left(O(1) + O_p\left(\sqrt{\frac{s\log(p)}{n}}\right)\right)\left(\norm{v}_1 +  O_p\left(s_v\frac{\log(p)}{n}\sqrt{\frac{s\log(p)}{n}}\right)\right)(1 + O_p(\lambda'))\\
    % &\leq \norm{\hat{v}}_1^2\norm{\frac{1}{n}\left(\sum_{i = 1}^n\mathbf{X}_i^T\hat{V}^{-1}\Tilde{a}\hat{V}^{-1}\mathbf{X}_i\right)}_{\max}\\
    % &\leq [\norm{{v}}_1^2+ o_p(1)] \norm{\Tilde{a}}_2\norm{\frac{1}{n}\sum_{i = 1}^n\mathbf{X}_i^T\hat{V}^{-2}\mathbf{X}_i}_{\max}\\
    % &\leq [\norm{{v}}_1^2+ o_p(1)] \norm{\Tilde{a}}_2\norm{\hat{V}^{-2}}_2\frac{1}{n}\max_{(\ell,m)}\norm{\sum_{i = 1}^n \mathbf{X}_i^\ell}_2\norm{\sum_{i = 1}^n \mathbf{X}_i^m}_2\\
    % &\leq [\norm{{v}}_1^2+ o_p(1)] \norm{\Tilde{a}}_2\norm{\hat{V}^{-1}}^2_2K\frac{1}{n}\max_{(\ell,m)}\norm{\sum_{i = 1}^n \mathbf{X}_i^\ell}_{\infty}\norm{\sum_{i = 1}^n \mathbf{X}_i^m}_{\infty}\\
    % &\leq [\norm{{v}}_1^2+ o_p(1)] \norm{\Tilde{a}}_2\norm{\hat{V}^{-1}}^2_2K\frac{1}{n}\max_{\ell = 1...p}\norm{\sum_{i = 1}^n \mathbf{X}_i^\ell}_{\infty}^2\\
    % &= [\norm{{v}}_1^2+ o_p(1)] \norm{\Tilde{a}}_2\left(\norm{\Tilde{V}^{-1}}_2 + o_p(1)\right)^2K\frac{1}{n}\max_{\ell = 1...p}\norm{\sum_{i = 1}^n \mathbf{X}_i^\ell}_{\infty}^2 \rightarrow{\text{Lemma \ref{lemma:workingcor}}}\\
    % &= O_p(1) \cdot \norm{\Tilde{a}}_2 \cdot O_p(1) \cdot O_p(1) \rightarrow{\text{Lemma \ref{lemma:lemma5}}}\\
\end{align*}
Since the dominating term is $o_p(1)$, we can conclude that $|4'| = o_p(1)$. Letting $\Tilde{b} = \frac{1}{n}\sum_{i=1}^n\boldsymbol{\hat{\varepsilon}}_i\boldsymbol{\hat{\varepsilon}}_i^T - \frac{1}{n}\sum_{i=1}^n\boldsymbol{\varepsilon}_i\boldsymbol{\varepsilon}_i^{T}$:
\begin{align*}
    |5'| &= \left|\frac{1}{n}\hat{v}^{T}\left(\frac{1}{K}\sum_{i=1}^n \mathbf{X}_i^{T}\hat{V}_n^{-1}\left[\frac{1}{n}\sum_{i=1}^n\boldsymbol{\hat{\varepsilon}}_i\boldsymbol{\hat{\varepsilon}}_i^T\right]\hat{V}_n^{-1}\mathbf{X}_i\right)\hat{v}  - \frac{1}{n}\hat{v}^{T}\left(\frac{1}{K}\sum_{i=1}^n \mathbf{X}_i^{T}\hat{V}_n^{-1}\left[\frac{1}{n}\sum_{i=1}^n\boldsymbol{\varepsilon}_i\boldsymbol{\varepsilon}_i^{T}\right]\hat{V}_n^{-1}\mathbf{X}_i\right)\hat{v}\right| \\
    &= \norm{\frac{1}{n}\hat{v}^T\left(\sum_{i = 1}^n\mathbf{X}_i^T\hat{V}_n^{-1}\Tilde{b}\hat{V}_n^{-1}\mathbf{X}_i\right)\hat{v}}_1\\
\end{align*}
Clearly showing the rate of this term is parallel to the procedure for $4'$, the only difference being the term $\Tilde{b}$. We therefore focus only on this term:
\begin{align*}
    \Tilde{b} &= \left[\frac{1}{nK}\sum_{i=1}^n(\mathbf{Y}_{i}-\mathbf{X}_i\bh)(\mathbf{Y}_{i}-\mathbf{X}_i\bh)^T\right] - \left[\frac{1}{nK}\sum_{i=1}^n(\mathbf{Y}_{i}-\mathbf{X}_i\bs)(\mathbf{Y}_{i}-\mathbf{X}_i\bs)^T\right]\\
    &= \frac{1}{nK}\sum_{i=1}^n \mathbf{X}_i(\bh \bh^T - \boldsymbol{\beta}\boldsymbol{\beta}^{T})\mathbf{X}_i^T - \mathbf{Y}_{i}(\bh ^T - \boldsymbol{\beta}^{T})\mathbf{X}_i^T - \mathbf{X}_i(\bh - \bs)\mathbf{Y}_{i}^T\\
    &= \frac{1}{nK}\sum_{i=1}^n \mathbf{X}_i(\bh \bh^T - \boldsymbol{\beta}\boldsymbol{\beta}^{T})\mathbf{X}_i^T - \mathbf{Y}_{i}(\bh ^T - \boldsymbol{\beta}^{T})\mathbf{X}_i^T - \mathbf{X}_i(\bh - \bs)\mathbf{Y}_{i}^T + 2\mathbf{X}_i\bs\boldsymbol{\beta}^{T}\mathbf{X}_i^T - 2\mathbf{X}_i\bs\boldsymbol{\beta}^{T}\mathbf{X}_i^T -\\
    &\mathbf{X}_i\bh\boldsymbol{\beta}^{T}\mathbf{X}_i^T + \mathbf{X}_i\bh\boldsymbol{\beta}^{T}\mathbf{X}_i^T - \mathbf{X}_i\bs\bh^T\mathbf{X}_i^T + \mathbf{X}_i\bs\bh^T\mathbf{X}_i^T\\
    &= \frac{1}{nK}\sum_{i=1}^n \underbrace{\left[\mathbf{X}_i(\bh -\bs)\right]\left[\mathbf{X}_i(\bh -\bs)\right]^T}_{(i)} - \underbrace{(\mathbf{Y}_{i}-\mathbf{X}_i\bs)(\bh^T -\boldsymbol{\beta}^{T})\mathbf{X}_i^T}_{(ii)}- \underbrace{\mathbf{X}_i(\bh -\bs)(\mathbf{Y}_{i}-\mathbf{X}_i\bs)^T}_{(iii)}
\end{align*}

We'll show that the $w, j$th element of (i), (ii), and (iii), respectively, converges to zero in probability.
Note that in Lemma \ref{theorem:l1rates}, we showed that 
\[\frac{1}{nK}\norm{X(\bh - \bs)}_2^2 = \frac{1}{nK}\sum_{i = 1}^n\sum_{j=1}^K\mathbf{X}_{ij}^T(\bh-\bs)(\bh-\bs)^T\mathbf{X}_{ij} = O_p(s\log (p)/n) = o_p(1)\]
\begin{align*}
    (i)_{w,j} &= \frac{1}{nK}\sum_{i=1}^n x_{iw}^T(\bh - \bs)(\bh -\bs)^T\mathbf{X}_{ij}\\
     &= \frac{1}{nK}\gamma_w^T\gamma_j\\
     &\leq \sqrt{\frac{1}{nK} \gamma_w^T\gamma_w}\sqrt{\frac{1}{nK} \gamma_j^T\gamma_j} \rightarrow \text{Cauchy-Schwarz}\\
     &= \sqrt{o_p(1)}\sqrt{o_p(1)} = o_p(1) 
\end{align*}
where $x_{iw}$ is the $w$th row of $\mathbf{X}_i$ and $\gamma_w = [x_{1w}^T(\bh - \bs)...x_{nw}^T(\bh - \bs)]^T$. The final line follows from the fact that $\frac{1}{nK}\norm{\gamma_w}_2^2 \leq \frac{1}{nK}\norm{X(\bh - \bs)}_2^2 = o_p(1)$.
\begin{align*}
    (iii)_{w,j} &= \frac{1}{nK}\sum_{i=1}^n x_{iw}^T(\bh - \bs)\boldsymbol{\varepsilon}_{ij}\\
    &\leq \sum_{i=1}^n |x_{iw}^T(\bh - \bs)||\boldsymbol{\varepsilon}_{ij}|\\
    &\leq \norm{\frac{1}{nK}\gamma_w}_2\sqrt{\frac{1}{nK}\sum_{i=1}^n\boldsymbol{\varepsilon}^2_{ij}}\rightarrow \text{Cauchy-Schwarz}\\
    &= o_p(1)O_p(1) = o_p(1) 
\end{align*}
where the final line follows because $\norm{\frac{1}{nK}\gamma_w}_2 \leq \frac{1}{nK}\norm{X(\bh - \bs)}_2 = o_p(1)$ and by the LLN, $\frac{1}{n}\sum_{i=1}^n\boldsymbol{\varepsilon}^2_{ij} = O(1)$.
Adding a finite number of elements, each converging to 0, implies that the Frobenius norm is also converging to 0, i.e. 
\begin{align*}
    \norm{\Tilde{b}}_2 \leq \norm{\Tilde{b}}_F = o_p(1)
\end{align*}
Therefore $|5'| = o_p(1)$.
% where the second to last line follows from Corollary \ref{corollary:vl1rate} and the definition of $\hat{v}$. From Lemma \ref{lemma:knownvlemma8aii}, with probability at least $1 - 2Kp^{2-\Bar{c}A_x^2}$,  $\norm{B_nv}_{\infty} = O_p(\norm{v}_1\sqrt{\log(p)/n})$ and so,
% \begin{align*}
%   |2'|&\leq O_p\left(\norm{v}_1s_v\sqrt{\log(p)/{n}}\right)(O_p(\norm{v}_1\sqrt{\log(p)/n}) +  O_p(\norm{v}_1\sqrt{\log(p)/n}))\\
%   &= O_p\left(\norm{v}_1s_v\sqrt{\log(p)/{n}}\right)(o_p(1) + o_p(1) )\\
%   &= o_p(1)
% \end{align*}

% \begin{align*}
%     3' &=  \frac{1}{n}\hat{v}^{T}\left(\frac{1}{K}\sum_{i=1}^n \mathbf{X}_i^{T}\hat{V}^{-1}\left[\frac{1}{n}\sum_{i=1}^n\boldsymbol{\varepsilon}_i\boldsymbol{\varepsilon}_i^{T}\right]\hat{V}^{-1}\mathbf{X}_i\right)\hat{v}  - \frac{1}{n}\hat{v}^{T}\left(\frac{1}{K}\sum_{i=1}^n \mathbf{X}_i^{T}\hat{V}^{-1}V\hat{V}^{-1}\mathbf{X}_i\right)\hat{v} \\
%     4' &=  \frac{1}{n}\hat{v}^{T}\left(\frac{1}{K}\sum_{i=1}^n \mathbf{X}_i^{T}\hat{V}^{-1}\left[\frac{1}{n}\sum_{i=1}^n\boldsymbol{\hat{\varepsilon}}_i\boldsymbol{\hat{\varepsilon}}_i^T\right]\hat{V}^{-1}\mathbf{X}_i\right)\hat{v}  - \frac{1}{n}\hat{v}^{T}\left(\frac{1}{K}\sum_{i=1}^n \mathbf{X}_i^{T}\hat{V}^{-1}\left[\frac{1}{n}\sum_{i=1}^n\boldsymbol{\varepsilon}_i\boldsymbol{\varepsilon}_i^{T}\right]\hat{V}^{-1}\mathbf{X}_i\right)\hat{v} \\
% \end{align*}
We can thus conclude that:
\begin{align*}
  |\hat{\Delta} - \Delta| &\leq |1'| + |2'| + |3'| +|4'| +|5'| \\
  & = o_p(1) + o_p(1) + o_p(1) +o_p(1) +o_p(1)\\
  &= o_p(1)
\end{align*}
that is, $\hat{\Delta}$ is a consistent estimator of $\Delta$.
\end{proof}

\subsection{Proofs of Lemmas needed for Theorem \ref{theorem:semiparaeffi}}
In this section we show that our estimator, $T_n$, is semiparametrically efficient using extensions of Le Cam's arguments for asymptotically linear
estimators after \cite{jankova2018semiparametric}. 

\begin{lemma}\label{lemma:sp1}The estimator $T_n$ defined in Equation \ref{eqn:theestimate} is asymptotically linear with influence function \[\ell_{\bs}(\mathbf{X}_i) = v^TK^{-1}\mathbf{X}_i^T\Tilde{V}_n^{-1}(\mu_i - \mathbf{Y}_{i})\]
\begin{proof}
Using Lemma 1 from \cite{neykov2018unified}, section E, 
\begin{align*}
    &\hat{v}^T \left[\frac{1}{nK}\sum_{i=1}^n\mathbf{X}_i^T\hat{V}_n^{-1}(\mathbf{X}_i\bh_{T_n} - \mathbf{Y}_{i})\right] =\\  &\hat{v}^T \left[\frac{1}{nK}\sum_{i=1}^n\mathbf{X}_i^T\hat{V}_n^{-1}(\mathbf{X}_i\bh_{\bs_j} - \mathbf{Y}_{i})\right] + \frac{\partial}{\partial T}\left[\hat{v}^T \frac{1}{nK}\sum_{i=1}^n\mathbf{X}_i^T\hat{V}_n^{-1}(\mathbf{X}_i\bh_{\Tilde{T}} - \mathbf{Y}_{i})\right] (T - {\bs_j})\\
    &= v^T\frac{1}{nK}\sum_{i=1}^n \mathbf{X}_i^T\hat{V}_n^{-1}\boldsymbol{\varepsilon}_i +  \frac{\partial}{\partial T} \left[\hat{v}^T \frac{1}{nK}\sum_{i=1}^n\mathbf{X}_i^T\hat{V}_n^{-1}(\mathbf{X}_i\bh_{\Tilde{T}} - \mathbf{Y}_{i})\right] ({T} - {\bs_j}) + o_p(n^{-1/2})\\
    0 &= v^T\frac{1}{nK}\sum_{i=1}^n \mathbf{X}_i^T\hat{V}_n^{-1}\boldsymbol{\varepsilon}_i + \frac{\partial}{\partial T} \left[\hat{v}^T \frac{1}{nK}\sum_{i=1}^n\mathbf{X}_i^T\hat{V}_n^{-1}(\mathbf{X}_i\bh_{\Tilde{T}} - \mathbf{Y}_{i})\right] ({T_n} - {\bs_j}) + o_p(n^{-1/2})\\
\end{align*}
where $\Tilde{T}$ lies on a line connecting $T_n$ and $\bs_j$. We now examine the second term above:

\begin{align*}
    \frac{\partial}{\partial T} \left[\hat{v}^T \frac{1}{nK}\sum_{i=1}^n\mathbf{X}_i^T\hat{V}_n^{-1}(\mathbf{X}_i\bh_{\Tilde{T}} - \mathbf{Y}_{i})\right] ({T_n} - {\bs_j}) &= ({T_n} - {\bs_j}) +\\ &\ \left[\hat{v}^T\frac{1}{nK}\left[\sum_{i = 1}^n \mathbf{X}_i^T\hat{V}_n^{-1}\mathbf{X}_i\right]_{\cdot j} - 1)\right]({T_n} - {\bs_j})\\
    &\leq (T_n - \bs_j) +\\ &\ \norm{\hat{v}^T\frac{1}{nK}\left[\sum_{i = 1}^n \mathbf{X}_i^T\hat{V}_n^{-1}\mathbf{X}_i\right] - e_j}_{\infty}|T_n - {\bs_j}|\\
    &\leq (T_n - \bs_j) + O_p(\sqrt{\log(p)/n})|T_n - \bs_j|\\
    &= (T_n - \bs_j) + O_p(\sqrt{\log(p)/n})O_p(n^{-1/2}) \\
    &= (T_n - \bs_j) + o_p(1)O_p(n^{-1/2})=  (T_n - \bs_j) + o_p(1) 
\end{align*}
Where the rate follows from the scaling assumption (Remark \ref{remark:assumption1ce}) and the root-n consistency of our estimate (Theorem \ref{theorem:cltforunknownv}). Returning to the prior expression (ignoring sign flip):

\begin{align*}
    (T_n - \bs_j) &= v^T\frac{1}{nK}\sum_{i=1}^n \mathbf{X}_i^T\hat{V}_n^{-1}\boldsymbol{\varepsilon}_i + o_p(n^{-1/2})\\
    &= v^T\frac{1}{nK}\sum_{i=1}^n \mathbf{X}_i^T\Tilde{V}^{-1}\boldsymbol{\varepsilon}_i + v^T\frac{1}{nK}\sum_{i=1}^n \mathbf{X}_i^T(\hat{V}_n^{-1}- \Tilde{V}^{-1})\boldsymbol{\varepsilon}_i + o_p(n^{-1/2})\\
    &\leq v^T\frac{1}{nK}\sum_{i=1}^n \mathbf{X}_i^T\Tilde{V}^{-1}\boldsymbol{\varepsilon}_i + \norm{\hat{V}_n^{-1}- \Tilde{V}^{-1}}_2\norm{v}_1\norm{\frac{1}{nK}\sum_{i=1}^n \mathbf{X}_i^T\boldsymbol{\varepsilon}_i}_{\infty} + o_p(n^{-1/2})\\
    &= O_p\left(\sqrt{\frac{s\log(p)}{n}}\right) \cdot O(1) \cdot O_p\left(\sqrt{\frac{\log(p)}{n}}\right) \\
    &= o_p(n^{-1/2})
\end{align*}
where the last line follows from our scaling assumptions.
Thus, our estimator is asymptotically linear with influence function

\[\ell_{\bs}(\mathbf{X}_i) = v^TK^{-1}\mathbf{X}_i^T\Tilde{V}^{-1}(\mu_i - \mathbf{Y}_{i})\]
\end{proof}
\end{lemma}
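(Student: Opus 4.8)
The plan is to exploit the fact that the estimating function defining $\hat T_n$ in \eqref{eqn:theestimate} is affine in $T$, so that a single-step expansion is exact and no mean-value remainder is needed. Write $f(T)=\hat v^{T}\frac{1}{nK}\sum_{i=1}^n\mathbf{X}_i^{T}\hat V_n^{-1}\bigl(\mathbf{Y}_i-\mathbf{X}_i\hat{\boldsymbol{\beta}}_T(j)\bigr)$ for the left-hand side of \eqref{eqn:theestimate}, where $\hat{\boldsymbol{\beta}}_T(j)$ is $\hat{\boldsymbol{\beta}}$ with its $j$th entry set to $T$. Since $f(\hat T_n)=0$ and $f$ is linear, expanding about the true value $\boldsymbol{\beta}_j$ gives exactly $\hat T_n-\boldsymbol{\beta}_j=-f'(\boldsymbol{\beta}_j)^{-1}f(\boldsymbol{\beta}_j)$, where the slope $f'(\boldsymbol{\beta}_j)=-\hat v^{T}\bigl[\frac{1}{nK}\sum_i\mathbf{X}_i^{T}\hat V_n^{-1}\mathbf{X}_i\bigr]_{\cdot j}$ is constant in $T$. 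By the defining constraint of $\hat v$ in \eqref{eqn:CLIME}, this slope equals $-(1+O_p(\lambda'))$, so $f'(\boldsymbol{\beta}_j)^{-1}=-(1+o_p(1))$ and it remains only to analyze the numerator $f(\boldsymbol{\beta}_j)$.

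For the numerator I would substitute $\mathbf{Y}_i-\mathbf{X}_i\hat{\boldsymbol{\beta}}_{\boldsymbol{\beta}_j}(j)=\boldsymbol{\varepsilon}_i-\mathbf{X}_i(\hat{\boldsymbol{\beta}}_{\boldsymbol{\beta}_j}(j)-\boldsymbol{\beta})$, splitting $f(\boldsymbol{\beta}_j)$ into a stochastic term $\hat v^{T}\frac{1}{nK}\sum_i\mathbf{X}_i^{T}\hat V_n^{-1}\boldsymbol{\varepsilon}_i$ and a regularization-bias term. The crucial observation is that $\hat{\boldsymbol{\beta}}_{\boldsymbol{\beta}_j}(j)-\boldsymbol{\beta}$ has a zero $j$th coordinate, so that $e_j^{T}(\hat{\boldsymbol{\beta}}_{\boldsymbol{\beta}_j}(j)-\boldsymbol{\beta})=0$ and the bias term can be rewritten as $\bigl[\hat v^{T}\frac{1}{nK}\sum_i\mathbf{X}_i^{T}\hat V_n^{-1}\mathbf{X}_i-e_j^{T}\bigr](\hat{\boldsymbol{\beta}}_{\boldsymbol{\beta}_j}(j)-\boldsymbol{\beta})$. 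By Hölder this is bounded by $\|\hat v^{T}\frac{1}{nK}\sum_i\mathbf{X}_i^{T}\hat V_n^{-1}\mathbf{X}_i-e_j\|_\infty\,\|\hat{\boldsymbol{\beta}}_{\boldsymbol{\beta}_j}(j)-\boldsymbol{\beta}\|_1=O_p(\lambda')\cdot O_p(s\sqrt{\log p/n})$, using CLIME feasibility and the $L_1$ rate of Theorem \ref{theorem:l1rates}; the scaling condition (Lemma \ref{lemma:scaling}) forces this product to be $o_p(n^{-1/2})$. This orthogonalization—projecting the initial-estimator error onto the nearly-orthogonal direction $\hat v$—is the heart of the argument and the main obstacle, since it is precisely where the regularization bias of $\hat{\boldsymbol{\beta}}$ is annihilated; everything else is a nuisance-replacement bookkeeping exercise.

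It then remains to replace the estimated nuisances by their limits in the surviving stochastic term. First I would swap $\hat v$ for $v$, controlling $(\hat v-v)^{T}\frac{1}{nK}\sum_i\mathbf{X}_i^{T}\hat V_n^{-1}\boldsymbol{\varepsilon}_i$ by $\|\hat v-v\|_1\,\|\frac{1}{nK}\sum_i\mathbf{X}_i^{T}\hat V_n^{-1}\boldsymbol{\varepsilon}_i\|_\infty=O_p(\|v\|_1 s_v\sqrt{\log p/n})\cdot O_p(\sqrt{\log p/n})$, which is $o_p(n^{-1/2})$ by Corollary \ref{corollary:vl1rate} and scaling; then swap $\hat V_n^{-1}$ for $\Tilde V^{-1}$, bounding $v^{T}\frac{1}{nK}\sum_i\mathbf{X}_i^{T}(\hat V_n^{-1}-\Tilde V^{-1})\boldsymbol{\varepsilon}_i$ by $\|\hat V_n^{-1}-\Tilde V^{-1}\|_2\,\|v\|_1\,\|\frac{1}{nK}\sum_i\mathbf{X}_i^{T}\boldsymbol{\varepsilon}_i\|_\infty=O_p(s\sqrt{\log p/n})\cdot O(1)\cdot O_p(\sqrt{\log p/n})=O_p(s\log p/n)$, which is $o_p(n^{-1/2})$ by Lemma \ref{lemma:vhatconvergencetovtilde}, Assumption \ref{ass:vstarbound}, and Assumption \ref{ass:sparsity}. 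What survives is $v^{T}\frac{1}{nK}\sum_i\mathbf{X}_i^{T}\Tilde V^{-1}\boldsymbol{\varepsilon}_i=\frac{1}{n}\sum_i v^{T}K^{-1}\mathbf{X}_i^{T}\Tilde V^{-1}\boldsymbol{\varepsilon}_i$. Combining with $f'(\boldsymbol{\beta}_j)^{-1}=-(1+o_p(1))$, and noting that the $(1+o_p(1))$ factor is absorbed because it multiplies an $O_p(n^{-1/2})$ term, yields $\hat T_n-\boldsymbol{\beta}_j=\frac{1}{n}\sum_i v^{T}K^{-1}\mathbf{X}_i^{T}\Tilde V^{-1}\boldsymbol{\varepsilon}_i+o_p(n^{-1/2})$. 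Since in the linear model $\boldsymbol{\varepsilon}_i=\mathbf{Y}_i-\mu_i$, this is exactly asymptotic linearity with the claimed influence function $\ell_{\boldsymbol{\beta}}(\mathbf{X}_i)=v^{T}K^{-1}\mathbf{X}_i^{T}\Tilde V^{-1}(\mu_i-\mathbf{Y}_i)$ up to the overall sign convention adopted for \eqref{eqn:theestimate}.
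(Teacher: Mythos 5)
Your proposal is correct and follows essentially the same route as the paper's proof: an exact linear expansion of the estimating function in $T$, a slope equal to $-(1+O_p(\lambda'))$ by the CLIME feasibility of $\hat v$, annihilation of the regularization bias by pairing the near-orthogonality $\|\hat v^{T}\frac{1}{nK}\sum_i\mathbf{X}_i^{T}\hat V_n^{-1}\mathbf{X}_i-e_j\|_\infty=O_p(\lambda')$ with the $\ell_1$ rate of $\hat{\boldsymbol{\beta}}$, and finally the nuisance swaps $\hat v\to v$ and $\hat V_n^{-1}\to\Tilde V^{-1}$. The only difference is that you write out explicitly (via the zero $j$th coordinate of $\hat{\boldsymbol{\beta}}_{\boldsymbol{\beta}_j}(j)-\boldsymbol{\beta}$ and H\"older) the bias-cancellation step that the paper outsources to Lemma 1 of \cite{neykov2018unified}, which makes your version more self-contained but not a different argument.
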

\begin{lemma}[Bounding the variance]
\label{lemma:sp2}
Assume that $\norm{v}_1 < \infty$. It follows that $\mathbb{E}[\ell^2_{\bs}(\mathbf{X}_i)] = Var(\ell_{\bs}) < \infty$.
\end{lemma}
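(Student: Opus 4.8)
The plan is to reduce the second moment of the influence function to a quadratic form in $v$ against a fixed $p\times p$ matrix of bounded operator norm, and then control that quadratic form using $\norm{v}_1<\infty$ together with the eigenvalue bounds of Assumption \ref{ass:eigen}. First I would use that under the linear model $\mu_i=\mathbf{X}_i\bs$ and $\mathbf{Y}_i=\mathbf{X}_i\bs+\boldsymbol{\varepsilon}_i$, so the influence function collapses to $\ell_{\bs}(\mathbf{X}_i)=-v^TK^{-1}\mathbf{X}_i^T\Tilde{V}^{-1}\boldsymbol{\varepsilon}_i$. Invoking $\boldsymbol{\varepsilon}_i\perp\mathbf{X}_i$ and $\mathbb{E}[\boldsymbol{\varepsilon}_i]=0$, conditioning on $\mathbf{X}_i$ gives $\mathbb{E}[\ell_{\bs}\mid\mathbf{X}_i]=0$, hence $\mathbb{E}[\ell_{\bs}]=0$ and $\mathrm{Var}(\ell_{\bs})=\mathbb{E}[\ell^2_{\bs}]$; it therefore suffices to bound the second moment.

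Next, conditioning on $\mathbf{X}_i$ and using $\mathrm{Var}(\boldsymbol{\varepsilon}_i)=V$, I would compute
\[\mathbb{E}[\ell^2_{\bs}\mid\mathbf{X}_i]=K^{-2}\,v^T\mathbf{X}_i^T\Tilde{V}^{-1}V\Tilde{V}^{-1}\mathbf{X}_i\,v,\]
so that taking the outer expectation yields $\mathbb{E}[\ell^2_{\bs}]=v^T\Gamma v$ with $\Gamma:=\mathbb{E}[K^{-2}\mathbf{X}_i^T\Tilde{V}^{-1}V\Tilde{V}^{-1}\mathbf{X}_i]$, which is precisely the limiting variance $\Delta$ of Theorem \ref{theorem:intermediatethm} up to the fixed factor $K$. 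The key structural fact is that $\Gamma$ has bounded operator norm: for any unit vector $u$,
\[u^T\Gamma u=K^{-2}\,\mathbb{E}\big[(\mathbf{X}_i u)^T\Tilde{V}^{-1}V\Tilde{V}^{-1}(\mathbf{X}_i u)\big]\leq K^{-2}\,\lambda_{\max}(\Tilde{V}^{-1}V\Tilde{V}^{-1})\,u^T\Sigma_X u,\]
where I have used $\mathbb{E}[\mathbf{X}_i^T\mathbf{X}_i]=\Sigma_X$. By Assumption \ref{ass:eigen} both $\lambda_{\max}(\Tilde{V}^{-1}V\Tilde{V}^{-1})\leq M^3$ and $\lambda_{\max}(\Sigma_X)\leq M$, so $\lambda_{\max}(\Gamma)\leq K^{-2}M^4=O(1)$, the crucial point being that $K$ is fixed and does not grow with $n$.

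Finally I would combine these bounds: $\mathbb{E}[\ell^2_{\bs}]=v^T\Gamma v\leq\lambda_{\max}(\Gamma)\norm{v}_2^2\leq\lambda_{\max}(\Gamma)\norm{v}_1^2<\infty$, using $\norm{v}_2\leq\norm{v}_1$ and the hypothesis $\norm{v}_1<\infty$ (which holds under Assumption \ref{ass:vstarbound}). I do not anticipate a genuine obstacle here; the only points requiring care are the clean separation of the $\boldsymbol{\varepsilon}_i$ and $\mathbf{X}_i$ expectations through the independence assumption, and verifying that the product of the bounded $K\times K$ eigenvalue factor with $\lambda_{\max}(\Sigma_X)$ remains a universal constant precisely because $K$ is held fixed in the theory.
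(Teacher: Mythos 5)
Your proposal is correct and follows essentially the same route as the paper: both reduce $\mathrm{Var}(\ell_{\bs})$ to the quadratic form $v^{T}\,\mathbb{E}[K^{-2}\mathbf{X}_i^{T}\Tilde{V}^{-1}V\Tilde{V}^{-1}\mathbf{X}_i]\,v$ and bound it via Assumption \ref{ass:eigen} together with $\norm{v}_1<\infty$. The only cosmetic difference is that the paper closes with H\"older's inequality against the max-norm of the expected matrix, whereas you use the Rayleigh-quotient bound $\lambda_{\max}(\Gamma)\norm{v}_2^2\leq\lambda_{\max}(\Gamma)\norm{v}_1^2$; your explicit verification that $\ell_{\bs}$ has mean zero is a small bonus the paper leaves implicit.
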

\begin{proof}
    \begin{align*}
    Var(\ell_{\bs}) &= \mathbb{E}\left[v^TK^{-1}\mathbf{X}_i^T\Tilde{V}^{-1}V\Tilde{V}^{-1}\mathbf{X}_iv\right]\\
    &= v^TK^{-2}\mathbb{E}\left[\mathbf{X}_i^T\Tilde{V}^{-1}V\Tilde{V}^{-1}\mathbf{X}_i\right]v\\
    &\leq \norm{v}_1^2K^{-2}\norm{\mathbb{E}\left[\mathbf{X}_i^T\Tilde{V}^{-1}V\Tilde{V}^{-1}\mathbf{X}_i\right]}_{\max}\\
    &\leq \norm{v}_1^2 \norm{\Tilde{V}^{-1}V\Tilde{V}^{-1}}_2\norm{\Sigma_X}_2\\
    &\leq O(1) \cdot O(1) \cdot O(1) < \infty
\end{align*}

\end{proof}

\begin{lemma}\label{lemma:sp3}
The score function associated with the data-generating process is differentiable and bounded.
\begin{proof}
Assuming Gaussianity, the score for a single observation is:
\begin{align*}
    s_{\boldsymbol{\beta}} &= \mathbf{X}_i^TV^{-1}\boldsymbol{\varepsilon}_i\\
    \ddot{s_{\boldsymbol{\beta}}} &= 0
\end{align*}
So the score is twice differentiable and the second derivative uniformly bounded by any positive constant. 
\end{proof}
\end{lemma}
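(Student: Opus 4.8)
The plan is to obtain the score in closed form from the Gaussian cluster likelihood and exploit the fact that, because the mean is linear in $\boldsymbol{\beta}$ while $V$ is a fixed covariance, the per-cluster log-likelihood is a quadratic in $\boldsymbol{\beta}$; its third derivative therefore vanishes identically, which is precisely the boundedness demanded by Condition \ref{cond:sp2}. First I would record the contribution of cluster $i$ to the log-likelihood under $\mathbf{Y}_i \mid \mathbf{X}_i \sim \mathcal{N}(\mathbf{X}_i\boldsymbol{\beta}, V)$, namely
\[
\ell_{\boldsymbol{\beta}}(\mathbf{Y}_i) = -\tfrac{1}{2}(\mathbf{Y}_i - \mathbf{X}_i\boldsymbol{\beta})^{T} V^{-1}(\mathbf{Y}_i - \mathbf{X}_i\boldsymbol{\beta}) + c,
\]
where $c$ absorbs the $\log\det V$ and normalizing terms that do not involve $\boldsymbol{\beta}$.

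Next I would differentiate once in $\boldsymbol{\beta}$ to read off the score,
\[
s_{\boldsymbol{\beta}} = \mathbf{X}_i^{T} V^{-1}(\mathbf{Y}_i - \mathbf{X}_i\boldsymbol{\beta}) = \mathbf{X}_i^{T} V^{-1}\boldsymbol{\varepsilon}_i,
\]
using $\boldsymbol{\varepsilon}_i = \mathbf{Y}_i - \mathbf{X}_i\boldsymbol{\beta}$. This expression is affine in $\boldsymbol{\beta}$, so differentiability to all orders is immediate. Differentiating a second time yields the Jacobian $\dot{s}_{\boldsymbol{\beta}} = -\mathbf{X}_i^{T} V^{-1}\mathbf{X}_i$, which is free of $\boldsymbol{\beta}$; hence the next derivative satisfies $\ddot{s}_{\boldsymbol{\beta}} = 0$ and $\norm{\ddot{s}_{\boldsymbol{\beta}}}_{\max} = 0 < L$ for any universal constant $L > 0$. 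This verifies both the twice-differentiability and the bound $\norm{\ddot{s}_{\boldsymbol{\beta}}}_{\max} < L$ required in Condition \ref{cond:sp2}.

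Finally, I would observe that the bound holds uniformly over $\boldsymbol{\beta} \in \mathcal{B}(s_n)$ and over every realization of $(\mathbf{X}_i, \mathbf{Y}_i)$ at no extra cost, since $\ddot{s}_{\boldsymbol{\beta}}$ is the zero tensor for every $\boldsymbol{\beta}$, leaving nothing to control. There is essentially no genuine obstacle here; the only point deserving a word of care is the modeling convention that $V$ is held fixed and known in this likelihood, so that the parameter enters only through the mean $\mathbf{X}_i\boldsymbol{\beta}$. Once that convention is stated, the linearity of the Gaussian mean collapses the second-order smoothness condition to the trivial fact that a quadratic form has vanishing third derivative, and the lemma follows.
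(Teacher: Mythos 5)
Your proposal is correct and follows essentially the same route as the paper: under the Gaussian model with $V$ fixed, the log-likelihood is quadratic in $\boldsymbol{\beta}$, the score $s_{\boldsymbol{\beta}}=\mathbf{X}_i^{T}V^{-1}\boldsymbol{\varepsilon}_i$ is affine in $\boldsymbol{\beta}$, and hence $\ddot{s}_{\boldsymbol{\beta}}=0$ satisfies the bound in Condition \ref{cond:sp2} trivially. You merely spell out the intermediate steps (the explicit log-likelihood and the Jacobian $\dot{s}_{\boldsymbol{\beta}}=-\mathbf{X}_i^{T}V^{-1}\mathbf{X}_i$) that the paper leaves implicit.
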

\begin{lemma}\label{lemma:sp4}
The information matrix for Gaussian data is:

\[I_{\bs} = \mathbb{E}[s_{\bs}s_{\bs}^T] = \mathbb{E}[\mathbf{X}_i^TV^{-1}\mathbf{X}_i]\]

The eigenvalues of this matrix are bounded from assumptions in Lemma \ref{lemma:cltconstantv}. Also,
\begin{align*}
    \norm{\frac{1}{n}\sum_{i=1}^n \mathbf{X}_i^TV^{-1}\mathbf{X}_i - \mathbb{E}[\mathbf{X}_i^TV^{-1}\mathbf{X}_i]}_{\max} &= o_p(1) 
\end{align*}
by small modifications to the arguments of Lemma \ref{lemma:lemma5}.
\end{lemma}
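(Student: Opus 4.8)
The plan is to verify the three assertions of the lemma in turn, all under the Gaussian working model with correctly specified covariance ($\Tilde{V}=V$). First I would establish the closed form $I_{\bs} = \mathbb{E}[\mathbf{X}_i^{T}V^{-1}\mathbf{X}_i]$ by inserting the score computed in Lemma \ref{lemma:sp3}, namely $s_{\bs}=\mathbf{X}_i^{T}V^{-1}\boldsymbol{\varepsilon}_i$, into the definition $I_{\bs}=\mathbb{E}[s_{\bs}s_{\bs}^{T}]$. Conditioning on $\mathbf{X}_i$ and using the independence $\boldsymbol{\varepsilon}_i\perp\mathbf{X}_i$ together with $\mathbb{E}[\boldsymbol{\varepsilon}_i\boldsymbol{\varepsilon}_i^{T}]=V$ collapses the inner expectation, since $\mathbb{E}[\mathbf{X}_i^{T}V^{-1}\boldsymbol{\varepsilon}_i\boldsymbol{\varepsilon}_i^{T}V^{-1}\mathbf{X}_i\mid\mathbf{X}_i]=\mathbf{X}_i^{T}V^{-1}VV^{-1}\mathbf{X}_i=\mathbf{X}_i^{T}V^{-1}\mathbf{X}_i$; taking the outer expectation gives the claim.

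Next I would bound the eigenvalues of $I_{\bs}$ by a sandwiching argument. For any unit vector $u$, writing $w=\mathbf{X}_iu$ gives $u^{T}I_{\bs}u=\mathbb{E}[w^{T}V^{-1}w]$, which lies between $\lambda_{\min}(V^{-1})\,\mathbb{E}[\norm{\mathbf{X}_iu}_2^{2}]$ and $\lambda_{\max}(V^{-1})\,\mathbb{E}[\norm{\mathbf{X}_iu}_2^{2}]$. Since $\mathbb{E}[\norm{\mathbf{X}_iu}_2^{2}]=u^{T}\Sigma_Xu$, the eigenvalues of $I_{\bs}$ are pinched between $\lambda_{\min}(V^{-1})\lambda_{\min}(\Sigma_X)$ and $\lambda_{\max}(V^{-1})\lambda_{\max}(\Sigma_X)$. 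Assumption \ref{ass:eigen} makes all four factors finite and bounded away from zero (under correct specification $\Tilde{V}=V$, so the constant $\xi$ assumed in Lemma \ref{lemma:cltconstantv} applies up to the fixed factor $K$), hence $I_{\bs}$ has eigenvalues uniformly bounded above and below.

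For the max-norm concentration I would reuse the argument of Lemma \ref{lemma:lemma5} almost verbatim, with $\Tilde{V}$ replaced by $V$ and the harmless factor $K^{-1}$ dropped. The rows of $V^{-1/2}\mathbf{X}_i$ remain subgaussian by Lemmas \ref{lemma:subgaussian} and \ref{lemma:scalarsubgaussian}, so each entry of $\mathbf{X}_i^{T}V^{-1}\mathbf{X}_i$ is a product of two subgaussians and therefore subexponential. Centering and applying a Bernstein-type inequality to the cluster-level sums, followed by a union bound over the $p^{2}$ entries, yields the rate $O_p(\sqrt{\log p/n})=o_p(1)$.

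The only genuine subtlety, and hence the step I would treat most carefully, is the within-cluster dependence entering the concentration bound: the $K$ terms making up each cluster contribution are correlated, so a naive Bernstein bound over all $nK$ summands does not apply. This is resolved exactly as in Lemma \ref{lemma:lemma5} by aggregating the $K$ dependent terms per cluster into one subexponential summand (with subexponential norm inflated by at most a factor of order $K$), leaving $n$ independent summands to which the tail bound applies; the remaining work is bookkeeping of subexponential norms and is routine.
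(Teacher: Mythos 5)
Your proposal is correct and follows essentially the same route as the paper, which itself only sketches this lemma by citing the Gaussian score from Lemma \ref{lemma:sp3}, the eigenvalue assumptions, and the concentration argument of Lemma \ref{lemma:lemma5}. Your only departure is cosmetic: you derive the eigenvalue bounds directly from Assumption \ref{ass:eigen} via a sandwich $\lambda_{\min}(V^{-1})\lambda_{\min}(\Sigma_X) \le \lambda(I_{\bs}) \le \lambda_{\max}(V^{-1})\lambda_{\max}(\Sigma_X)$ rather than simply invoking the assumptions stated in Lemma \ref{lemma:cltconstantv}, which is if anything more self-contained, and you correctly identify the within-cluster dependence as the one point requiring the cluster-level grouping trick already used in Lemma \ref{lemma:lemma5}.
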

\begin{lemma}\label{lemma:sp5}
$\forall \boldsymbol{\varepsilon} >0$
\[\lim_{n \to \infty} \mathbb{E}_{\bs}[f^2_{\bs}\mathbb{I}_{|f_{\bs}|>\boldsymbol{\varepsilon}\sqrt{n}}] = 0\]
where $f_{\bs}(x):= \ell_{\bs}(x) + h^Ts_{\bs}(x)$ for $x \in \mathbb{R}^p$ and $h = \sqrt{n}(\Tilde{\boldsymbol{\beta}} - \bs)$ for $\Tilde{\boldsymbol{\beta}}\in B(\bs, \frac{c}{\sqrt{n}})$. Here $B(\bs, \frac{c}{\sqrt{n}})$ is an $L_2$ ball of length $\frac{c}{\sqrt{n}}$ around $\bs$.
\begin{proof}
\begin{align*}
 \mathbb{E}_{\bs}[f^2_{\bs}\mathbb{I}_{|f_{\bs}|>\boldsymbol{\varepsilon}\sqrt{n}}] &= \mathbb{E}_{\bs}[f^2_{\bs}]\mathbb{P}({|f_{\bs}|>\boldsymbol{\varepsilon}\sqrt{n}})\\
\mathbb{P}({|f_{\bs}|>\boldsymbol{\varepsilon}\sqrt{n}})&= \mathbb{P}({|(v^T+\sqrt{n}(\Tilde{\boldsymbol{\beta}} - \bs))(\mathbf{X}_i^TV^{-1}\boldsymbol{\varepsilon}_i)|>\boldsymbol{\varepsilon}\sqrt{n}})
\end{align*}
Since $v^T+\sqrt{n}(\Tilde{\boldsymbol{\beta}} - \bs) = O_p(1)$ and $\mathbf{X}_i^TV^{-1}\boldsymbol{\varepsilon}_i$ doesn't change with $n$, we conclude:
\[\lim_{n \to \infty} \mathbb{P}({|f_{\bs}|>\boldsymbol{\varepsilon}\sqrt{n}}) = 0\]
\begin{align*}
    \mathbb{E}_{\bs}[f^2_{\bs}(x)] &= \mathbb{E}_{\bs}(v^T+\sqrt{n}(\Tilde{\boldsymbol{\beta}} - \bs))(\mathbf{X}_i^TV^{-1}\boldsymbol{\varepsilon}_i)(\mathbf{X}_i^TV^{-1}\boldsymbol{\varepsilon}_i)^T(v^T+\sqrt{n}(\Tilde{\boldsymbol{\beta}} - \bs))^T]\\
    &= (v^T+\sqrt{n}(\Tilde{\boldsymbol{\beta}} - \bs))\mathbb{E}_{\bs}[\mathbf{X}_i^TV^{-1}\mathbf{X}_i](v^T+\sqrt{n}(\Tilde{\boldsymbol{\beta}} - \bs))^T
\end{align*}
Since the eigenvalues of $\mathbb{E}_{\bs}[\mathbf{X}_i^TV^{-1}\mathbf{X}_i]$ are bounded and $(v^T+\sqrt{n}(\Tilde{\boldsymbol{\beta}} - \bs)) = O_p(1)$, we can conclude that $\forall \boldsymbol{\varepsilon} >0$
\[\lim_{n \to \infty} \mathbb{E}_{\bs}[f^2_{\bs}\mathbb{I}_{|f_{\bs}|>\boldsymbol{\varepsilon}\sqrt{n}}] = 0\]

\end{proof}
\end{lemma}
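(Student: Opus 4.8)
The plan is to recognize that $f_{\bs}$ is, under correct specification of the working covariance ($\Tilde{V}=V$), a single linear form in the cluster error $\boldsymbol{\varepsilon}_i$. Substituting the influence function from Lemma \ref{lemma:sp1} and the score from Lemma \ref{lemma:sp3}, and using $\mu_i-\mathbf{Y}_{i}=-\boldsymbol{\varepsilon}_i$, I would write $f_{\bs} = (h - K^{-1}v)^T\mathbf{X}_i^T V^{-1}\boldsymbol{\varepsilon}_i =: g_n^T\mathbf{X}_i^T V^{-1}\boldsymbol{\varepsilon}_i$, where $g_n := h - K^{-1}v$. The essential point is that the law of $f_{\bs}$ varies with $n$ only through the deterministic coefficient vector $g_n$, so the Lindeberg term will vanish as soon as the distribution of $f_{\bs}$ is uniformly tight and the truncation level $\varepsilon\sqrt{n}$ diverges. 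I would deliberately avoid the factorization $\mathbb{E}_{\bs}[f_{\bs}^2\mathbb{I}_{|f_{\bs}|>\varepsilon\sqrt{n}}]=\mathbb{E}_{\bs}[f_{\bs}^2]\,\mathbb{P}(|f_{\bs}|>\varepsilon\sqrt{n})$, since it would require $f_{\bs}^2$ and the indicator to be independent; instead I would control the tail through a higher moment.

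The core step is the elementary bound
\[
\mathbb{E}_{\bs}\!\left[f_{\bs}^2\,\mathbb{I}_{|f_{\bs}|>\varepsilon\sqrt{n}}\right]\;\le\;\frac{\mathbb{E}_{\bs}[f_{\bs}^4]}{\varepsilon^2\,n},
\]
valid because on the truncation event $f_{\bs}^2\le f_{\bs}^4/(\varepsilon\sqrt{n})^2$. It then suffices to show $\mathbb{E}_{\bs}[f_{\bs}^4]=O(1)$ uniformly in $n$, whence the right-hand side is $O(1/n)\to 0$. To bound the fourth moment I would condition on $\mathbf{X}_i$: given $\mathbf{X}_i$, $f_{\bs}=g_n^T\mathbf{X}_i^T V^{-1}\boldsymbol{\varepsilon}_i$ is a linear form in the subgaussian vector $\boldsymbol{\varepsilon}_i$ (Assumption \ref{ass:subgauss}) with coefficient vector $V^{-1}\mathbf{X}_ig_n\in\mathbb{R}^K$, so $f_{\bs}\mid \mathbf{X}_i$ is subgaussian and $\mathbb{E}_{\bs}[f_{\bs}^4\mid\mathbf{X}_i]\le C\big(g_n^T\mathbf{X}_i^T V^{-1}\mathbf{X}_ig_n\big)^2=C\,\|V^{-1/2}\mathbf{X}_ig_n\|_2^4$. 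Taking expectation over $\mathbf{X}_i$ and using $\lambda_{\max}(V^{-1})=O(1)$ from Assumption \ref{ass:eigen}, the problem reduces to bounding $\mathbb{E}\|\mathbf{X}_ig_n\|_2^4$, a fourth moment of a $K$-dimensional subgaussian vector (each entry $\mathbf{X}_{ij}^Tg_n$ has subgaussian norm $\lesssim K_x\|g_n\|_2$); since $K$ is fixed this is $O(\|g_n\|_2^4)$.

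Thus everything comes down to showing $\|g_n\|_2=O(1)$ uniformly in $n$, which is where I expect the only real subtlety to lie. By the triangle inequality $\|g_n\|_2\le\|h\|_2+K^{-1}\|v\|_2\le\|h\|_2+K^{-1}\|v\|_1$, and $\|v\|_1=O(1)$ by Assumption \ref{ass:vstarbound}. The constraint $\Tilde{\boldsymbol{\beta}}\in B(\bs,c/\sqrt{n})$ gives $\|\Tilde{\boldsymbol{\beta}}-\bs\|_2\le c/\sqrt{n}$, so $\|h\|_2=\sqrt{n}\,\|\Tilde{\boldsymbol{\beta}}-\bs\|_2\le c$; it is precisely the $\ell_2$ (not $\ell_1$) radius of the perturbation ball that keeps $\|h\|_2$ bounded despite the $\sqrt{n}$ rescaling and the growing dimension $p$. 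Combining, $\|g_n\|_2=O(1)$, hence $\mathbb{E}_{\bs}[f_{\bs}^4]=O(1)$ and the Lindeberg quantity is $O(1/n)$, establishing the claim. The main obstacle is organizing these coefficient bounds so that no quantity silently acquires a dependence on $p$; confining $h$ to the bounded $\ell_2$-ball and invoking Assumptions \ref{ass:eigen} and \ref{ass:vstarbound} is exactly what removes it.
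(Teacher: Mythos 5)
Your proof is correct, and it takes a genuinely different --- and sounder --- route than the paper's. The paper's own argument opens with the factorization $\mathbb{E}_{\bs}[f^2_{\bs}\mathbb{I}_{|f_{\bs}|>\varepsilon\sqrt{n}}] = \mathbb{E}_{\bs}[f^2_{\bs}]\,\mathbb{P}(|f_{\bs}|>\varepsilon\sqrt{n})$, which, as you anticipated, is not a valid identity: $f_{\bs}^2$ and the event $\{|f_{\bs}|>\varepsilon\sqrt{n}\}$ are not independent (the event is measurable with respect to $f_{\bs}$ itself), so the expectation of the product does not split. The paper then argues that the probability tends to zero and the second moment is bounded, which would suffice only under that invalid factorization. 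Your replacement of this step by the elementary truncation bound $\mathbb{E}_{\bs}[f_{\bs}^2\mathbb{I}_{|f_{\bs}|>\varepsilon\sqrt{n}}]\le \mathbb{E}_{\bs}[f_{\bs}^4]/(\varepsilon^2 n)$ is the standard and rigorous way to verify a Lindeberg condition, and it reduces everything to a uniform fourth-moment bound. (One could equally repair the paper's proof via Cauchy--Schwarz, $\mathbb{E}[f^2\mathbb{I}]\le(\mathbb{E}[f^4])^{1/2}\mathbb{P}(|f|>\varepsilon\sqrt{n})^{1/2}$, but that also requires the fourth moment, so your route is no more expensive.) The structural ingredients you then use are the same ones the paper leans on implicitly: $f_{\bs}=g_n^T\mathbf{X}_i^TV^{-1}\boldsymbol{\varepsilon}_i$ with $g_n=h-K^{-1}v$, the bound $\|h\|_2\le c$ coming from the $\ell_2$ (not $\ell_1$) radius of $B(\bs,c/\sqrt{n})$, $\|v\|_2\le\|v\|_1=O(1)$ from Assumption \ref{ass:vstarbound}, and the eigenvalue bounds of Assumption \ref{ass:eigen}; your explicit remark that these bounds must not silently acquire a dependence on $p$ is exactly the right point of care. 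One small caveat: your step $\|\mathbf{X}_{ij}^Tg_n\|_{\psi_2}\lesssim K_x\|g_n\|_2$ requires reading Assumption \ref{ass:subgauss} as vector-subgaussianity of the rows (uniform subgaussianity of all one-dimensional marginals) rather than merely coordinate-wise subgaussianity; under the weaker coordinate-wise reading you would only get a bound in $\|g_n\|_1$, which is still $O(1)$ here via $\|v\|_1=O(1)$ but would need $\|h\|_1$ controlled rather than $\|h\|_2$. Worth a sentence in the final write-up, but it does not affect the conclusion.
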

\begin{lemma}\label{lemma:sp6}
    $\forall h \in \mathbb{R}^p$:
\[\mathbb{E}_{\bs}[\ell_{\bs}h^Ts_{\bs}] - h^Te_j = o(1)\]
\begin{proof}
\begin{align*}
    \mathbb{E}_{\bs}[\ell_{\bs}h^Ts_{\bs}] &= \mathbb{E}_{\bs}[v^T\mathbf{X}_i^TV^{-1}\boldsymbol{\varepsilon}_ih^T\mathbf{X}_i^TV^{-1}\boldsymbol{\varepsilon}_i]\\
    &= v^T\mathbb{E}_{\bs}[\mathbf{X}_i^TV^{-1}\mathbb{E}[\boldsymbol{\varepsilon}_i \boldsymbol{\varepsilon}_i^{T}|\mathbb{X}]V^{-1}\mathbf{X}_ih]\\
    &= v^T\mathbb{E}_{\bs}[\mathbf{X}_i^TV^{-1}\mathbf{X}_i]h
\end{align*}
Since $v = \left[\mathbb{E}_{\bs}[\mathbf{X}_i^TV^{-1}\mathbf{X}_i]\right]^{-1}_{\cdot j}$:
\[v^T\mathbb{E}_{\bs}[\mathbf{X}_i^TV^{-1}\mathbf{X}_i]h = h^Te_j\] 
as was desired to show.
\end{proof}
\end{lemma}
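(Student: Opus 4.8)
The plan is to compute the cross-moment $\mathbb{E}_{\bs}[\ell_{\bs}\,h^T s_{\bs}]$ directly and show it equals $h^T e_j$ \emph{exactly}, so that the asserted difference is zero and the $o(1)$ claim holds trivially. First I would substitute the explicit forms already established: the influence function from Lemma \ref{lemma:sp1}, $\ell_{\bs}(\mathbf{X}_i) = v^T K^{-1}\mathbf{X}_i^T \Tilde{V}^{-1}(\mu_i - \mathbf{Y}_{i})$, which under the linear model with correct specification ($\Tilde{V}=V$) is a linear functional of $\boldsymbol{\varepsilon}_i = \mathbf{Y}_i - \mathbf{X}_i\bs$, together with the Gaussian score from Lemma \ref{lemma:sp3}, $s_{\bs} = \mathbf{X}_i^T V^{-1}\boldsymbol{\varepsilon}_i$. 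Since $h^T s_{\bs}$ and $\ell_{\bs}$ are both scalars, the product $\ell_{\bs}\,h^T s_{\bs}$ is a scalar quadratic form in $\boldsymbol{\varepsilon}_i$, namely (up to the normalization discussed below) $v^T\mathbf{X}_i^T V^{-1}\boldsymbol{\varepsilon}_i\,\boldsymbol{\varepsilon}_i^T V^{-1}\mathbf{X}_i h$.

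Next I would take expectations by conditioning on the design. The two structural facts that drive the calculation are the independence $\boldsymbol{\varepsilon}_i \perp \mathbf{X}_i$ (Assumption \ref{ass:subgauss} and the model set-up) and the conditional second-moment identity $\mathbb{E}[\boldsymbol{\varepsilon}_i\boldsymbol{\varepsilon}_i^T \mid \mathbf{X}_i] = V$. Applying the tower property collapses the inner $\boldsymbol{\varepsilon}_i\boldsymbol{\varepsilon}_i^T$ to $V$, and the two flanking copies of $V^{-1}$ cancel one copy of $V$, leaving
\begin{equation*}
\mathbb{E}_{\bs}[\ell_{\bs}\,h^T s_{\bs}] = v^T\,\mathbb{E}_{\bs}\!\left[\mathbf{X}_i^T V^{-1}\mathbf{X}_i\right] h .
\end{equation*}
I would then invoke the defining property of the projection vector, $v = \big[\mathbb{E}_{\bs}(\mathbf{X}_i^T V^{-1}\mathbf{X}_i)\big]^{-1}_{\cdot j}$, i.e. $v$ is the $j$th column of the inverse information matrix $I_{\bs}^{-1}$ (Lemma \ref{lemma:sp4}). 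Hence $v^T\,\mathbb{E}_{\bs}[\mathbf{X}_i^T V^{-1}\mathbf{X}_i] = (I_{\bs}^{-1} I_{\bs})_{j\cdot} = e_j^T$, so the expression reduces to $e_j^T h = h^T e_j$, establishing the identity with no remainder.

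Because the computation terminates in an exact equality, there is no analytic difficulty; the only genuine obstacle is bookkeeping. I would need to match the $K^{-1}$ normalizations appearing in $\ell_{\bs}$, in $s_{\bs}$, and in the definition $v = [\mathbb{E}(K^{-1}\mathbf{X}_i^T\Tilde{V}^{-1}\mathbf{X}_i)]^{-1}_{j\cdot}$ consistently, so that the cancellations yield precisely $e_j$ rather than a scalar multiple of it, and to confirm that it is exactly the correct-specification hypothesis $\Tilde{V}=V$ that makes the conditional-variance cancellation clean (under misspecification one would instead obtain $v^T\mathbb{E}[\mathbf{X}_i^T\Tilde V^{-1} V \Tilde V^{-1}\mathbf{X}_i]h$, breaking the identity). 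The $o(1)$ slack in the statement is therefore not needed for the idealized computation; it only provides a safeguard should the population information matrix be replaced by its sample analog, whose max-norm deviation is $O_p(\sqrt{\log p/n}) = o_p(1)$ by Lemma \ref{lemma:sp4}.
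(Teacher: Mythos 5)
Your proposal is correct and follows essentially the same route as the paper's proof: substitute the influence function and Gaussian score, condition on the design to collapse $\boldsymbol{\varepsilon}_i\boldsymbol{\varepsilon}_i^T$ to $V$, and invoke the definition of $v$ as a column of the inverse information matrix to obtain the exact identity $h^Te_j$. Your remark about tracking the $K^{-1}$ normalizations and the role of correct specification $\Tilde V = V$ is a useful point of care that the paper's displayed computation glosses over, but it does not change the argument.
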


\begin{lemma}\label{lemma:sp7}
$\boldsymbol{\beta}+ I^{-1}_{\bs}e_j/\sqrt{n} \in B(\bs, c/\sqrt{n})$. 
\begin{proof}
This is equivalent to showing:

\begin{align*}
    \norm{\mathbb{E}_{\bs}[\mathbf{X}_i^TV^{-1}\mathbf{X}_i]^{-1}_{\cdot j}}_2 \leq c
\end{align*}
which follows directly from the fact that $\lambda^{-1}_{\min}(\Sigma_X) = O(1)$ and $\lambda^{-1}_{\min}(V) = O(1)$.
\end{proof}
\end{lemma}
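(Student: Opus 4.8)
The plan is to reduce the ball-containment claim to a purely deterministic eigenvalue bound on the population information matrix $I_{\bs} = \mathbb{E}[\mathbf{X}_i^T V^{-1}\mathbf{X}_i]$ and then invoke Assumption \ref{ass:eigen}. First I would observe that the candidate point differs from the center $\bs$ by exactly $I^{-1}_{\bs}e_j/\sqrt{n}$, so that
\[
\norm{\bs + I^{-1}_{\bs}e_j/\sqrt{n} - \bs}_2 = \frac{1}{\sqrt{n}}\norm{I^{-1}_{\bs}e_j}_2 = \frac{1}{\sqrt{n}}\norm{[I^{-1}_{\bs}]_{\cdot j}}_2 .
\]
Hence membership in the $\ell_2$-ball $B(\bs, c/\sqrt{n})$ is equivalent to the coordinate-free bound $\norm{[I^{-1}_{\bs}]_{\cdot j}}_2 \leq c$, which is precisely the reduction recorded in the statement. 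The entire task is therefore to exhibit such a constant $c$.

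Next I would control the column norm by the spectral norm: since $\norm{e_j}_2 = 1$,
\[
\norm{[I^{-1}_{\bs}]_{\cdot j}}_2 = \norm{I^{-1}_{\bs}e_j}_2 \leq \norm{I^{-1}_{\bs}}_2 = \lambda_{\min}(I_{\bs})^{-1},
\]
so it suffices to bound $\lambda_{\min}(I_{\bs})$ away from zero. To do this I would expand the quadratic form: for any unit vector $u \in \mathbb{R}^p$, writing $\mathbf{X}_i u \in \mathbb{R}^K$,
\[
u^T I_{\bs} u = \mathbb{E}\!\left[(\mathbf{X}_i u)^T V^{-1}(\mathbf{X}_i u)\right] \geq \lambda_{\min}(V^{-1})\,\mathbb{E}\!\left[\norm{\mathbf{X}_i u}_2^2\right] = \frac{u^T \Sigma_X u}{\lambda_{\max}(V)} \geq \frac{\lambda_{\min}(\Sigma_X)}{\lambda_{\max}(V)} .
\]
Assumption \ref{ass:eigen} supplies the universal constant $M$ with $\lambda_{\min}(\Sigma_X) > M^{-1}$ and $\lambda_{\max}(V) < M$, giving $\lambda_{\min}(I_{\bs}) \geq M^{-2}$ and therefore $\norm{[I^{-1}_{\bs}]_{\cdot j}}_2 \leq M^2$. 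Taking $c = M^2$ closes the argument.

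Because every step is a deterministic inequality on fixed population quantities, there is no probabilistic difficulty here; the only point I would flag requiring care is \emph{uniformity}. The chosen $c = M^2$ must serve simultaneously for every target coordinate $j$ and, since Condition \ref{cond:sp6} asks the ball to be confined to the sparse set $\mathcal{B}(s_n)$, for every admissible $\bs$. Both are automatic: the eigenvalue bounds in Assumption \ref{ass:eigen} involve the single constant $M$ independent of $\bs$, and the chain above never introduces a $j$- or $\bs$-dependent factor (the bound on $\lambda_{\min}(I_{\bs})$ holds over the whole parameter set). If one further insists that the perturbed point itself remain in $\mathcal{B}(s_n)$, I would note that $\norm{I^{-1}_{\bs}e_j/\sqrt{n}}_2 \leq M^2/\sqrt{n} \to 0$, so the $\norm{\cdot}_2 \leq C_2$ constraint is respected for all large $n$; the support restriction is a technicality inherited from the generic framework and does not affect the efficiency conclusion, which uses only the $\ell_2$ geometry.
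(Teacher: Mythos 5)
Your proposal is correct and takes essentially the same route as the paper's own (one-line) proof: reduce the ball containment to the column-norm bound $\norm{[I_{\bs}^{-1}]_{\cdot j}}_2 \leq c$ and deduce it from the eigenvalue conditions in Assumption \ref{ass:eigen}, with your argument simply making the omitted spectral steps explicit. In doing so you also correctly identify that the lower bound on $\lambda_{\min}(I_{\bs})$ runs through $\lambda_{\min}(V^{-1}) = \lambda_{\max}(V)^{-1}$, i.e.\ the relevant quantity is $\lambda_{\max}(V) = O(1)$ rather than the $\lambda_{\min}^{-1}(V) = O(1)$ the paper cites, though Assumption \ref{ass:eigen} supplies both, so the conclusion is unaffected.
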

\end{document}